\documentclass[11pt]{article}
\usepackage[a4paper,margin=1in]{geometry}
\usepackage[T1]{fontenc}
\usepackage{lmodern,microtype}
\usepackage{amsmath,amssymb,amsthm,mathtools}
\usepackage{aliascnt,booktabs,enumitem}
\usepackage{xcolor,tikz}
\usetikzlibrary{arrows.meta,positioning,calc,fit,backgrounds}
\usepackage[colorlinks=true,linkcolor=blue!55!black,citecolor=blue!55!black,urlcolor=blue!55!black]{hyperref}
\newtheorem{theorem}{Theorem}[section]
\newaliascnt{lemma}{theorem}
\newtheorem{lemma}[lemma]{Lemma}\aliascntresetthe{lemma}
\newaliascnt{proposition}{theorem}
\aliascntresetthe{proposition}
\theoremstyle{definition}
\newaliascnt{definition}{theorem}
\aliascntresetthe{definition}
\newtheorem{ruleenv}{Reduction Rule}
\usepackage[nameinlink,noabbrev]{cleveref}
\crefname{lemma}{lemma}{lemmas}\Crefname{lemma}{Lemma}{Lemmas}
\crefname{proposition}{proposition}{propositions}\Crefname{proposition}{Proposition}{Propositions}
\crefname{definition}{definition}{definitions}\Crefname{definition}{Definition}{Definitions}
\crefname{ruleenv}{reduction rule}{reduction rules}\Crefname{ruleenv}{Reduction Rule}{Reduction Rules}
\newcommand{\PDFVS}{\textnormal{\textsc{Planar Directed Feedback Vertex Set}}}
\newcommand{\GSCA}{\textnormal{\textsc{Grouped Strong Connectivity Augmentation}}}
\newcommand{\problemstatement}[4]{%
  \par\medskip
  \begingroup
  \setlength{\fboxsep}{8pt}%
  \noindent\fbox{%
    \begin{minipage}{\dimexpr\linewidth-2\fboxsep-2\fboxrule\relax}
      \textbf{#1}\par\smallskip
      \textbf{Input.} #2\par\smallskip
      \textbf{Parameter.} #3\par\smallskip
      \textbf{Question.} #4
    \end{minipage}%
  }%
  \par
  \endgroup
  \medskip
}
\newcommand{\proofstep}[1]{\par\smallskip\noindent\textbf{#1}\enspace}
\newcommand{\reach}{\leadsto}
\DeclareMathOperator{\comp}{cc}
\setlist{itemsep=2pt,topsep=4pt}
\tikzset{v/.style={circle,draw,fill=white,inner sep=2pt,minimum size=5mm},
  keep/.style={v,fill=blue!12,draw=blue!65!black},
  arr/.style={-{Stealth[length=2mm]},thick},
  back/.style={arr,red!70!black,dashed},
  box/.style={draw,rounded corners,fill=blue!4,align=center,inner sep=7pt},
  every picture/.style={font=\small}}
\title{A Polynomial Kernel for Planar Directed Feedback Vertex Set}

\author{
Zimo Sheng, Mingyu Xiao\textsuperscript{}\\[0.5em]
\href{mailto:shengzimo2016@gmail.com}
{\texttt{shengzimo2016@gmail.com}}
\qquad
\href{mailto:myxiao@gmail.com}
{\texttt{myxiao@gmail.com}}\\[0.25em]
}

\date{}

\begin{document}
\maketitle

\begin{abstract}
The \textsc{Directed Feedback Vertex Set} problem (DFVS) asks
whether a digraph can be made acyclic by deleting at most $k$
vertices. Whether DFVS admits a polynomial kernel parameterized
by $k$ is a major open problem in kernelization, even for planar digraphs.
We resolve the planar case by giving a deterministic kernel
with $O(k^{66}\log^2 k)$ vertices and arcs.
Our algorithm proceeds in three stages. First, we apply
structural reduction rules to the input digraph, bounding
the number of directed faces and some special vertices. Second, we pass to the planar dual,
where vertex deletion corresponds to adding groups of
reverse arcs to make each weakly connected component
strongly connected. The structural bounds in the first stage yield a small
retained vertex set in the dual. We then compress the
dual instance by identifying vertices with the same
distance records from this retained vertex set. The main
technical contribution is a directed-cut argument
showing that this identification preserves feasibility.
Finally, we transform the polynomial-size
dual instance back into an instance of \PDFVS{}
via a $3$-CNF encoding and a planar graph construction.
\end{abstract}

\section{Introduction}
\label{sec:intro}

A directed feedback vertex set is a set of vertices whose removal
eliminates all directed cycles in a digraph. The corresponding
parameterized decision problem is defined as follows.

\problemstatement
  {\textnormal{\textsc{Directed Feedback Vertex Set}} (DFVS)}
  {A finite digraph $D$ and a nonnegative integer $k$.}
  {$k$.}
  {Does there exist a set $X\subseteq V(D)$ with $|X|\leq k$
   such that $D-X$ has no directed cycle?}

Restricting $D$ to planar digraphs gives \PDFVS{} (PDFVS).
DFVS is fixed-parameter tractable with respect to $k$, as
established by Chen et al.~\cite{ChenEtAl2008}.
The question of polynomial kernelization asks for a different
kind of algorithm: can we reduce an instance, in polynomial
time, to an equivalent instance of the same problem whose size
is bounded by a polynomial in $k$~\cite{FominEtAl2019}?
Fixed-parameter tractability alone does not guarantee such a
polynomial bound.

Whether DFVS admits a polynomial kernel parameterized by $k$
is a longstanding open problem in
kernelization~\cite{LokshtanovMisraSaurabh2012,Dagstuhl12241,KratschWahlstrom2020,FPTSchool2014,Xiao2014,BodlaenderEtAlMeta2016,AgrawalEtAl2018,MnichVanLeeuwen2017,LokshtanovEtAl2025,BergougnouxEtAl2021,GrossmannEtAl2022,CrespelleEtAl2023,DirksEtAl2025}.
It is the first open problem listed in the monograph
\emph{Kernelization}~\cite{FominEtAl2019} and the first
kernelization question in Saurabh's talk at the 2017 school
\emph{Recent Advances in Parameterized Complexity}~\cite{Saurabh2017Future}.
The question has remained unresolved even for planar digraphs.

In this paper, we resolve the planar case by giving a
deterministic polynomial kernel. Both the size of the output
graph and its deletion budget are bounded by
$O(k^{66}\log^2 k)$.

\begin{theorem}\label{thm:main}
\PDFVS{} admits a deterministic kernel with
$O(k^{66}\log^2 k)$ vertices and arcs and output parameter
$k'=O(k^{66}\log^2 k)$.
\end{theorem}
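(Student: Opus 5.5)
The plan follows the three-stage outline of the abstract: a primal reduction, a dual compression, and a re-encoding back into \PDFVS{}.

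\proofstep{Stage 1: primal reduction.} We first reduce the input $(D,k)$ in the primal. We fix a plane embedding of $D$ and contract or remove the trivially irrelevant parts: vertices with in- or out-degree zero, and parallel and degree-two paths, which are shortcut by the standard rules. Then we bound the number of \emph{directed faces}, meaning faces whose boundary is a directed cycle. Each such cycle must be hit, so if $D$ has a solution $X$ of size at most $k$, then every directed face contains a vertex of $X$. Since the faces incident to a single vertex $v$ can be numerous, we add a rule that detects a vertex lying on more than $k+1$ otherwise vertex-disjoint cycles. Such a vertex is forced into the solution; we delete it and decrease $k$. After these rules, a planarity and Euler-formula counting argument should give $\mathrm{poly}(k)$ directed faces and $\mathrm{poly}(k)$ ``special'' vertices (high-degree or branching vertices on the faces).

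\proofstep{Stage 2: dual compression.} We pass to the planar dual $D^*$. Directed cycles of $D$ correspond to directed cuts of $D^*$. Deleting a vertex $v$ of $D$ corresponds to adding the reverse arcs of all dual arcs around $v$, and this group must make each weakly connected component strongly connected. This yields the \GSCA{} formulation with $k$ groups. The Stage~1 bounds give a retained set $R\subseteq V(D^*)$ of size $\mathrm{poly}(k)$: the dual vertices of directed faces, together with the faces around special vertices.

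For every non-retained vertex we record the vector of directed distances to and from $R$, truncated at a polynomial threshold. We then identify all vertices that have the same record. The key lemma states that this identification preserves the existence of a feasible set of $k$ groups. To prove it, we take a minimal directed cut $(S,\bar S)$ left uncovered in one instance. We show that the side containing the identified vertices can be chosen so that it respects distance records, namely as a union of distance-level sets from $R$. Such a cut then lifts to the other instance, which gives both directions.

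\proofstep{Stage 3: re-encoding.} Finally we transform the compressed \GSCA{} instance, whose size is $\mathrm{poly}(k)$, back into \PDFVS{}. Feasibility is a polynomial-size Boolean condition on group choices: for each of the polynomially many candidate cuts, some chosen group must cover it. We write this condition as a $3$-CNF formula. We then realize the formula as a planar digraph, using a crossing-free gadget construction in which variables and clause cycles are arranged on a grid with planar crossover gadgets, and we set $k'$ to be the number of forced gadget vertices plus $k$. Counting the records in Stage~2 gives $|R|^{O(1)}$ classes. Composing the polynomial blow-ups of the three stages gives the stated $O(k^{66}\log^2 k)$ bound, where the $\log$ factors come from the distance-threshold encoding.

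The main obstacle is the directed-cut argument in Stage~2. We must show that identifying vertices with equal distance records from $R$ neither creates nor destroys an uncovered directed cut, even though reverse arcs are added in whole groups. I would first try an uncrossing argument on minimal uncovered cuts, using planarity of $D^*$ so that these cuts correspond to simple primal cycles passing through $R$.
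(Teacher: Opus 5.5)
Your three-stage outline matches the paper's. However, the step you flag as the main obstacle has a genuine gap, and a single-cut argument cannot close it.

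In the hard direction, a selected quotient group $\pi(v)\to\pi(u)$ is the image of many original groups joining different members of the two classes, and some original reverse arcs become loops or base arcs and vanish. So the quotient selection does not determine an original one. What must be shown is that, after adding the selected special groups, \emph{some} set of at most $t$ available reverse arcs makes the component strongly connected. If that fails, each choice fails on its own cut, and an uncovered cut for one particular lift need not produce any uncovered cut in the quotient. You need one certificate that defeats all choices at once. The paper obtains it as follows: it takes $t+1$ directed cuts from the Lucchesi--Younger theorem, so that each available reverse arc enters at most one source shore. It uncrosses them into a directed tree with at most $k+1$ edges. It then moves all vertices simultaneously to their earliest allowed positions. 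After the move, every class lies on one side of every cut, retained vertices anchor both sides, and each reverse arc still crosses at most one tree edge backwards.

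This counting depends on two things your plan lacks. First, ordinary groups must be single arcs: a vertex with $\sigma(v)=2$ has a dual face bounded by two directed paths with common ends. Special groups must have at most $\sigma(v)$ arcs, and their endpoints are retained and never move. If whole face boundaries are groups, one group can meet many of the $t+1$ cuts. Second, the records must use distances in which base arcs cost $0$ and ordinary reverse arcs cost $1$, truncated at $k+1$. That count of ordinary reverse arcs is exactly what bounds how many tree edges the position of $v$ can lie backwards from that of a retained vertex. Hop distances to and from $R$ in $D^*$ carry no such information. Also, retaining all faces around special vertices is not polynomially bounded, because degrees are unbounded.

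Stage 3 also fails as written, because the number of directed cuts is not polynomial. If $n$ sources all point to one sink, every nonempty set of sources is a source shore, so one clause per cut gives an exponential formula. You need a polynomial certificate of strong connectivity. One option is $O(\log N)$-bit labels $x,y$ that are zero at a root, with every nonroot vertex having a present outgoing arc to a smaller $x$-label and a present incoming arc from a smaller $y$-label, together with $k$ binary group identifiers. These binary encodings, not a distance threshold, produce the logarithmic factor. They give an $O(k^{33}\log k)$-size formula, which Lichtenstein's quadratic planarization turns into $O(k^{66}\log^2k)$.

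Stage 1 is incomplete as well: with the rules you list, the flower rule does not bound the directed faces at a vertex. Join $v$ and $x$ by many copies of a gadget such as $v\to a$, $a\to c$, $a\to e$, $c\to e$, $c\to x$, $e\to x$. Alternate these with mirrored copies from $x$ to $v$ around the two vertices. All gadget vertices have degree $3$ and every cycle passes through both $v$ and $x$, so there is no flower with two petals. Yet $v$ and $x$ share unboundedly many directed faces.

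The paper closes this with three ingredients:
\begin{itemize}
\item An Euler-formula bound: the number of faces containing both $x$ and $y$ is at most $\comp(G-\{x,y\})+m_{xy}$, where $m_{xy}\le2$ counts the arcs joining them.
\item A rule forcing $x$ and $y$ into the solution when $D-\{x,y\}$ has more than $k$ weak components. This rule relies on every vertex having two distinct in- and out-neighbours.
\item Special vertices defined by $\sigma(v)>2$. Their number and total alternation are $O(k^3)$ by a source-angle count.
\end{itemize}
These bounds yield $|W|=O(k^3)$, then $O(k^{16})$ records, and finally the exponent $66$, which your plan never derives.
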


Our construction combines structural reductions in the original
graph, compression of an equivalent augmentation problem in
the planar dual, and a return to planar vertex deletion.
The main technical step shows that vertices with the same
distance information can be merged in the dual instance
without changing the answer. We first place this result in
the context of earlier work and then describe the construction.

\subsection{Related Work}

\paragraph{Exact and parameterized algorithms.}
Razgon gave an exact algorithm for DFVS running in
$O^*(1.9977^n)$ time, where $n$ is the number of vertices,
improving on the $O^*(2^n)$ bound obtained by enumerating
vertex subsets~\cite{RazgonDFVS2007}.
Here $O^*$ suppresses polynomial factors.
Chen et al. established fixed-parameter tractability
with a $4^k k!n^{O(1)}$-time algorithm~\cite{ChenEtAl2008},
resolving a longstanding open question. Xiong and Xiao later
simplified the algorithm and improved its running time to
$O(2^{o(k)}k!(n+m))$, where $m$ is the number of
arcs~\cite{XiongXiao2025}.

For the undirected version, \textsc{Feedback Vertex Set} (FVS),
both exact and parameterized algorithms have undergone a
series of improvements. Deterministic exact algorithms
progressed from $O^*(1.8899^n)$~\cite{RazgonFVS2006} to
$O^*(1.7548^n)$~\cite{FominEtAlFVS2008} and
$O^*(1.7117^n)$~\cite{GaspersLee2017}. Randomized algorithms
achieved $O^*(1.6667^n)$ via monotone local
search~\cite{FominEtAlMLS2019} and subsequently
$O^*(1.6297^n)$~\cite{LiNederlof2022}.
For deterministic parameterized algorithms, the running-time
bounds improved from $O^*(5^k)$~\cite{ChenFVS2008} to
$O^*(3.83^k)$~\cite{CaoChenLiu2015},
$O^*(3.619^k)$~\cite{KociumakaPilipczuk2014}, and
$O^*(3.460^k)$~\cite{IwataKobayashi2021}.
Randomized bounds improved from $O^*(4^k)$~\cite{BeckerEtAl2000}
to $O^*(3^k)$ using Cut \& Count~\cite{CyganEtAl2022}, and
then to $O^*(2.7^k)$~\cite{LiNederlof2022}.

\paragraph{Kernels for undirected feedback vertex set.}
For undirected FVS, polynomial kernelization has developed
from an existence result into a sequence of increasingly
small kernels. Burrage et al.~\cite{BurrageEtAl2006} gave a
kernel with $O(k^{11})$ vertices, and Bodlaender and van
Dijk~\cite{BodlaenderVanDijk2010} reduced this bound to
$O(k^3)$. Thomass\'e~\cite{Thomasse2010} brought the bound
down to $4k^2$, followed by Iwata's kernel with
$2k^2+k$ vertices~\cite{Iwata2017}.

Planarity allows an even stronger reduction in size: undirected
planar FVS admits a linear kernel. Starting with the
$112k$-vertex kernel of Bodlaender and
Penninkx~\cite{BodlaenderPenninkx2008}, subsequent work reduced
the bound to fewer than $97k$ vertices~\cite{AbuKhzamBouKhuzam2012},
then $29k$~\cite{Xiao2014}, and finally $13k$ via region
decomposition~\cite{BonamyKowalik2016}.
This progress contrasts with the directed setting, where
even the existence of a polynomial kernel for planar graphs
has remained unresolved. Our result establishes this
existence, leaving substantial room to improve the size bound.

\paragraph{Planarity and partial results for DFVS.}
For the arc-deletion version on planar digraphs, a minimum
set of arcs whose deletion makes the graph acyclic can be
found in polynomial time by planar cycle--cut duality and the
Lucchesi--Younger theorem~\cite{LucchesiYounger1978,Erickson2020}.
The corresponding vertex-deletion
problem remains NP-hard on planar
digraphs~\cite{BermanYaroslavtsev2012}.

Previous kernelization results for DFVS use other parameters
or impose additional restrictions on the input.
Bergougnoux et al. obtained polynomial kernels parameterized
by the size of an undirected feedback vertex set, including
a linear bound on graphs of fixed genus~\cite{BergougnouxEtAl2021}.
This parameter can be much larger than the directed deletion
parameter $k$: an acyclic digraph may have arbitrarily many
vertex-disjoint cycles in its underlying undirected graph.
Dirks et al. gave polynomial kernels for digraphs with bounded
induced directed cycle length~\cite{DirksEtAl2025}, whereas
planar digraphs can contain arbitrarily long induced directed
cycles. These results therefore do not give a polynomial
kernel for planar DFVS parameterized by $k$.

\paragraph{Connections to directed cut problems.}
DFVS also has an algorithmic connection to directed separation
problems. The algorithm of Chen et al.~\cite{ChenEtAl2008}
reduces a compression subproblem to \textsc{Skew Separator},
which asks for a small vertex set destroying source-to-sink
paths prescribed by an ordering of the terminals.
In the vertex-deletion formulation, \textsc{Directed Multiway
Cut} asks for a small set of nonterminal vertices whose deletion
destroys all directed paths between distinct terminals.
\textsc{Directed Multicut} instead specifies ordered terminal
pairs and requires destroying every path from the source to
the sink of each pair.

The kernelization landscape for these problems differs from
that of DFVS. Parameterized by the cutset size,
\textsc{Directed Multiway Cut} is fixed-parameter
tractable~\cite{ChitnisHajiaghayiMarx2013}, but admits no
polynomial kernel or polynomial compression even with two
terminals, unless $\mathrm{NP}\subseteq\mathrm{coNP}/\mathrm{poly}$;
this holds for both vertex and edge
deletion~\cite{CyganEtAl2014Incompressibility}.
\textsc{Directed Multicut} is $\mathrm{W}[1]$-hard even with
four terminal pairs~\cite{PilipczukWahlstrom2018}, ruling out
a kernel parameterized by the cutset size unless
$\mathrm{FPT}=\mathrm{W}[1]$. These are results for general
digraphs and do not by themselves settle the kernelization
complexity of the planar restrictions.

\subsection{Technical Overview}

Our proof proceeds in three stages. We first simplify the
original digraph by structural reductions. We then pass to
the planar dual and compress an equivalent arc-augmentation
instance. Finally, we transform the compressed instance back
into a planar vertex-deletion instance. The middle stage is
described in two steps below: the dual transformation and
the compression that makes its output small.

In the dual, deleting a vertex is represented by adding a
group of reverse arcs. Selecting groups should make each
weakly connected component strongly connected, with the
total number of selected groups bounded by the deletion
budget. This leads to the following intermediate problem.

\problemstatement
  {\textnormal{\textsc{Grouped Strong Connectivity Augmentation}} (GSCA)}
  {A finite digraph $H$, a family $\mathcal G$ of arc groups,
   and a nonnegative integer $k$.
   Each group is a set of arcs absent from $H$, with all
   endpoints in the same weakly connected component of $H$.
   Different groups may share arcs.}
  {$k$.}
  {Can we select at most $k$ groups from $\mathcal G$ so that
   adding all their arcs to $H$ makes every weakly connected
   component strongly connected?}

Each selected group has unit cost, regardless of how many arcs
it contains. When measuring the representation size, a
\emph{group-arc entry} is one occurrence of an arc in a group;
an arc shared by two groups contributes two entries. Groups
are indexed, so different groups may contain the same arc set.

Write $\mathcal I_0=(D,k)$ for the input instance, with
$D=(V,A)$, and $\mathcal I_1=(D_1,k_1)$ for the instance
after the structural reductions, where $k_1\leq k$.

Figure~\ref{fig:overview} summarizes the construction.
Reading the upper row from left to right, we simplify the
input instance $\mathcal I_0$ to obtain $\mathcal I_1$,
transform it into the dual augmentation instance
$\mathcal I_2$, and compress it to obtain $\mathcal I_3$.
The lower row, read from right to left, shows the return
to planar vertex deletion: we encode $\mathcal I_3$ as a
$3$-CNF formula $\mathcal I_4$, planarize it to obtain
$\mathcal I_5$, and construct the output PDFVS instance
$\mathcal I_6$. Each box records the size bounds established
at that stage. We explain these steps below.

\begin{figure}[!t]
\centering
\begin{tikzpicture}[
  instance/.style={draw=black!45,rounded corners=2pt,fill=black!2,
    minimum width=3cm,minimum height=5cm},
  returninstance/.style={instance,minimum width=3.8cm,minimum height=3.2cm},
  heading/.style={anchor=north,align=center,font=\small\bfseries},
  detail/.style={anchor=north,align=center,font=\footnotesize},
  flow/.style={-{Stealth[length=2mm]},draw=black!65,line width=.8pt},
  action/.style={font=\scriptsize,align=center,fill=white,inner sep=2pt}
]
\node[instance] (input) at (0,0) {};
\node[instance] (reduced) at (4,0) {};
\node[instance] (dual) at (8,0) {};
\node[instance] (merged) at (12,0) {};
\node[returninstance] (cnf) at (12,-4.85) {};
\node[returninstance] (planar) at (6,-4.85) {};
\node[returninstance,fill=blue!4] (output) at (0,-4.85) {};

\node[heading] at ([yshift=-7pt]input.north)
  {Input PDFVS\\instance};
\node[heading] at ([yshift=-7pt]reduced.north)
  {Reduced\\PDFVS instance};
\node[heading] at ([yshift=-7pt]dual.north)
  {GSCA\\instance};
\node[heading] at ([yshift=-7pt]merged.north)
  {Compressed\\GSCA instance};
\node[heading] at ([yshift=-7pt]cnf.north)
  {$3$-CNF\\formula};
\node[heading] at ([yshift=-7pt]planar.north)
  {Planar $3$-CNF\\formula};
\node[heading] at ([yshift=-7pt]output.north)
  {Output PDFVS\\instance};

\node[detail] at ([yshift=-39pt]input.north) {
  \(\mathcal I_0=(D,k)\)\\[5pt]
  \(D=(V,A)\)\\[12pt]
  \(|V|\) vertices\\[5pt]
  \(|A|\) arcs
};
\node[detail] at ([yshift=-39pt]reduced.north) {
  \(\mathcal I_1=(D_1,k_1)\)\\[5pt]
  \(k_1\leq k\)\\[6pt]
  Directed faces\\\(O(k^3)\)\\[5pt]
  Special vertices \\ \(O(k^3)\)
};
\node[detail] at ([yshift=-39pt]dual.north) {
  \(\mathcal I_2=(H,\mathcal G,k_1)\)\\[9pt]
  Retained vertices\\\(|W|=O(k^3)\)\\[7pt]
  Special-group\\arc entries\\\(O(k^3)\)
};
\node[detail] at ([yshift=-39pt]merged.north) {
  \(\mathcal I_3=(\overline H,\overline{\mathcal G},k_1)\)\\[9pt]
  All vertices\\\(|V(\overline H)|=O(k^{16})\)\\[7pt]
  Arcs and group\\entries \(O(k^{32})\)
};
\node[detail] at ([yshift=-39pt]cnf.north) {
  \(\mathcal I_4=\Phi\)\\[7pt]
  \(O(k^{33}\log k)\)\\literal occurrences
};
\node[detail] at ([yshift=-39pt]planar.north) {
  \(\mathcal I_5=\Phi_{\mathrm p}\)\\[7pt]
  \(O(k^{66}\log^2 k)\)\\literal occurrences
};
\node[detail] at ([yshift=-39pt]output.north) {
  \(\mathcal I_6=(D',k')\)\\[7pt]
  \(|V(D')|+|A(D')|+k'\)\\
  \(=O(k^{66}\log^2 k)\)
};

\draw[flow] (input.east)--node[action,above]{reduce}(reduced.west);
\draw[flow] (reduced.east)--node[action,above]{duality}(dual.west);
\draw[flow] (dual.east)--node[action,above]{merge}(merged.west);
\draw[flow] (merged.south)--node[action,left]{Boolean\\encoding}(cnf.north);
\draw[flow] (cnf.west)--node[action,above]{planarization}(planar.east);
\draw[flow] (planar.west)--node[action,above]{directed\\triangles}(output.east);
\end{tikzpicture}
\caption{The sequence $\mathcal I_0,\ldots,\mathcal I_6$.
Read the upper row from left to right, then the lower row
from right to left. Each box gives the instance and the
quantities bounded at that stage. The last three arrows
convert the compressed GSCA instance back to PDFVS.
All bounds use the original input parameter $k$.}
\label{fig:overview}
\end{figure}

\paragraph{1. Reducing the original graph.}
The transition from $\mathcal I_0$ to $\mathcal I_1$ bounds
the structural quantities that will control the dual instance.
A face is \emph{directed} if its entire boundary can be
traversed following the arc directions. Let $F$ be the total
number of directed faces, counted separately in the embeddings
of the strongly connected components. Every solution must
delete a vertex on the boundary of each directed face.
If many such faces meet at a vertex, avoiding that vertex
requires other vertices to meet all those boundaries.
Our reduction rules bound how many of these faces any other
vertex can meet, yielding
\[
F\leq k_1^2(k_1+2)=O(k^3).
\]

Let $\sigma(v)$ count the changes between incoming and outgoing
arcs in the cyclic order around $v$. A vertex with $\sigma(v)=2$
is \emph{ordinary}: its incoming arcs are consecutive, as are
its outgoing arcs. A vertex is \emph{special} if $\sigma(v)>2$.
After the elementary reductions, every remaining vertex has
both incoming and outgoing arcs, so these two cases cover
all vertices. Counting angles and applying Euler's formula
relates the alternation counts to the directed faces and gives
\[
\bigl|\{v:\sigma(v)>2\}\bigr|=O(k^3),
\qquad
\sum_{v:\sigma(v)>2}\sigma(v)=O(k^3).
\]
The face bound will control the sources and sinks in the dual;
the alternation bound will control the number of arcs needed
to represent deletions of special vertices.

\paragraph{2. Turning vertex deletion into arc addition.}
For each strongly connected component of $D_1$, take its
planar dual. Their disjoint union forms the base digraph $H$
of a GSCA instance $\mathcal I_2=(H,\mathcal G,k_1)$.
A vertex of $D_1$ corresponds to a face in its component's
dual. Instead of deleting that vertex, we add the reversals
of all arcs along the corresponding dual face boundary.
These reversals form one group. By planar cycle--cut
duality~\cite{Erickson2020}, a set of vertex deletions makes
$D_1$ acyclic exactly when the corresponding groups make
every component of $H$ strongly connected. Moreover, strong
connectivity of each primal component makes its dual acyclic.

For an ordinary vertex, the corresponding dual face boundary
consists of two directed paths with the same start and end.
Adding a single reverse arc from the end to the start has
the same reachability effect as adding the original group:
all vertices on the two paths become mutually reachable.
We replace the group by this one arc and call it an
\emph{ordinary group}. For a special vertex $v$, the same
principle reduces its group to at most $\sigma(v)$ reverse
arcs; this is a \emph{special group}. The alternation bound
therefore limits the total number of special-group arc
entries to $O(k^3)$.

Let $W$ consist of all sources and sinks of $H$, together
with the endpoints of all reverse arcs in special groups.
These are the vertices we retain during compression.
Sources and sinks in each dual correspond to directed faces
of the original component. Together with the bound on
special-group arcs, this gives
\[
|W|=O(k^3),
\qquad
\sum_{\substack{B\in\mathcal G\\B\text{ special}}}|B|=O(k^3).
\]
At this point, the vertices outside $W$ and the ordinary
groups can still be arbitrarily numerous. The next step
bounds the size of the entire augmentation instance.

\paragraph{3. Merging vertices with the same distance record.}
We construct $\mathcal I_3=(\overline H,\overline{\mathcal G},k_1)$ by merging
vertices separately within each weakly connected component
$C$ of $H$. Put $W_C=W\cap V(C)$.
For $a\in W_C$ and $v\in V(C)$, let $d(a,v)$ be the minimum
number of ordinary reverse arcs used by a path from $a$ to
$v$, when base arcs and all ordinary reverse arcs are available.
Base arcs have cost zero, and $d(a,v)=\infty$ if no such
path exists. The distance record of $v$ is
\[
\bigl(\min\{d(a,v),k_1+1\}\bigr)_{a\in W_C}.
\]
Thus the record distinguishes distances up to $k_1$ and
uses the value $k_1+1$ for all larger distances, including
infinity. We keep each retained vertex separately and merge
vertices outside $W_C$ that have the same record.

Distance and neighborhood descriptions have also been useful
in domination kernels on sparse
graphs~\cite{DrangeEtAl2016,EickmeyerEtAl2017}.
Here we apply the directed-ball theorem of Le and
Wulff-Nilsen~\cite{LeWulffNilsen2024} to bound the number of
records in a component by
\[
O\bigl((|W_C|(k_1+1)+1)^4\bigr).
\]
Summing over the components and using $|W|=O(k^3)$ yields
$O(k^{16})$ vertices in total after merging.
After duplicates are removed, base arcs and ordinary groups
are specified by ordered pairs of these vertices and hence
contribute $O(k^{32})$ entries. Special groups retain their
total $O(k^3)$ arc entries. The resulting instance
$\mathcal I_3$ therefore has $O(k^{16})$ vertices and
$O(k^{32})$ arcs and group-arc entries in total.

The main difficulty is proving that merging preserves the
answer, since it may create new paths. Any solution before
merging remains a solution afterwards. For the converse,
consider one component in which a solution after merging
selects $t$ ordinary groups in addition to its special groups.
Add the same special groups before merging. We show that
at most $t$ additional available reverse arcs suffice to
make the original component strongly connected. Each such
arc can be supplied by one containing group, giving a
solution with no more groups than the selected solution
after merging.

Suppose, for a contradiction, that more than $t$
available reverse arcs were necessary. To use the standard
reverse-arc augmentation model, we first add auxiliary forward
arcs that preserve reachability, so that every available reverse arc
is the reverse of an existing arc. We can then apply the Lucchesi--Younger
theorem~\cite{LucchesiYounger1978} to obtain $t+1$
directed-cut requirements: each comes from a nonempty proper vertex
set with no entering arc, and each available reverse arc enters at
most one of these sets.
We represent the cuts by tree edges and assign vertices
to positions in the tree. The distance records allow us
to adjust these positions so that each class to be merged
lies entirely inside or outside each set. Retained vertices
keep both sides of every cut nonempty, while each reverse
arc still enters at most one set. The $t$ selected ordinary
groups, each consisting of one arc, cannot meet all $t+1$
requirements. This contradicts feasibility after merging.
Applying the argument to each component proves that
$\mathcal I_2$ and $\mathcal I_3$ are equivalent.

\paragraph{4. Returning to vertex deletion.}
We complete the construction through
$\mathcal I_3\to\mathcal I_4\to\mathcal I_5\to\mathcal I_6$.
First, we encode the choice of at most $k_1$ groups and the
resulting strong connectivity in each component as a
$3$-CNF formula $\mathcal I_4=\Phi$.
There are $O(k^{32})$ available groups, and the group-selection
and connectivity constraints can be encoded using
$O(k^{33}\log k)$ literal occurrences.

Lichtenstein's planarization~\cite{Lichtenstein1982} increases
this number quadratically, producing a planar formula
$\mathcal I_5=\Phi_{\mathrm p}$ with $O(k^{66}\log^2 k)$
literal occurrences. A construction using directed triangles
then transforms $\mathcal I_5$ into a PDFVS instance
$\mathcal I_6=(D',k')$ with only a linear increase in size.
Its vertex count, arc count, and output parameter are all
$O(k^{66}\log^2 k)$, as stated in \Cref{thm:main}.
All output vertices are unweighted and can be deleted.

Section~\ref{sec:prelim} introduces the notation and tools
used in the proof. Sections~\ref{sec:primal}, \ref{sec:dual},
and~\ref{sec:compression} give the structural reductions,
the dual transformation, and the compression argument,
respectively. Section~\ref{sec:return} gives the return
construction and completes the proof of \Cref{thm:main}.

\section{Preliminaries}
\label{sec:prelim}
We fix the graph notation and recall the facts about planar
embeddings and directed cuts used throughout the paper.
Tools specific to distance records and Boolean formulas are
introduced in Sections~\ref{sec:distance-rule} and~\ref{sec:return},
respectively, where they are used.

\subsection{Digraphs and Connectivity}
All logarithms are to base two. All graphs are finite.
For a digraph $D$, we write $V(D)$
and $A(D)$ for its vertex and arc sets. An arc $u\to v$
leaves $u$ and enters $v$; it is a loop if $u=v$.
Unless stated otherwise, we allow loops, repeated arcs with
the same direction, and opposite arcs $u\to v$ and $v\to u$.
An incoming neighbor of $v$ is a vertex $u$ with an arc
$u\to v$; outgoing neighbors are defined symmetrically.
The outdegree $d^+(v)$ counts outgoing arcs with multiplicity,
so it need not equal the number of distinct outgoing neighbors.

For $X\subseteq V(D)$, the induced subgraph $D[X]$ contains
the vertices of $X$ and all arcs with both endpoints in $X$.
We write $D-X=D[V(D)\setminus X]$, and $D-v$ when $X=\{v\}$.
For a set $B$ of additional arcs with endpoints in $V(D)$,
$D+B$ denotes the digraph obtained by adding these arcs.
The reverse of $u\to v$ is $v\to u$; adding a reverse arc
retains the original arc.

A directed walk follows the arc directions and may repeat
vertices and arcs. A directed path is a directed walk with no
repeated vertex. We write $u\reach v$ if $v$ is reachable from $u$,
allowing a path of length zero. A directed cycle is a nonempty
closed directed walk with no repeated vertex other than its
first and last vertex. In particular, a loop and a pair of
opposite arcs are cycles of lengths one and two, respectively.
A digraph is acyclic if it contains no directed cycle.

The underlying undirected graph $G$ of $D$ is obtained by
forgetting arc directions and retaining one edge for every arc.
Distinct arcs therefore remain distinct edges; in particular,
opposite arcs give parallel edges even if $D$ has no repeated
arcs with the same direction. We write $\comp(G)$ for the
number of connected components of $G$, counting each isolated
vertex as one component.

A digraph is \emph{weakly connected} if its underlying
undirected graph is connected, and \emph{strongly connected}
if every vertex can reach every other vertex. Its weakly
connected components are the connected components of its
underlying graph; its strongly connected components are the
maximal sets of mutually reachable vertices. When a component
is used as a set, we mean its vertex set. Every strongly
connected component lies in a unique weakly connected component,
but a weakly connected component may contain several strongly
connected components. Deleting vertices commutes with forgetting
arc directions, so $\comp(G-X)$ counts the weakly connected
components of $D-X$.

The \emph{condensation} of a digraph replaces each strongly
connected component by one vertex and retains the arcs between
distinct components. It is acyclic. A source has no incoming
arc, and a sink has no outgoing arc. In a finite acyclic digraph,
every vertex is reachable from a source and can reach a sink.
Every directed cycle lies in a single strongly connected component.

A tree is a connected undirected graph without a cycle.
It has a unique path between any two vertices, and deleting
an edge separates it into two components. A \emph{directed tree}
is obtained by orienting each edge of a tree; its edges need
not all point towards or away from a common root.

\subsection{Plane Embeddings and Duality}
A graph is \emph{planar} if it admits a crossing-free drawing
in the plane, and a \emph{plane graph} is a graph with a fixed
such embedding. A digraph is planar when its underlying
undirected multigraph is planar. We compute an embedding in
polynomial time~\cite{HopcroftTarjan1974} and keep it fixed
when discussing faces and the cyclic order of arcs around vertices.

The faces are the connected regions of the complement of the
drawing, including the unbounded face. In a connected plane
graph with at least one edge, each face boundary is a closed
walk and may repeat vertices or edges. Incidences along this
boundary are counted with multiplicity. Around a vertex,
consecutive incident edge ends delimit \emph{angles}, each
incident with a face. The degree counts edge ends, with a
loop contributing two. Euler's formula for a plane graph
with $n$ vertices, $m$ edges, $f$ faces, and $c$ connected
components is
\[
n-m+f=1+c.
\]

The \emph{directed dual} of a connected plane digraph has one
vertex for each face and one dual arc for each original arc.
The dual arc crosses the original arc once and is directed
from its right-hand face to its left-hand face, as viewed
along the original arc. Each original vertex corresponds to
a dual face whose boundary consists of the dual arcs associated
with its incident arcs~\cite{Erickson2020}.

\subsection{Directed Cuts}
\label{sec:cuts}
For a nonempty proper vertex set $U$ in a digraph $J$, the
arcs with exactly one endpoint in $U$ form a cut. It is a
\emph{directed cut} if it is nonempty and all its arcs point
in the same direction. The two sides are its \emph{shores}.
The side from which the arcs leave is its \emph{source shore};
no arc enters this shore. In a weakly connected digraph,
every nonempty proper vertex set with no entering arc is
the source shore of a directed cut.

A weakly connected digraph is strongly connected if and only
if it has no directed cut. Indeed, a source shore cannot be
reached from its complement. Conversely, if the digraph is
not strongly connected, a source component of its condensation
has no entering arc and, by weak connectivity, has a leaving
arc, giving a directed cut.

Planar cycle--cut duality relates directed cuts to directed
cycles: a directed cycle in a connected plane digraph
corresponds to a directed cut in its dual, and the original
arcs corresponding to a directed cut in the dual contain a
directed cycle~\cite{Erickson2020}. Geometrically, the cycle
separates the plane into an inside and an outside, and the
dual arcs crossing it all point from one side to the other.

An arc set \emph{meets} a directed cut if it contains an arc
of that cut. A family of $r$ pairwise arc-disjoint directed
cuts forces every arc set meeting all directed cuts to have
size at least $r$. The Lucchesi--Younger theorem states that
this lower bound is attained~\cite{LucchesiYounger1978}.

\begin{theorem}[Lucchesi--Younger]\label{thm:ly}
In a digraph, the minimum size of an arc set meeting every
directed cut equals the maximum number of pairwise arc-disjoint
directed cuts.
\end{theorem}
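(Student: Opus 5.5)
The plan is to prove \Cref{thm:ly} by the standard route: reduce it to total dual integrality of a cut linear program, and establish that integrality by uncrossing to a cross-free family. One inequality is immediate. If $r$ directed cuts are pairwise arc-disjoint, any arc set meeting all of them must contain a distinct arc from each. Hence the minimum is at least the maximum. For the reverse inequality, I would first reduce to a weakly connected digraph $J$. In a disconnected digraph the directed cuts are not simply those of the components, but an arc set meets every directed cut exactly when adding the reverses of its arcs makes every weakly connected component strongly connected. Both quantities therefore decompose as sums over components.

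Next I would write the covering linear program. Minimize $\sum_a x_a$ subject to $x(\delta(U))\ge 1$ for every source shore $U$ of a directed cut, with $x\ge 0$. Its dual maximizes $\sum_U y_U$ subject to $\sum_{U:a\in\delta(U)} y_U\le 1$ for every arc $a$, with $y\ge 0$. The goal is to show that both programs have integral optimal solutions.

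\textbf{Uncrossing the dual.} The key structural fact is closure under crossing. Suppose $U$ and $U'$ are source shores with $U\cap U'\neq\emptyset$ and $U\cup U'\neq V(J)$. Then $U\cap U'$ and $U\cup U'$ are again source shores, and every arc is counted by the pair $\{U\cap U',U\cup U'\}$ at most as often as by the pair $\{U,U'\}$. This holds because no arc enters $U$ or $U'$, and it is checked by inspecting each arc according to its endpoints. Among the optimal dual solutions, choose one maximizing $\sum_U y_U|U|\,|V(J)\setminus U|$. The standard potential argument then shows that its support is cross-free, meaning laminar up to complementation.

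\textbf{Integrality of the restricted problem.} A cross-free family can be represented by a directed tree whose edges correspond to the sets. Each arc $a=u\to v$ lies in $\delta(U)$ exactly when the tree edge of $U$ lies on the tree path between the positions of $u$ and $v$. Because no arc enters any source shore, every arc traverses its tree path against the edge orientations. The constraint matrix restricted to the support is therefore a network matrix, and network matrices are totally unimodular. The restricted dual consequently has an integral optimum of the same value. Since the uncrossed solution is optimal and feasible for this restricted problem, the restricted optimum is at least the LP optimum. Since the restricted problem is itself feasible for the full dual, the restricted optimum is at most the LP optimum. Hence the dual LP has an integral optimum, which is a family of pairwise arc-disjoint directed cuts.

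\textbf{Integrality of the primal.} This is the step I expect to be the main obstacle: showing the primal covering LP also has an integral optimum, i.e.\ a genuine arc set. I would argue via the Edmonds--Giles framework. The system above is a submodular-flow-type system over the crossing family of source shores, and the constraint system is totally dual integral, which follows by applying the uncrossing and total unimodularity argument for every integral objective $w$ rather than just $w\equiv1$. Since the right-hand sides are integral, the Edmonds--Giles theorem then makes the primal polyhedron integral. Its integral optimum is an arc set meeting every directed cut, with size equal to the dual optimum. Combined with the trivial inequality, this proves equality.

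The delicate points are the potential function in the uncrossing step and verifying the crossing-family closure for cuts that are only directed, not all cuts. If a self-contained argument is preferred, an alternative is Lovász's inductive proof, which contracts an arc that lies in no maximum packing. Since the statement is classical, in the paper I would simply cite \cite{LucchesiYounger1978}.
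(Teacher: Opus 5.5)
Your proposal is the standard polyhedral proof. You show that the directed-cut covering system is totally dual integral: uncross an optimal fractional packing, and observe that a cross-free support gives a network, hence totally unimodular, matrix. You then get an integral covering from the Edmonds--Giles theorem that a TDI system with integral right-hand side defines an integral polyhedron.

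The paper does not prove \Cref{thm:ly} at all. It cites \cite{LucchesiYounger1978}, uses the theorem as a black box, and applies it only to weakly connected digraphs in \Cref{lem:cuttree}. Your reduction to the weakly connected case is therefore unnecessary for the paper's purposes. Its packing half would also need one more remark: every directed cut of a disconnected digraph contains a directed cut of a single component.

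Your route shows where the min--max comes from, but it is not self-contained either, since the covering half rests on the Edmonds--Giles theorem. Lov\'asz's inductive argument is the option if you want to avoid polyhedral machinery. Your uncrossing and tree-representation steps are exactly the mechanism the paper uses in the proof of \Cref{lem:cuttree}. The difference is that the paper uncrosses an integral packing already supplied by the theorem, whereas you uncross a fractional optimum in order to derive it.

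One step must be corrected. Take a crossing pair $U,U'$, lower $y_U$ and $y_{U'}$ by $\varepsilon$, and raise $y_{U\cap U'}$ and $y_{U\cup U'}$ by $\varepsilon$. Since $|U\cap U'|+|U\cup U'|=|U|+|U'|$, the potential $\sum_U y_U|U|\,|V(J)\setminus U|$ changes by $-2\varepsilon\,|U\setminus U'|\,|U'\setminus U|<0$. An optimal solution that \emph{maximizes} this potential therefore need not have cross-free support, and the contradiction you want does not arise. Instead, choose an optimal solution that minimizes this potential. Alternatively, choose one that maximizes $\sum_U y_U|U|^2$, which increases by $2\varepsilon\,|U\setminus U'|\,|U'\setminus U|$ under each such step; this is the potential used in the proof of \Cref{lem:cuttree}. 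Note also that uncrossing preserves every arc count exactly, not just weakly, because no arc enters either shore. With the corrected potential, the rest of your argument goes through.
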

In Section~\ref{sec:cuttree-proof}, we apply this theorem to
certify how many reverse arcs are needed to make a weakly
connected digraph strongly connected.

\section{Reducing the Primal Graph}
\label{sec:primal}
This section develops the first stage of our algorithm:
simplifying the input digraph through structural reductions.

We first introduce the terminology used in the structural bounds.
Fix a planar embedding of the digraph. A \emph{directed face} is a face whose
boundary is a nonempty closed walk following the arc directions;
such a boundary contains a directed cycle.
For a vertex $v$, the \emph{alternation count} $\sigma(v)$
is the number of changes between incoming and outgoing arcs
as we traverse the cyclic order of arcs around $v$.
We call $v$ \emph{ordinary} if $\sigma(v)=2$, in which case
its incoming arcs occur consecutively, as do its outgoing arcs.
We call $v$ \emph{special} if $\sigma(v)>2$.

We introduce five reduction rules and use planar counting
arguments to show that, after exhaustive application of these
rules, the number of directed faces, the number of special
vertices, and the total alternation count over special vertices
are each $O(k^3)$. These bounds provide the structural
information needed to compress the dual graph in the second
stage of our algorithm.

Two instances are \emph{equivalent} if they are both
YES-instances or both NO-instances. During the reductions,
$(D,k)$ denotes the current instance; a forced vertex deletion
decreases $k$ by one. If we determine the answer directly,
we output a fixed equivalent instance. We measure the encoding
size using explicit vertex and arc lists and a binary parameter.

A reduction rule is \emph{safe} if it produces an equivalent
instance or directly returns the correct answer. We apply
the rules in the order presented, restarting from the first
rule after each modification, until no rule applies.

\subsection{Elementary Reductions}
We first remove loops, repeated arcs, and vertices and arcs
that do not participate in directed cycles. We then eliminate
vertices with only one incoming or outgoing neighbor. The
resulting degree conditions will be essential in the next
subsection.

\begin{ruleenv}\label{rule:clean}
\leavevmode
\begin{enumerate}[label=(\roman*),leftmargin=*]
\item Delete each vertex with a loop, setting $k\gets k-1$
      for each deletion.
\item Remove repeated arcs with the same direction, arcs between
      distinct strongly connected components, and vertices not
      on any directed cycle.
\item Return NO if $k<0$. Otherwise, return YES if the graph
      is acyclic or $k\geq |V(D)|$.
\item If $k=0$, return NO. If $k=1$, check whether there
      exists a vertex whose deletion makes the graph acyclic.
      Return YES if such a vertex exists, and NO otherwise.
\end{enumerate}
\end{ruleenv}
\begin{lemma}\label{lem:clean-safe}
\Cref{rule:clean} is safe and preserves planarity.
\end{lemma}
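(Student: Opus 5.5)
We check each part of \Cref{rule:clean} separately. For safety we show that every modification produces an equivalent instance, and that whenever an answer is returned directly it is correct. For planarity we use the fact that every modification only deletes vertices or arcs. Deleting from a planar digraph leaves it planar, since it restricts an embedding of the underlying multigraph. This settles planarity for all four parts at once.

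\textbf{Part (i).} A vertex $v$ with a loop lies on a directed cycle of length one. Every feasible solution must contain $v$. Hence $(D,k)$ is equivalent to $(D-v,k-1)$: a solution $X$ of the first instance gives $X\setminus\{v\}$ for the second, and conversely we add $v$ back.

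\textbf{Part (ii).} We argue that every directed cycle of $D-X$, for any $X$, survives the removals in $D$, and conversely.
\begin{itemize}
\item Repeated arcs in the same direction create no new cycles beyond those using a single copy. The paper counts opposite arcs as a $2$-cycle, but keeps them, since only same-direction duplicates are removed.
\item Every directed cycle lies inside one strongly connected component. Removing arcs between distinct components therefore leaves the set of directed cycles unchanged.
\item A vertex on no directed cycle is unnecessary in any solution. Deleting it does not change which vertex sets hit all cycles among the remaining vertices, so the budget is unaffected.
\end{itemize}
In all three cases the family of directed cycles, viewed as vertex sets, is unchanged apart from dropping vertices that occur in no cycle. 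So a set $X$ hits all cycles before the change if and only if $X\cap V(D')$ hits all cycles after it.

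\textbf{Part (iii).} If $k<0$, no set has negative size, so the answer is NO. If $D$ is acyclic, the empty set is a solution. If $k\ge |V(D)|$, deleting all vertices is a solution.

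\textbf{Part (iv).} Here (iii) has not fired, so $D$ contains a cycle. With $k=0$ the answer is therefore NO. With $k=1$, the exhaustive check over single vertices decides the instance exactly, in polynomial time.

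Each rule application runs in polynomial time. Strongly connected components and acyclicity tests take linear time. A vertex lies on a cycle if and only if its strongly connected component has at least two vertices or it carries a loop, and loops were removed in (i).

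The only subtle point is to state the cycle-preservation claim of (ii) precisely. Everything else is immediate.
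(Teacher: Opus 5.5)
Your proposal is correct and follows essentially the same argument as the paper. Loop vertices are forced into every solution. The removals in (ii) leave the family of directed cycles unchanged. Parts (iii) and (iv) follow directly from the budget and acyclicity conditions, and planarity holds because every modification is a deletion. Your version just writes out in more detail what the paper states in one sentence per part.
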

\begin{proof}
A vertex with a loop belongs to every solution, so deleting
it and decreasing $k$ by one preserves equivalence. Repeated
copies of an arc do not change which vertex sets contain
directed cycles. Arcs between distinct strongly connected
components and vertices on no directed cycle can be removed
without affecting any solution. The answers in (iii) and (iv)
follow directly from the budget and acyclicity conditions.
All graph modifications are deletions and preserve planarity.
\end{proof}

Only repeated copies with the \emph{same direction} are removed.
Opposite arcs $x\to y$ and $y\to x$ are both retained: together
they form a directed cycle of length two. This distinction will
matter when we count edges and faces in the underlying graph.
%If the rule does not terminate the algorithm, then $k\geq2$.

\begin{ruleenv}\label{rule:degree}
If $v$ has a unique incoming neighbor $u$, remove $v$ and
replace each arc $v\to w$ by $u\to w$.
If $v$ has a unique outgoing neighbor $u$, remove $v$ and
replace each arc $w\to v$ by $w\to u$.
In either case, leave $k$ unchanged.
\end{ruleenv}

\begin{lemma}\label{lem:degree}
\Cref{rule:degree} is safe and preserves planarity.
\end{lemma}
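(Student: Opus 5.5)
We treat the two cases symmetrically (the second is the first applied to the reversed digraph, which has the same directed cycles up to reversal and the same underlying graph), so we only argue for a vertex $v$ with a unique incoming neighbor $u$; by \Cref{rule:clean} we may assume $u\neq v$, since loops have been removed. Write $D'$ for the resulting digraph. The plan has two parts: equivalence of $(D,k)$ and $(D',k)$, and planarity of $D'$.

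\textbf{Equivalence.} Every directed cycle through $v$ enters $v$ via an arc $u\to v$ and leaves via some arc $v\to w$, so it contains the subpath $u\to v\to w$. In $D'$ this subpath is replaced by the arc $u\to w$ (or by a loop at $u$ if $w=u$). Conversely, each arc $u\to w$ of $D'$ either existed in $D$ or comes from a path $u\to v\to w$. Thus directed cycles of $D'$ are exactly the images of directed cycles of $D$ with $v$ contracted into $u$. This correspondence preserves the vertex set of each cycle, except that $v$ disappears and $u$ is already present on any cycle through $v$. For the forward direction, take a solution $X$ of $D$. The set $X'=(X\setminus\{v\})\cup\{u\}$ if $v\in X$, and $X'=X$ otherwise, has size at most $|X|$. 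Any cycle of $D'-X'$ lifts to a cycle of $D$ avoiding $X$: when $v\in X$ we have $u\in X'$, so the lifted cycle avoids both $u$ and $v$; when $v\notin X$, inserting $v$ is harmless. So $X'$ is a solution of $D'$. For the backward direction, a solution $Y\subseteq V(D')$ of $D'$ is also a solution of $D$. Indeed, a cycle of $D-Y$ through $v$ passes through $u$, hence $u\notin Y$, and its contraction is a closed walk in $D'-Y$, which contains a directed cycle. A cycle of $D-Y$ avoiding $v$ is already a cycle of $D'-Y$. Repeated arcs created by the rule are harmless, since \Cref{rule:clean} removes them again.

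\textbf{Planarity.} This is the step needing care. In the underlying graph, the replacement is exactly the contraction of the edge $uv$ (one copy, if there are parallel copies from opposite arcs). Any other edges between $u$ and $v$ come from arcs $v\to u$, and these become loops at $u$. Contracting an edge preserves planarity, and since planarity is defined via the underlying multigraph with loops and parallel edges allowed, the new digraph is planar. The main point to verify is that contracting a single edge of a multigraph keeps an embedding: we draw $v$'s incident edges rerouted along the contracted edge into $u$. This is the standard fact that planar graphs are closed under minors, including multigraph contractions. No change to $k$ is made, and the rule strictly decreases the number of vertices, so it runs in polynomial time.
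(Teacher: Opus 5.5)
Your proposal is correct and follows the paper's own argument. You swap $v$ for $u$ in a solution so that it avoids $v$, and you match cycles through $u\to v\to w$ with cycles through the new arc $u\to w$, including the loop at $u$ that replaces a $2$-cycle $u\to v\to u$. Planarity then comes from contracting the embedded edge $uv$, exactly as in the paper, and you simply verify both directions of the equivalence more explicitly.
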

\begin{proof}
Suppose $u$ is the unique incoming neighbor of $v$.
Every cycle through $v$ contains $u$, so replacing $v$ by
$u$ in a solution cannot increase its size. We may therefore
restrict attention to solutions avoiding $v$.

For a deletion set avoiding $v$, if $u$ is deleted, then
$v$ lies on no cycle, so removing $v$ does not affect acyclicity.
Otherwise, cycles correspond by
shortening $u\to v\to w$ to $u\to w$ and expanding new arcs
back through $v$, including a new loop at $u$.
Thus the two graphs are acyclic simultaneously, and their
minimum deletion sizes agree. Merging $v$ into $u$ along the
embedded edge $uv$ gives the reduced graph, so planarity is
preserved. The outgoing case is symmetric.
\end{proof}

Whenever we proceed beyond these two rules, every vertex has
at least two distinct incoming neighbors and at least two
distinct outgoing neighbors. Moreover, each weakly connected
component of the current instance is strongly connected,
because all arcs between distinct strongly connected components
have been removed. After any later modification, we apply
these rules again before using either property.

\subsection{Components after Two Vertex Deletions}
We next bound the number of weakly connected components that
can remain after two vertices are deleted from one strongly
connected component. This is the component count needed for
the planar face argument in the next subsection.

Fix a strongly connected component $D$ of the current instance,
and let $G$ be its underlying undirected graph, obtained by
replacing each arc by an undirected edge and keeping distinct arcs as
distinct edges. A weakly connected component may itself be
strongly connected, and isolated vertices also count as
weakly connected components. Since deleting vertices
commutes with taking the underlying undirected graph, for any two
distinct vertices $x,y$,
\[
\comp(G-\{x,y\})
=\text{the number of weakly connected components of }D-\{x,y\}.
\]

The following lemma uses the degree conditions established
above. It shows that each remaining weakly connected component
contains a cycle after either one of $x,y$ is restored. This
will force any solution avoiding that restored vertex to use
at least one deletion in each component.

\begin{lemma}\label{lem:components}
Assume that \Cref{rule:clean,rule:degree} are no longer
applicable. Let $D$ be a strongly connected component of
the current instance, and let $x,y\in V(D)$ be distinct.
For every weakly connected component $C$ of $D-\{x,y\}$,
both $D[C\cup\{x\}]$ and $D[C\cup\{y\}]$ contain a
directed cycle.
\end{lemma}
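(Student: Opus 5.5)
The plan is to use only the degree conditions left behind by \Cref{rule:clean,rule:degree}: every vertex has no loop and has at least two distinct incoming neighbors and at least two distinct outgoing neighbors. By symmetry between $x$ and $y$ it suffices to treat $D[C\cup\{x\}]$. The argument will be a forward walk and a backward walk inside $C\cup\{x\}$, which either close up inside $C$ or both reach $x$ and combine into a closed walk through $x$.

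\textbf{Step 1: neighbors of vertices of $C$.} Since $C$ is a weakly connected component of $D-\{x,y\}$, every arc of $D$ with one endpoint in $C$ has its other endpoint in $C\cup\{x,y\}$. Take $v\in C$. It has at least two distinct outgoing neighbors, all lying in $C\cup\{x,y\}$. At most one of them equals $y$, so at least one lies in $C\cup\{x\}$. The same reasoning gives $v$ an incoming neighbor in $C\cup\{x\}$. Hence, in $D[C\cup\{x\}]$, every vertex of $C$ has positive indegree and positive outdegree.

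\textbf{Step 2: the two walks.} Fix some $v\in C$; this is possible because components are nonempty.
\begin{itemize}
\item \emph{Forward walk.} Start at $v$ and repeatedly move along an outgoing arc to a neighbor in $C\cup\{x\}$, which Step 1 provides as long as the current vertex is in $C$. Stop as soon as $x$ is reached. If $x$ is never reached, the walk stays in the finite set $C$ forever. Then it repeats a vertex, and the segment between the repetitions is a nonempty closed directed walk in $D[C]$.
\item \emph{Backward walk.} Do the same along incoming arcs. Either it yields a closed directed walk in $D[C]$, or it reaches $x$, which gives a directed walk $x\reach v$ inside $C\cup\{x\}$.
\end{itemize}
In the remaining case both walks reach $x$. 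Concatenating the walk $v\reach x$ with the walk $x\reach v$ gives a closed directed walk in $D[C\cup\{x\}]$. This walk is nonempty because $v\neq x$.

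\textbf{Step 3: from a closed walk to a cycle.} Every nonempty closed directed walk contains a directed cycle: take a shortest closed subwalk between two occurrences of a repeated vertex. Under the paper's conventions this cycle may be a pair of opposite arcs; it cannot be a loop, since loops were removed. So $D[C\cup\{x\}]$ contains a directed cycle, and the symmetric argument handles $y$.

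No step looks like a serious obstacle. The one point needing care is Step 1: it relies on the in- and out-neighbors being \emph{distinct}, which is exactly what \Cref{rule:degree} guarantees. With only multiple arcs to a single neighbor, that neighbor could be $y$ and Step 1 would fail. Strong connectivity of $D$ is not needed.
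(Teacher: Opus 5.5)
Your proof is correct and rests on the same pigeonhole observation as the paper's: a vertex of $C$ has at least two distinct in-neighbors and two distinct out-neighbors, all in $C\cup\{x,y\}$, so at most one of them is $y$. This does use that $D$ is a whole component of the current instance, since \Cref{rule:clean} removed arcs between strongly connected components, although the internal strong connectivity of $D$ is indeed not needed. The paper packages the same observation differently: it takes a source $a$ and a reachable sink $b$ of an acyclic $D[C]$, whose in-neighbors and out-neighbors respectively must then be exactly $x$ and $y$, so one path $a\reach b$ closes into both cycles at once, whereas your forward and backward walks treat $x$ and then $y$ by symmetry.
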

\begin{proof}
If $D[C]$ contains a directed cycle, that cycle proves both
claims. Otherwise $D[C]$ is acyclic. Choose a source $a$ of
$D[C]$ and a sink $b$ of $D[C]$ reachable from $a$, allowing
$a=b$.

The vertex $a$ has no incoming neighbor in $C$. It also has
no incoming neighbor in any other weakly connected component
of $D-\{x,y\}$, since there are no arcs between such components.
All its incoming neighbors in $D$ must therefore belong to
$\{x,y\}$. As $a$ has at least two distinct incoming neighbors,
both $x\to a$ and $y\to a$ exist. The same argument at the
sink $b$ gives both $b\to x$ and $b\to y$.

The path $a\reach b$ in $D[C]$ consequently extends to the
directed cycles
\[
x\to a\reach b\to x
\qquad\text{and}\qquad
y\to a\reach b\to y.
\]
These lie in the two required induced subgraphs.
\end{proof}

\begin{ruleenv}\label{rule:components}
If there are distinct vertices $x,y$ in a strongly connected
component $D$ such that $D-\{x,y\}$ has more than $k$
weakly connected components, delete $x,y$ and set $k\gets k-2$.
\end{ruleenv}
\begin{lemma}\label{lem:components-safe}
\Cref{rule:components} is safe.
\end{lemma}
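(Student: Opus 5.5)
We prove safety by showing that every solution of size at most $k$ must contain both $x$ and $y$. Then deleting them and lowering the budget by two gives an equivalent instance. We may assume the current instance is nonnegative and that \Cref{rule:clean,rule:degree} no longer apply, since the rules are applied in order. This assumption is what lets us invoke \Cref{lem:components}.

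Let $X$ be a solution with $|X|\le k$. Suppose first that $x\notin X$. Let $C_1,\ldots,C_r$ with $r>k$ be the weakly connected components of $D-\{x,y\}$. By \Cref{lem:components}, each $D[C_i\cup\{x\}]$ contains a directed cycle. Since $X$ avoids $x$, it must contain a vertex of each $C_i$. The sets $C_i$ are pairwise disjoint and avoid $x$ and $y$, so $|X|\ge r>k$, a contradiction. The symmetric argument, using $D[C_i\cup\{y\}]$, shows $y\in X$. Hence every solution of size at most $k$ contains $\{x,y\}$.

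The two directions of equivalence now follow. If $X$ is a solution of $(D,k)$, then $X\setminus\{x,y\}$ has size at most $k-2$ and is a solution of the graph with $x$ and $y$ deleted. Conversely, any solution $Y$ of the reduced instance with $|Y|\le k-2$ gives the solution $Y\cup\{x,y\}$ of size at most $k$ for the original. If the new budget is negative, the rule still produces an equivalent instance: the original was a NO-instance, since every solution needs two vertices, and \Cref{rule:clean}(iii) will report this.

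Planarity is not asserted in the statement, but it holds anyway because the rule only deletes vertices. There is no serious obstacle. The only point needing care is that \Cref{lem:components} gives a cycle in $D[C_i\cup\{x\}]$, that is, with $x$ restored, so each component forces its own deletion. Disjointness of the $C_i$ then makes the counting immediate. A further subtlety is that the applicability condition refers to one strongly connected component $D$, while a solution acts on the whole graph; this is harmless, because cycles inside $D$ are cycles of the whole graph.
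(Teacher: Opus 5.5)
Your proof is correct and follows essentially the same route as the paper: \Cref{lem:components} forces a separate deletion in each of the more than $k$ pairwise disjoint components whenever $x$ (or $y$) is avoided, so $x,y$ lie in every solution within budget, and solutions then transfer in both directions. Your extra remarks are harmless. These are making explicit that \Cref{rule:clean,rule:degree} are exhausted, and treating the negative-budget case, which in fact cannot arise since \Cref{rule:clean}(iv) already guarantees $k\geq2$ here.
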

\begin{proof}
Let $C_1,\ldots,C_q$ be the weakly connected components of
$D-\{x,y\}$, where $q>k$. Suppose a solution $X$ avoids $x$.
For each $i$, \Cref{lem:components} supplies a directed cycle
contained in $C_i\cup\{x\}$. Since $x\notin X$, the solution
must contain a vertex of $C_i$. The sets $C_i$ are pairwise
disjoint, so $|X|\geq q>k$. Thus every solution of size at
most $k$ contains $x$. Applying the same argument with $y$
shows that every such solution also contains $y$.

Deleting $x,y$ from a solution leaves at most $k-2$ vertices
that make the remaining graph acyclic. Conversely, adding
$x,y$ to any solution of size at most $k-2$ for the reduced
instance gives a solution of size at most $k$ for the original
instance. Hence the rule is safe.
\end{proof}

When this rule is no longer applicable, every pair of distinct
vertices in a component satisfies
$\comp(G-\{x,y\})\leq k$.
%No bound on the number of strongly connected components after the deletion is needed. In particular,
%a singleton strongly connected component without a loop need not require any deletion.

\subsection{Directed Faces}
We now use the component bound to control directed faces.
We first bound how many faces can contain the same two vertices,
then derive bounds for the number incident with one vertex
and for their total number. The next subsection uses these
bounds to control the special vertices and their total
alternation count.

All face counts are taken separately in the induced embedding
of each strongly connected component, including its outer face.
Fix one such component $D$, and let $G$ be its underlying
plane graph as defined above. Although repeated arcs in the
same direction have been removed, opposite arcs remain as
two distinct edges of $G$. Thus $G$ is still a multigraph,
with at most two edges between any pair of vertices.

For distinct $x,y\in V(G)$, let $\mu(x,y)$ be the number of
distinct faces whose boundaries contain both vertices, and let
$m_{xy}$ be the number of edges directly joining them. We have
$m_{xy}\in\{0,1,2\}$; it equals two exactly when both opposite
arcs are present in $D$. Degrees and edge counts in $G$ include
these edges separately, as required by Euler's formula.

\begin{lemma}\label{lem:commonfaces}
After \Cref{rule:clean,rule:degree,rule:components} have been
exhaustively applied, every pair of distinct vertices $x,y$
in a component satisfies
\[
\mu(x,y)\leq \comp(G-\{x,y\})+m_{xy}\leq k+2.
\]
\end{lemma}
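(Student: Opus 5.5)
The second inequality is immediate once the first is known. Since \Cref{rule:components} no longer applies, $\comp(G-\{x,y\})\leq k$, and $m_{xy}\leq 2$ because same-direction repeated arcs were removed. So all the work is in the first inequality, $\mu(x,y)\leq \comp(G-\{x,y\})+m_{xy}$. I would prove it by a planar argument around the two vertices $x$ and $y$.

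\textbf{Step 1: an auxiliary plane graph.} Let $f_1,\ldots,f_\mu$ be the distinct faces whose boundaries contain both $x$ and $y$. In each $f_i$, draw a curve $\gamma_i$ from $x$ to $y$ through the interior of $f_i$. Curves in distinct faces are internally disjoint. Add these as new edges and let $G^+$ be the resulting plane multigraph. It contains the $m_{xy}$ original $xy$-edges and the $\mu$ new ones, so it has $\mu+m_{xy}$ parallel $xy$-edges, drawn without crossings.

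\textbf{Step 2: bound the number of regions between consecutive $xy$-edges.} Order these $\mu+m_{xy}$ edges cyclically around $x$. Two consecutive ones bound a closed curve through $x$ and $y$, and these curves split the plane into $\mu+m_{xy}$ regions (when $\mu+m_{xy}\ge 2$).

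\textbf{Step 3: each region contains a component of $G-\{x,y\}$.} I claim every region contains at least one vertex of $G-\{x,y\}$. Since regions are disjoint open sets and each component of $G-\{x,y\}$ is connected and avoids the curves, every such component lies in a single region. The claim then gives $\mu+m_{xy}\leq \comp(G-\{x,y\})$, which is stronger than needed. Without the claim, I would aim to show that at most $m_{xy}$ regions are empty, which gives exactly the stated bound.

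\textbf{The main obstacle} is showing that a region contains a vertex. Suppose a region $R$ contains no vertex besides $x$ and $y$. Then everything inside $R$ is a set of parallel $xy$-edges, which must be original edges, since the $\gamma_i$ are the boundary pieces. There are two cases.

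\emph{Case 1: $R$ is bounded by two new curves $\gamma_i,\gamma_j$.} Then any original edge inside $R$ would separate $f_i$ from $f_j$. If there is no original edge inside $R$, the two curves would lie in the same face, contradicting $f_i\neq f_j$. So $R$ must contain an original $xy$-edge, which is one of the counted parallel edges. That edge should then have been consecutive in the cyclic order, a contradiction.

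\emph{Case 2: $R$ is bounded by at least one original $xy$-edge.} Each original $xy$-edge borders at most two empty regions. A cleaner accounting is to charge each empty region to an original edge on its clockwise side: an empty region whose clockwise boundary is a new curve $\gamma_i$ must have an original edge on its other side, because a region between two new curves with nothing inside would merge $f_i$ and $f_j$. This charging is injective, so at most $m_{xy}$ regions are empty.

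Finally, I would handle the degenerate cases separately. When $\mu+m_{xy}\le 1$, the bound holds trivially because $G-\{x,y\}$ is nonempty: each vertex has at least two distinct in-neighbours, so $|V(D)|\ge 3$. I would also check faces whose boundary visits $x$ several times; choosing a single angle of $f_i$ at $x$ and a single angle at $y$ for $\gamma_i$ suffices.
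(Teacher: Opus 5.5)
Your region-counting strategy differs from the paper's proof, but Step~3 has a genuine gap: the counting is wrong.

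\textbf{The counterexample.} Both the claim that every region contains a vertex and the fallback claim that at most $m_{xy}$ regions are empty are false. The fallback would also not give the stated bound. With at most $m_{xy}$ empty regions, at least $\mu$ regions would contain vertices, which yields $\mu(x,y)\leq\comp(G-\{x,y\})$. That inequality fails in reduced instances. Take a bidirected cycle on at least three vertices, with each vertex joined to both neighbours by opposite arcs. It is untouched by the three rules for $k=2$. Choose $x,y$ adjacent. Then $\mu(x,y)=3$ (inner face, outer face, and the digon between the two opposite arcs), $m_{xy}=2$, and $\comp(G-\{x,y\})=1$. Your five $xy$-curves bound four empty lenses and one region containing $G-\{x,y\}$.

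\textbf{Why the charging fails.} An original edge $e$ typically has an empty lens on \emph{each} side. One lens has $e$ as its clockwise boundary. The other has a new curve there and $e$ on its other side. Your rule charges both lenses to $e$, so the charging is not injective.

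\textbf{The repair.} Use the observation you made and then discarded. Your Case~1 shows that no empty region lies between two new curves. An empty region between consecutive $xy$-edges contains no edge of $G$. Together with the interiors of its two boundary curves, it therefore lies in a single face of $G$, forcing $f_i=f_j$. Hence every empty region has an original $xy$-edge on its boundary. Each original edge bounds exactly two regions (when $\mu+m_{xy}\geq 2$), so at most $2m_{xy}$ regions are empty. Therefore at least $(\mu+m_{xy})-2m_{xy}=\mu-m_{xy}$ regions contain a vertex of $G-\{x,y\}$, and each contains a different component. This is exactly $\mu(x,y)\leq\comp(G-\{x,y\})+m_{xy}$, and the example above shows it is tight.

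\textbf{Comparison with the paper.} With this correction, your argument is a valid topological alternative to the paper's proof. The paper counts the faces lost when $x$ and $y$ are deleted in two ways: by face merging, and by Euler's formula for the possibly disconnected graph $G-\{x,y\}$. This gives the exact identity
\[
\mu(x,y)=\comp(G-\{x,y\})+m_{xy}-(d_x-f_x)-(d_y-f_y).
\]
Here $f_x,f_y$ are the numbers of distinct faces at $x,y$, and the bound follows from $f_x\leq d_x$ and $f_y\leq d_y$.
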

\begin{proof}
Write $c=\comp(G-\{x,y\})$. By \Cref{rule:components},
$c\leq k$, and by \Cref{rule:clean}, $m_{xy}\leq2$.
It remains to prove $\mu(x,y)\leq c+m_{xy}$.
The claim is immediate if $x,y$ have no common face, so
assume that $\mu(x,y)>0$.

Let $f_x,f_y$ be the numbers of distinct faces incident with
$x,y$, and let $d_x,d_y$ be their degrees in $G$.
There are exactly $d_x$ angles around $x$, and each face
incident with $x$ occupies at least one of them. Hence
$f_x\leq d_x$, and similarly $f_y\leq d_y$.
We count the decrease in the number of faces when $x,y$
and their incident edges are deleted in two ways.

\proofstep{Counting the faces that merge.}
Deleting a vertex merges all faces incident with it.
Because $x,y$ share a face, their two collections of incident
faces merge into a single face when both vertices are deleted.
There are $f_x+f_y-\mu(x,y)$ distinct faces in their union,
and all other faces are unchanged. The number of faces
therefore decreases by
\[
f_x+f_y-\mu(x,y)-1.
\]

\proofstep{Computing the same decrease using Euler's formula.}
Let $n,m,p$ be the numbers of vertices, edges, and faces of
$G$. Since $G$ is connected, $p=m-n+2$. Put
$G'=G-\{x,y\}$, and denote its corresponding counts by
$n',m',p'$. We remove two vertices and
$d_x+d_y-m_{xy}$ edges: each edge joining $x$ and $y$ was
counted twice in $d_x+d_y$ but must be deleted only once.
Thus
\[
n'=n-2,\qquad m'=m-d_x-d_y+m_{xy}.
\]
The graph $G'$ has $c$ connected components. Euler's formula
for a possibly disconnected plane graph gives
\[
p'=m'-n'+c+1
   =m-n-d_x-d_y+m_{xy}+c+3.
\]
Consequently,
\[
p-p'=d_x+d_y-m_{xy}-c-1.
\]
Equating this with the first count and rearranging yields
\[
\begin{aligned}
\mu(x,y)
&=c+m_{xy}-(d_x-f_x)-(d_y-f_y)\\
&\leq c+m_{xy}\leq k+2.
\end{aligned}
\]
This proves the claim.
\end{proof}

The bound applies to all faces, and hence also to directed
faces. We use it to show that a vertex incident with too many
directed faces must belong to every solution within the budget.

\begin{ruleenv}\label{rule:vertexfaces}
If a vertex $v$ appears on the boundaries of more than
$k(k+2)$ directed faces, delete $v$ and set $k\gets k-1$.
\end{ruleenv}
\begin{lemma}\label{lem:vertexfaces-safe}
\Cref{rule:vertexfaces} is safe.
\end{lemma}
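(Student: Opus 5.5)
We show that every solution $X$ with $|X|\leq k$ contains $v$. Once this holds, deleting $v$ and decreasing $k$ by one is safe by the same argument as in \Cref{lem:components-safe}. A solution of size at most $k$ for the original instance loses $v$ and becomes one of size at most $k-1$ for the reduced instance. Conversely, adding $v$ to a solution of the reduced instance gives a solution of the original.

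Suppose, for a contradiction, that $X$ is a solution with $|X|\leq k$ and $v\notin X$. Let $\mathcal F_v$ be the set of directed faces whose boundaries contain $v$, so $|\mathcal F_v|>k(k+2)$. The boundary of each directed face is a closed directed walk and therefore contains a directed cycle. The solution must hit every such cycle, but we should not assume the cycle passes through any particular vertex. It is simpler to use the stronger fact that $X$ meets the vertex set of each directed face boundary. This holds because the boundary walk is a closed directed walk in $D-X$ whenever it avoids $X$, and a closed directed walk contains a directed cycle. So every $f\in\mathcal F_v$ has some vertex of $X$ on its boundary, and since $v\notin X$, that vertex $x$ is distinct from $v$.

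Now we count. Each $x\in X$ lies on at most $\mu(v,x)$ faces that also contain $v$. By \Cref{lem:commonfaces} (applicable since \Cref{rule:clean,rule:degree,rule:components} have been exhaustively applied), $\mu(v,x)\leq k+2$. If $x$ lies in a different strongly connected component than $v$, it shares no face with $v$, because faces are counted within each component's own embedding. Hence
\[
|\mathcal F_v|\leq \sum_{x\in X}\mu(v,x)\leq |X|(k+2)\leq k(k+2),
\]
which contradicts $|\mathcal F_v|>k(k+2)$.

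The only delicate point is making sure the face bookkeeping matches \Cref{lem:commonfaces}. Faces are counted per component embedding, and $\mu$ counts distinct faces, so a face whose boundary repeats $v$ or $x$ is still counted once. This is consistent with the union bound used above. The rule also deletes a vertex, so after applying it we reapply \Cref{rule:clean,rule:degree}, which is consistent with the stated procedure.
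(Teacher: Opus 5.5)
Your proof is correct and follows the paper's argument essentially step for step. A solution avoiding $v$ must hit each of the more than $k(k+2)$ directed faces at $v$ at some other vertex, and \Cref{lem:commonfaces} caps each such vertex at $k+2$ of these faces; your extra bookkeeping about per-component faces and the final deletion step matches what the paper leaves implicit.
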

\begin{proof}
The boundary of a directed face is a nonempty closed directed
walk and therefore contains a directed cycle. Every solution
must delete a vertex on that boundary. A solution avoiding
$v$ must consequently meet every directed face incident with
$v$ at another vertex. By \Cref{lem:commonfaces}, each other
vertex meets at most $k+2$ of these faces. A set of at most
$k$ vertices avoiding $v$ can therefore meet at most $k(k+2)$
of them. Under the condition of the rule this is insufficient,
so every solution within the budget contains $v$.
Deleting $v$ and decreasing $k$ by one preserves equivalence.
\end{proof}

Once this rule is exhausted, each vertex meets at most
$k(k+2)$ directed faces. This immediately bounds how many
directed faces a solution of size at most $k$ can meet in total.

Let $F$ denote the total number of directed faces, summed
over all strongly connected components of the current instance.
Each component is considered with its own induced embedding,
including its outer face.
\begin{ruleenv}\label{rule:allfaces}
If $F>k^2(k+2)$, return NO.
\end{ruleenv}
\begin{lemma}\label{lem:allfaces-safe}
\Cref{rule:allfaces} is safe.
\end{lemma}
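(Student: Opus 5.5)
The plan is a counting argument: assume the instance is a YES-instance and show that then $F\leq k^2(k+2)$. Contrapositively, if $F>k^2(k+2)$ there is no solution of size at most $k$, so returning NO is correct. The rule changes no graph, so safety amounts exactly to this implication.

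Let $X$ be a solution with $|X|\leq k$. First, every directed face, in any strongly connected component with its induced embedding, has a boundary that is a nonempty closed directed walk. Such a boundary contains a directed cycle, and that cycle lies in the current digraph. Hence $X$ contains at least one vertex on the boundary of each directed face. Assign each directed face to one such vertex of $X$. Faces of different components are counted separately, and each vertex of $X$ lies in exactly one component. So the assignment charges every directed face to a vertex of $X$ lying on that face's boundary within its own component.

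Second, the previous rule bounds the charge per vertex. \Cref{rule:allfaces} is applied only after \Cref{rule:vertexfaces} is no longer applicable, by the stated order of application with restarts. Therefore each vertex appears on the boundaries of at most $k(k+2)$ directed faces of its component. Summing over the at most $k$ vertices of $X$ gives $F\leq |X|\cdot k(k+2)\leq k^2(k+2)$. This contradicts the hypothesis $F>k^2(k+2)$, so the instance is a NO-instance and the rule's answer is correct.

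The only delicate point is bookkeeping rather than mathematics. The value $k$ in \Cref{rule:vertexfaces} must be the same current $k$ as in \Cref{rule:allfaces}. This holds because the rules restart after every modification, so when \Cref{rule:allfaces} fires, \Cref{rule:vertexfaces} has been checked against the current $k$. The face counting must also match: both rules count directed faces per component, with the outer face included. One might also worry that a face boundary repeats vertices, but this does not matter, since we only need that some vertex of $X$ lies on it.
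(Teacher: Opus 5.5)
Your proposal is correct and follows the paper's proof. Every directed face boundary contains a directed cycle and so must be hit by a solution, \Cref{rule:vertexfaces} limits each vertex to at most $k(k+2)$ directed faces, and summing over at most $k$ solution vertices gives $F\leq k^2(k+2)$; your extra bookkeeping on the current $k$ and the per-component face counts makes explicit what the paper leaves implicit.
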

\begin{proof}
Every solution must meet the boundary of every directed face.
By \Cref{rule:vertexfaces}, each chosen vertex meets at most
$k(k+2)$ directed faces. Thus at most $k$ chosen vertices can
meet at most $k^2(k+2)$ directed faces, even when the counts
are summed over all components.
\end{proof}
The remaining instance therefore satisfies
\begin{equation}\label{eq:faces}
F\leq k^2(k+2)=O(k^3).
\end{equation}

\subsection{Special Vertices and Their Alternation Counts}

We now use the bound on directed faces to control the number
of special vertices and their total alternation count.
The argument counts source angles in two ways, first around
vertices and then along face boundaries. Combining these
counts with Euler's formula relates the alternation counts
to the number of directed faces.

\begin{lemma}\label{lem:switches}
The number of special vertices and the sum of their alternation
counts $\sigma(v)$ are $O(k^3)$.
\end{lemma}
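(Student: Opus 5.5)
The plan is to work in one strongly connected component $D$ at a time, with its underlying plane graph $G$ (connected, $n$ vertices, $m$ edges, $p$ faces), and then sum over components using \eqref{eq:faces}. The key quantity is the number of \emph{source angles}. Call an angle at $v$ (between two cyclically consecutive edge ends at $v$) a \emph{source angle} if both bounding arcs leave $v$, a \emph{sink angle} if both enter $v$, and a \emph{mixed angle} otherwise. Around $v$ there are $d_v$ angles and exactly $\sigma(v)$ of them are mixed. Since incoming and outgoing blocks alternate, there are $\sigma(v)/2$ maximal blocks of outgoing arcs and equally many blocks of incoming arcs. A block of $b$ arcs contributes $b-1$ same-type angles. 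Hence
\[
\#\text{source angles at }v=d^+(v)-\tfrac{\sigma(v)}{2},
\qquad
\#\text{sink angles at }v=d^-(v)-\tfrac{\sigma(v)}{2}.
\]
Here I use that after \Cref{rule:clean,rule:degree} every vertex has both in- and out-arcs, so $\sigma(v)\geq2$. Summing over $v$, the total number of non-mixed angles is $2m-\tfrac12\sum_v\sigma(v)$.

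Next I count the same angles along face boundaries. Traverse a face boundary as a closed walk and look at consecutive edges. An angle of the face at a vertex is mixed exactly when the walk continues in the arc direction there; it is a source or sink angle exactly when the orientation flips. If a face boundary has length $\ell_f$ and $\tau_f$ orientation changes (an even number), then it has $\ell_f-\tau_f$ mixed angles. A face is directed exactly when $\tau_f=0$. (A degenerate case, where the direction is consistent but the walk is counterclockwise, still counts as directed, as in the paper's definition.) For a non-directed face, $\tau_f\geq2$, so it contributes at least two non-mixed angles. Thus
\[
2m-\tfrac12\sum_v\sigma(v)=\sum_f\tau_f\geq 2(p-F_D),
\]
where $F_D$ is the number of directed faces of $D$.

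Combining with Euler's formula $p=m-n+2$ gives $\tfrac12\sum_v\sigma(v)\leq 2m-2p+2F_D=2n-4+2F_D$. Writing $\sigma(v)=2+(\sigma(v)-2)$ and noting that $\sigma(v)-2\geq2$ is even for special vertices, I get
\[
\sum_{v\text{ special}}(\sigma(v)-2)\leq 4F_D-8<4F_D.
\]
This bounds the number of special vertices by $2F_D$. Since $\sigma(v)\leq 2(\sigma(v)-2)$ when $\sigma(v)\geq4$, the total alternation count over special vertices is at most $8F_D$. Summing over components and applying \eqref{eq:faces} gives $O(k^3)$ for both quantities.

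The main obstacle is the face-side count: making precise that a face angle is mixed exactly when the boundary walk keeps its orientation, and handling degenerate boundaries. These include faces whose boundary walk repeats vertices or edges (cut vertices, bridges), and digons formed by opposite arcs, which are directed faces. I would handle this by working with angles as (vertex, consecutive pair of edge ends) incidences, each belonging to exactly one face. Each angle then corresponds to one step of the face's boundary walk, and the analysis becomes purely local; multiplicities are then counted correctly automatically. I would also check the one-vertex/loop edge cases, which are excluded after \Cref{rule:clean}.
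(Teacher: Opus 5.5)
Your strategy is the paper's: count non-mixed angles once around vertices and once along face boundaries, then apply Euler's formula. The paper counts only source angles and uses $c(f)\geq1$ for each non-directed face. You count source and sink angles and use $\tau_f\geq2$. Along a boundary walk the orientation flips alternate between source and sink angles, so $\tau_f=2c(f)$ and your identity is just twice the paper's.

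\textbf{The vertex-side total is wrong as written.} By your own per-vertex formulas, $v$ has $d_v-\sigma(v)$ non-mixed angles, so the total is $2m-\sum_v\sigma(v)$, not $2m-\tfrac12\sum_v\sigma(v)$. This is not cosmetic. With your version, the Euler step gives $\tfrac12\sum_v\sigma(v)\leq 2n-4+2F_D$, that is, $\sum_v(\sigma(v)-2)\leq 2n-8+4F_D$. That bound grows with $n$ and controls nothing. So the inequality $\sum_{v\text{ special}}(\sigma(v)-2)\leq 4F_D-8$ that you state next does not follow from what you displayed.

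\textbf{With the correct count the argument goes through.} Combining $2m-\sum_v\sigma(v)=\sum_f\tau_f\geq 2(p-F_D)$ with $p=m-n+2$ gives $\sum_v\sigma(v)\leq 2n-4+2F_D$. Every vertex has $\sigma(v)\geq2$, so this becomes $\sum_{v\text{ special}}(\sigma(v)-2)\leq 2F_D-4$. That is exactly the consequence of \eqref{eq:angles} the paper uses. It gives at most $F_D-2$ special vertices in $D$ and $\sum_{v\text{ special}}\sigma(v)\leq 4F_D-8$. Summing over components with \eqref{eq:faces} finishes the proof.

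The rest of your argument is correct: the block count at a vertex using $\sigma(v)\geq2$, the description of mixed angles as non-flip steps of the boundary walk, and $\tau_f\geq2$ for non-directed faces.
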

\begin{proof}
Fix a strongly connected component $D$. Let $n=|V(D)|$,
$m=|A(D)|$, and let $\mathcal F_D$ be its set of faces,
including the outer face. Write $p=|\mathcal F_D|$ and let
$F_D$ be the number of directed faces in $\mathcal F_D$.
Each arc gives one edge of the connected underlying plane
multigraph, so Euler's formula is
\[
n-m+p=2.
\]

An \emph{angle} is a sector between consecutive incident
arcs in the cyclic order around a vertex. It is a
\emph{source angle} if both arcs leave the vertex. Every
angle is incident with exactly one face. When a face boundary
visits a vertex more than once, the corresponding angle
occurrences are counted separately.
Figure~\ref{fig:source-angles} illustrates the
two local configurations.

\begin{figure}[htbp]
\centering
\begin{tikzpicture}
  \begin{scope}[xshift=-2.6cm]
    \node[v] (vo) at (0,0) {$v$};
    \draw[arr,blue!70!black] (vo) -- ++(-1.05,.65);
    \draw[arr,blue!70!black] (vo) -- ++(1.05,.65);
    \draw[arr] (-1.05,-.65) -- (vo);
    \draw[arr] (1.05,-.65) -- (vo);
    \draw[blue!70!black] (-.36,.30)
      to[bend left=48] (.36,.30);
    \node[blue!70!black] at (0,1.05) {source angle};
    \node at (0,-1.05) {ordinary: $\sigma(v)=2$};
  \end{scope}

  \begin{scope}[xshift=2.6cm]
    \node[v] (vs) at (0,0) {$v$};
    \draw[arr] (-1.05,.65) -- (vs);
    \draw[arr,blue!70!black] (vs) -- ++(1.05,.65);
    \draw[arr,blue!70!black] (vs) -- ++(-1.05,-.65);
    \draw[arr] (1.05,-.65) -- (vs);
    \node at (0,1.05) {no source angle};
    \node at (0,-1.05) {special: $\sigma(v)=4$};
  \end{scope}
\end{tikzpicture}
\caption{Two cyclic orders of two incoming and
two outgoing arcs. The alternation count records changes of
direction, whereas a source angle lies between consecutive
outgoing arcs.}
\label{fig:source-angles}
\end{figure}

\proofstep{Counting source angles at a vertex.}
Fix $v\in V(D)$.
The vertex has both incoming and outgoing arcs. In their
cyclic order, the maximal runs of outgoing arcs alternate
with the maximal runs of incoming arcs. Each outgoing run
has one change from incoming to outgoing at its start and
one change back to incoming at its end. The number of outgoing
runs is therefore
\[
r_v=\frac{\sigma(v)}2.
\]
Let the lengths of these runs be $\ell_1,\ldots,\ell_{r_v}$,
so that $\sum_{i=1}^{r_v}\ell_i=d^+(v)$. A run of length
$\ell_i$ contributes exactly $\ell_i-1$ source angles, one
between each pair of consecutive outgoing arcs in that run.
The angles at either end of the run have one incoming and
one outgoing arc and are not source angles. Thus the number
$s(v)$ of source angles at $v$ is
\[
s(v)=\sum_{i=1}^{r_v}(\ell_i-1)
    =d^+(v)-r_v
    =d^+(v)-\frac{\sigma(v)}2.
\]
Summing over all vertices and using
$\sum_v d^+(v)=m$ gives the total number of source angles:
\[
\sum_{v\in V(D)}s(v)
=m-\sum_{v\in V(D)}\frac{\sigma(v)}2.
\]

\proofstep{Counting the same angles along face boundaries.}
For $f\in\mathcal F_D$, let $c(f)$ be the number of source-angle
occurrences on its boundary. Since each source angle is incident
with exactly one face, counting by faces or by vertices gives
the same total:
\[
\sum_{f\in\mathcal F_D}c(f)
=\sum_{v\in V(D)}s(v)
=m-\sum_{v\in V(D)}\frac{\sigma(v)}2.
\]
Rearranging, we obtain
\[
\sum_{v\in V(D)}\frac{\sigma(v)}2
+\sum_{f\in\mathcal F_D}c(f)=m.
\]

\proofstep{Applying Euler's formula.}
Subtract one for each of the $n$ vertices and each of the
$p$ faces. The preceding identity then gives
\[
\begin{aligned}
\sum_{v\in V(D)}\left(\frac{\sigma(v)}2-1\right)
+\sum_{f\in\mathcal F_D}(c(f)-1)
&=\left(\sum_v\frac{\sigma(v)}2+\sum_f c(f)\right)-n-p\\
&=m-n-p\\
&=-2,
\end{aligned}
\]
where the final equality is exactly $n-m+p=2$ rearranged.

\proofstep{Separating directed faces and special vertices.}
If $f$ is directed, a traversal of its boundary always enters
each visited vertex along one arc and leaves along the next.
It therefore encounters no source angle, so $c(f)=0$ and
$c(f)-1=-1$.

If $f$ is not directed, choose a direction in which to traverse
its closed boundary walk. Some arcs are traversed with their
directions and some against them; otherwise the face would
be directed. In this cyclic sequence there must be a change
from traversing an arc against its direction to traversing
the next arc with its direction. At their common vertex both
arcs point outwards, so this occurrence is a source angle.
Hence $c(f)\geq1$ and $c(f)-1\geq0$.

At an ordinary vertex, $\sigma(v)=2$, so the vertex term is
zero. Every remaining vertex has both incoming and outgoing
arcs, and hence an even alternation count of at least two.
Thus every nonordinary vertex is special. Removing the zero
terms for ordinary vertices and moving the $F_D$ contributions
of $-1$ from directed faces to the right gives
\begin{equation}\label{eq:angles}
\sum_{\substack{v\in V(D)\\v\text{ special}}}
\left(\frac{\sigma(v)}2-1\right)
+\sum_{\substack{f\in\mathcal F_D\\f\text{ not directed}}}
(c(f)-1)=F_D-2.
\end{equation}

\proofstep{Deriving the two bounds.}
Every term on the left of \eqref{eq:angles} is nonnegative.
A special vertex has $\sigma(v)\geq4$ and contributes at
least one. If $S_D$ is the set of special vertices in $D$,
then
\[
|S_D|
\leq\sum_{v\in S_D}\left(\frac{\sigma(v)}2-1\right)
\leq F_D-2\leq F_D.
\]
For $\sigma(v)\geq4$, we also have
\[
\sigma(v)\leq4\left(\frac{\sigma(v)}2-1\right),
\]
because the right-hand side minus the left-hand side is
$\sigma(v)-4\geq0$. Consequently,
\[
\sum_{v\in S_D}\sigma(v)
\leq4\sum_{v\in S_D}\left(\frac{\sigma(v)}2-1\right)
\leq4(F_D-2)\leq4F_D.
\]
Finally, sum over the strongly connected components.
Since $\sum_D F_D=F$, the entire instance has at most $F$
special vertices, with total alternation count at most $4F$.
Together with \eqref{eq:faces}, this proves both $O(k^3)$ bounds.
\end{proof}

Each reduction can be checked and applied in polynomial time,
and every modification decreases the number of vertices or
arcs. We therefore obtain in polynomial time an equivalent
instance with $O(k^3)$ directed faces, $O(k^3)$ special vertices,
and total alternation count $O(k^3)$ at those vertices.
The next section uses these bounds to construct the dual instance.

\section{The Dual Augmentation Problem}
\label{sec:dual}
This section turns the reduced vertex-deletion instance into
an equivalent instance of GSCA. We first use planar duality,
then reduce the sizes of the arc groups and identify a small
retained vertex set. Section~\ref{sec:compression} uses this
set to compress the dual instance.

\subsection{From Deletion to Augmentation}
This subsection uses planar duality to turn vertex deletions
that make the original graph acyclic into the addition of
groups of reverse arcs that make the dual strongly connected.

We now convert the reduced \PDFVS{} instance $(D,k)$ into a
GSCA instance $(H,\mathcal G,k)$. Here and in
Section~\ref{sec:compression}, $k$ denotes the remaining budget
after the primal reductions. First suppose that $D$ is strongly connected.
Construct its directed dual $H$, and call the arcs initially
present in $H$ its \emph{base arcs}. An original vertex $u$
corresponds to a dual face $F_u$, and each original arc incident
with $u$ becomes a boundary arc of $F_u$ in the dual.

For each $u\in V(D)$, let $R_u$ be the group containing the
reversals of all boundary arcs of $F_u$. This gives the group
family $\mathcal G=(R_u)_{u\in V(D)}$, with $|V(D)|$ groups.
Deleting $u$ corresponds to adding $R_u$, so the parameter
remains $k$. For multiple strongly connected components, apply
the construction separately and take the union of the resulting
graphs and group families, with the same overall parameter $k$.
The strong-connectivity arguments below concern a single
strongly connected component $D$ of the original graph and
its dual $H$.

Figure~\ref{fig:dual} illustrates this correspondence.
The two inside faces $f_1,f_2$ and the outside face $g$ become
three dual vertices. The reversals of the dual arcs corresponding
to the three original arcs incident with $u$ form
\[
R_u=\{g\to f_1,\ g\to f_2,\ f_1\to f_2\}.
\]
Deleting $u$ leaves only the directed path $v\to w\to z$ in
the original graph, whereas adding $R_u$ to the dual makes
its three vertices mutually reachable.

\begin{figure}[ht]
\centering
\begin{tikzpicture}
\node[v] (u) at (-1.4,1.2) {$u$};
\node[v] (v) at (1.4,1.2) {$v$};
\node[v] (w) at (1.4,-1.2) {$w$};
\node[v] (z) at (-1.4,-1.2) {$z$};
\draw[arr] (u)--(v);
\draw[arr] (v)--(w);
\draw[arr] (w)--(z);
\draw[arr] (z)--(u);
\draw[arr] (u)--(w);
\node at (.5,.6) {$f_1$};
\node at (-.6,-.5) {$f_2$};
\node at (2.1,.1) {$g$};
\node at (0,1.9) {original graph $D$};
\node at (0,-1.9) {$D-u$ is the path $v\to w\to z$};

\node[v] (f1) at (5.1,1.2) {$f_1$};
\node[v] (f2) at (5.1,-1.2) {$f_2$};
\node[v] (g) at (8.7,0) {$g$};
\draw[arr] (f1)--node[above] {$uv$}(g);
\draw[arr] (f1) to[bend right=10] (g);
\draw[arr] (f2)--node[below] {$zu$}(g);
\draw[arr] (f2) to[bend left=10] (g);
\draw[arr] (f2)--node[right] {$uw$}(f1);
\draw[arr,dashed] (g) to[bend right=20] (f1);
\draw[arr,dashed] (g) to[bend left=20] (f2);
\draw[arr,dashed] (f1) to[bend right=25] (f2);
\node at (3.9,0) {$F_u$};
\node at (6.9,1.9) {dual graph $H$};
\node at (6.9,-1.9) {$H+R_u$ is strongly connected};
\end{tikzpicture}
\caption{Solid arcs are base arcs; dashed arcs form the group
$R_u$. The dual arcs labeled $uv,zu,uw$ correspond to the
original arcs with the same names and bound the dual face $F_u$.}
\label{fig:dual}
\end{figure}

\begin{lemma}\label{lem:dual}
For every $X\subseteq V(D)$, the graph $D-X$ is acyclic if and only if adding to $H$ the groups associated with $X$ makes $H$ strongly connected.
\end{lemma}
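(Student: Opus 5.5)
We argue both directions through the characterization of strong connectivity by directed cuts from Section~\ref{sec:cuts}. Write $R_X=\bigcup_{u\in X}R_u$. Since $D$ is connected, $H$ is weakly connected, and so is $H+R_X$. Hence $H+R_X$ is strongly connected exactly when it has no directed cut. A directed cut of $H+R_X$ is a directed cut of $H$ that contains no arc whose reversal lies in $R_X$. If such a reversal existed, the reversed arc would enter the source shore. So the statement reduces to the following claim: $H$ has a directed cut avoiding every dual arc of an arc incident with $X$ if and only if $D-X$ has a directed cycle.

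\textbf{If $D-X$ has a directed cycle $Z$.} Its arcs are not incident with $X$. By planar cycle--cut duality, the dual arcs of $Z$ form a directed cut of $H$: the faces inside $Z$ form one shore and the faces outside form the other. Both shores are nonempty because a cycle has faces on both sides. For each arc $e$ of this cut, the reversal of $e$ enters the source shore. Adding only reversals of dual arcs of arcs incident with $X$ cannot add the reversal of a cut arc, since no arc of $Z$ is incident with $X$. Could a reverse arc in $R_X$ still cross the cut? Every reverse arc in $R_X$ is the reversal of some base arc. A base arc crossing the cut is the dual of an arc of $Z$, and no arc of $Z$ is incident with $X$. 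Hence no added arc crosses the cut, the cut survives in $H+R_X$, and $H+R_X$ is not strongly connected.

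\textbf{If $H+R_X$ is not strongly connected.} By weak connectivity, $H+R_X$ has a directed cut $\delta(U)$. Every arc of this cut is a base arc; if a reverse arc crossed the cut, its base arc would point the opposite way, contradicting that all cut arcs point in one direction. Thus $\delta(U)$ is also a directed cut of $H$ containing no dual arc of an arc incident with $X$. By the other half of cycle--cut duality, the original arcs corresponding to $\delta(U)$ contain a directed cycle of $D$. None of these arcs is incident with $X$, so the cycle avoids $X$ and lies in $D-X$.

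The main obstacle is applying the cycle--cut duality cleanly. The dual may have loops, since bridges do not occur in a strongly connected graph but a vertex may repeat on a face, and $D$ may have 2-cycles. We also need the direction convention (right face to left face) to be consistent, so that the dual arcs of a directed cycle all cross in the same direction. We would rely on the cited statement from~\cite{Erickson2020} for connected plane digraphs exactly as quoted in Section~\ref{sec:cuts}. Beyond that, we only need to check that the dual arcs of an arc incident with $u$ are precisely the boundary arcs of $F_u$, which holds by construction of the dual.
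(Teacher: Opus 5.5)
Your proof is correct and takes essentially the same route as the paper's. Both arguments use two observations: an added reversal crosses a cut exactly when its base arc does, and $H$ is weakly connected. These reduce the claim to ``$H$ has a directed cut avoiding the dual arcs of arcs incident with $X$ if and only if $D-X$ has a directed cycle,'' which both then settle with the two halves of cycle--cut duality. (Your closing aside about loops is garbled: since a strongly connected $D$ has no bridges, its dual has no loops. Nothing in your argument depends on it.)
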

\begin{proof}
Deleting $X$ removes all arcs with at least one endpoint in $X$.
Thus $D-X$ is acyclic exactly when every directed cycle
contains such an arc. By cycle--cut duality, this is equivalent
to every directed cut of $H$ containing a corresponding base arc.

Adding the groups associated with $X$ adds exactly the
reversals of these base arcs. If a directed cut contains such
a base arc, adding its reversal supplies arcs in both directions
across the cut, so it is no longer directed. Otherwise, the cut
still has arcs in only one direction. Adding arcs also cannot
turn a cut that already has arcs in both directions into a
directed cut.

Consequently, after these groups are added, $H$ has no directed
cut exactly when $D-X$ is acyclic. Since $H$ is weakly connected,
having no directed cut is equivalent to strong connectivity.
\end{proof}

\subsection{Reducing Group Sizes}
Each original arc incident with $u$ contributes the reversal
of its dual arc to $R_u$. Thus, if $u$ has large degree, $R_u$
may also be large. We reduce $R_u$ to at most $\sigma(u)$
arcs, using the vertex's alternation count to control the
group size.

The basic idea is to replace the reversals of all arcs on a
directed boundary path from $s$ to $t$ by one reverse arc
$t\to s$. Here and below, a reverse arc may therefore close
an entire base path; it need not be the reverse of a single
base arc. Adding this one arc makes all vertices on the path
mutually reachable, as illustrated in Figure~\ref{fig:segments}.
\begin{figure}[ht]
\centering
\begin{tikzpicture}
\node[v] (s) at (0,0) {$s$};\node[v] (t) at (5,0) {$t$};
\node[v] (u) at (1.7,.85) {$u$};\node[v] (v) at (3.3,.85) {$v$};
\draw[arr] (s)--(u);\draw[arr] (u)--(v);\draw[arr] (v)--(t);
\draw[arr] (s) to[bend right=30] node[below]{second boundary path}(t);
\draw[back] (t)--node[below]{one reverse arc}(s);
\end{tikzpicture}
\caption{The two boundary paths have the same start $s$ and
end $t$, so they can share the reverse arc $t\to s$.}
\label{fig:segments}
\end{figure}

\begin{ruleenv}\label{rule:boundarypaths}
For each group $R_u$, perform the following operations.
\begin{enumerate}[label=(\roman*)]
\item Split the boundary of the corresponding face $F_u$ at
every change of direction into directed paths.
\item For each path from $s$ to $t$, replace all its reverse
arcs in $R_u$ by one arc $t\to s$. If repeated arcs are
produced, keep only one copy.
\end{enumerate}
Leave $k$ unchanged.
\end{ruleenv}

\begin{lemma}\label{lem:segments}
\Cref{rule:boundarypaths} is safe.
\end{lemma}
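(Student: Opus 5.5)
Since Rule~\ref{rule:boundarypaths} leaves $H$, the budget $k$ and the indexing of groups unchanged, safety follows once each single group $R_u$ is replaced by a set $R'_u$ with the same effect on reachability in every supergraph of $H$. Concretely, I will show that for every arc set $B$ containing only base arcs and reverse arcs,
\[
\text{$x$ reaches $y$ in } H+B+R_u
\iff
\text{$x$ reaches $y$ in } H+B+R'_u .
\]
Applying this one group at a time, any selection of groups makes each weakly connected component strongly connected before the replacement exactly when it does so afterwards. Equivalently, via \Cref{lem:dual}, the correspondence between deletion sets $X$ and the strong connectivity of $H$ is unchanged.

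For the direction from $R_u$ to $R'_u$, I split the boundary of $F_u$ into maximal directed paths $P_1,\dots,P_r$ at the changes of direction. If some $P_i$ runs from $s$ to $t$, its base arcs together with the new arc $t\to s$ form a closed directed walk. Hence in $H+R'_u$ every vertex of $P_i$ reaches every other vertex of $P_i$. In particular, for each base arc $a\to b$ on $P_i$, the reverse arc $b\to a$ in $R_u$ can be simulated by a walk in $H+R'_u$. So any reachability witnessed using arcs of $R_u$ is also witnessed in $H+B+R'_u$. Duplicates created in step~(ii) are irrelevant for reachability, so discarding them changes nothing.

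For the converse, each new arc $t\to s$ must be simulated by arcs of $R_u$ together with base arcs. The path $P_i$ is a directed path $s=v_0\to v_1\to\dots\to v_\ell=t$ of base arcs lying on the boundary of $F_u$, and $R_u$ contains every reversal $v_j\to v_{j-1}$. Therefore $t=v_\ell\to v_{\ell-1}\to\dots\to v_0=s$ is a directed walk in $H+R_u$. Both groups thus generate the same reachability relation on top of $H+B$.

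The main obstacle is making step~(i) well defined when the boundary of $F_u$ is a closed walk rather than a simple cycle, that is, when it repeats vertices or dual arcs, or when it is entirely directed and has no change of direction at all. I would treat the boundary as a cyclic sequence of arc occurrences. A repeated occurrence only contributes the same reversal twice, which the argument above already tolerates. If there is no change of direction, the boundary is a single closed directed walk. In that case I cut it at an arbitrary occurrence to obtain one path from $s$ to $t$, and the argument goes through unchanged (the degenerate case $s=t$ produces a loop, which is harmless). The number of resulting paths equals the number of direction changes along $F_u$, which corresponds to $\sigma(u)$ in the primal. This gives the promised bound $|R'_u|\le\sigma(u)$, or at most $1$ when $\sigma(u)=0$ in this degenerate case.
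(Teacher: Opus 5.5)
Your proposal is correct and follows the paper's proof: in both, each old reversal $b\to a$ of a base arc on a segment $s\reach t$ is simulated by $b\reach t\to s\reach a$, and the new arc $t\to s$ is simulated by the chain of old reversals along the segment, so reachability is the same for every choice of groups. The one difference is the fully directed face boundary. You handle it directly; there one always gets $s=t$, and the resulting loop is harmless because the boundary walk already makes its vertices mutually reachable. The paper instead rules this case out: $H$ is acyclic, being the dual of the strongly connected component $D$, so every face boundary has a change of direction.
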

\begin{proof}
The original component $D$ is strongly connected and hence
has no directed cut. By cycle--cut duality, the dual graph $H$
is acyclic. Every face boundary therefore has direction changes
and can be split into directed paths as in (i).

Fix the selected groups, and consider a directed path
$s\reach t$ on the face boundary corresponding to one of them.
For any base arc $u\to v$ on this path, after the replacement
we can return from $v$ to $u$ along
\[
v\reach t\to s\reach u.
\]
This path can therefore replace the reverse arc $v\to u$
that was added before the reduction.

Conversely, before the replacement, the reversals of all arcs
on the boundary path were added. Following these reverse arcs
gives a path from $t$ to $s$, which can replace the new arc
$t\to s$.

Thus, for the same choice of groups, reachability is the same
before and after the replacement, and the number of selected
groups is unchanged. Hence the rule is safe.
\end{proof}

We call the groups corresponding to ordinary and special
vertices \emph{ordinary groups} and \emph{special groups},
respectively.

\begin{lemma}\label{lem:group-properties}
After applying \Cref{rule:boundarypaths}, the instance has
the following properties.
\begin{enumerate}[label=(\roman*)]
\item Every ordinary group contains one arc. The group associated
with a special vertex $v$ contains at most $\sigma(v)$ arcs,
and the total number of arcs in special groups is $O(k^3)$.
\item For every base arc $u\to v$, at least one group contains
a reverse arc $t\to s$ whose corresponding directed boundary
path is $s\reach u\to v\reach t$. Adding all groups makes $H$
strongly connected.
\end{enumerate}
\end{lemma}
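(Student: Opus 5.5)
Both parts follow by counting the directed boundary paths of each dual face $F_u$ and then invoking \Cref{lem:dual} and \Cref{lem:segments}. I would first fix the correspondence between the cyclic order of arcs around $u$ in $D$ and the boundary of $F_u$ in $H$. Each original arc incident with $u$ gives one boundary arc of $F_u$, and these occur around $F_u$ in the same cyclic order as around $u$. The dual arc is directed from the right-hand face to the left-hand face of the original arc. Hence, traversing the boundary of $F_u$ in a fixed rotational sense, the dual of an arc entering $u$ is traversed in one direction and the dual of an arc leaving $u$ in the other. I would check this with the convention fixed in the preliminaries, using Figure~\ref{fig:dual} as a sanity test. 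Consequently the direction changes along $\partial F_u$ are exactly the incoming/outgoing alternations at $u$, and step (i) of \Cref{rule:boundarypaths} cuts $\partial F_u$ into exactly $\sigma(u)$ maximal directed paths. Here $\sigma(u)\ge2$ because $u$ has both incoming and outgoing arcs, which is also consistent with the acyclicity of $H$ noted in \Cref{lem:segments}.

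For (i), step (ii) of the rule leaves at most one reverse arc per path, so $|R_u|\le\sigma(u)$. For an ordinary vertex, $\sigma(u)=2$ and the two paths are the two directed halves of the boundary. Both run between the same two turning points: the boundary vertex where both paths start (a source corner) and the one where both end. They therefore produce the same arc $t\to s$, and deduplication leaves a single arc. The one point to verify is that the two turning points are distinct, which holds because otherwise the boundary would be a directed closed walk, contradicting acyclicity of $H$. The bound $\sum_{u\text{ special}}|R_u|\le\sum_{u\text{ special}}\sigma(u)=O(k^3)$ then follows directly from \Cref{lem:switches}. For several strongly connected components I would sum, since the groups are built per component.

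For (ii), take a base arc of $H$. It is the dual of some original arc $u\to v$, so it lies on $\partial F_u$ (and also on $\partial F_v$). Within $\partial F_u$ it belongs to exactly one maximal directed path $s\reach\cdots\reach t$ from step (i). The reverse arc $t\to s$ that replaced that path's reversals lies in $R_u$, giving the required path $s\reach a\to b\reach t$ through the base arc $a\to b$. For the final claim, take $X=V(D)$. Then $D-X$ is empty and hence acyclic, so by \Cref{lem:dual} adding all original groups makes $H$ strongly connected. By \Cref{lem:segments} the reduced groups give the same reachability for the same selection, so adding all reduced groups also makes $H$ strongly connected, separately in each component.

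The main obstacle is the first step: carefully justifying that direction changes along $\partial F_u$ correspond bijectively to alternations at $u$. This is delicate when the boundary walk of $F_u$ repeats dual vertices, for example when faces of $D$ touch $u$ several times. I would handle it by working with edge-ends and angles rather than vertices, so that repetitions are counted with multiplicity and the count of maximal paths is exactly $\sigma(u)$ regardless of repetitions.
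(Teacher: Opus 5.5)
Your proposal is correct and follows the paper's proof. Direction changes along $\partial F_u$ match the alternations at $u$, giving $\sigma(u)$ directed paths with one reverse arc each; for ordinary vertices the two paths share their endpoints, so they give a single arc. The special-group total is bounded by \Cref{lem:switches}, and every base arc lies on one of the split paths.

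The only deviation is the final strong-connectivity claim. The paper derives it directly from the first half of (ii): the endpoints of every base arc become mutually reachable, and $H$ is weakly connected. You instead apply \Cref{lem:dual} with $X=V(D)$ and then \Cref{lem:segments}. Both routes are valid.
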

\begin{proof}
Direction changes along the face boundary correspond to changes
between incoming and outgoing arcs around the original vertex
$v$, so the boundary is split into $\sigma(v)$ directed paths.
For an ordinary vertex, the two paths have the same start and
end, producing the same reverse arc, so only one arc remains
in the group. For a special vertex, each path produces at most
one reverse arc, so the group contains at most $\sigma(v)$ arcs.
By \Cref{lem:switches}, the total number of arcs in special
groups is $O(k^3)$, proving (i).

Every base arc $u\to v$ belongs to a directed path
$s\reach u\to v\reach t$ obtained by the split, and the
corresponding reverse arc $t\to s$ is retained in some group.
Adding this reverse arc makes the vertices of the path mutually
reachable. Thus, after all groups are added, the endpoints of
every base arc are mutually reachable. Since $H$ is weakly
connected, it is then strongly connected, proving (ii).
\end{proof}

\subsection{Retained Vertices}
The previous subsection reduced the group sizes, but $H$ may
still contain many vertices. 
Our algorithm will merge some vertices to reduce the size. Before doing this, we first identify some retained vertices that will not participate in the compression.

In this subsection, $H$ denotes the disjoint union of the dual
graphs of all remaining original components. Let $W$ consist
of all sources and sinks of $H$,
together with all endpoints of reverse arcs in special groups.
We call these vertices \emph{retained vertices}. The compression
keeps these vertices unchanged and only merges vertices outside
$W$.

\begin{lemma}\label{lem:retained}
The size of $W$ and the total number of arcs in special
groups are both $O(k^3)$. Moreover, for every vertex $v$, there
is a path from a retained vertex to $v$ and a path from $v$ to
a retained vertex, both using base arcs alone.
\end{lemma}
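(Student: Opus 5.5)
We bound the two parts of $W$ separately, then prove the reachability claim using acyclicity of $H$.

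\textbf{Special-group endpoints.} By \Cref{lem:group-properties}(i), special groups contain $O(k^3)$ arcs in total. Each arc has two endpoints, so these endpoints contribute $O(k^3)$ vertices to $W$. This also gives the second claimed bound directly.

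\textbf{Sources and sinks.} Fix a remaining original component $D$ and its dual $H_D$. A vertex of $H_D$ is a face $f$ of $D$. The dual arc of an original arc $e$ points from the right-hand face of $e$ to its left-hand face.

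Suppose $f$ is a source of $H_D$. Then every arc on the boundary of $f$ has $f$ on its right-hand side; an arc with $f$ on both sides would give a loop, which is excluded. So, as we walk the boundary of $f$ in the fixed orientation, every arc is traversed in its own direction. Hence the boundary is a closed directed walk, and $f$ is a directed face. The sink case is symmetric: every boundary arc has $f$ on its left, so traversing the boundary in the opposite orientation follows all arc directions.

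This orientation bookkeeping is the step I expect to be most delicate. The points to check are boundaries that revisit vertices, and pairs of opposite arcs $x\to y$, $y\to x$ bounding a digon face; both still fit the definition of a directed face. It also uses the fact that each face boundary of the connected plane graph $D$ is a single closed walk.

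Thus the number of sources and sinks of $H_D$ is at most $F_D$. Summing over components and using \eqref{eq:faces} gives at most $F=O(k^3)$ sources and sinks. Combined with the previous paragraph, $|W|=O(k^3)$.

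\textbf{Reachability.} Each original component $D$ is strongly connected, so it has no directed cut. By cycle--cut duality, as in the proof of \Cref{lem:segments}, its dual $H_D$ is acyclic. In a finite acyclic digraph, every vertex is reachable from a source and can reach a sink. Both paths use base arcs only. Sources and sinks of $H$ lie in $W$, which gives the required paths from and to retained vertices.
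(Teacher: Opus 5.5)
Your proof is correct and follows the paper's route. You bound the special-group endpoints via \Cref{lem:group-properties}(i), bound the sources and sinks by the number of directed faces via \eqref{eq:faces}, and derive the reachability claim from acyclicity of the dual. The one difference is that you verify the inclusion ``source or sink of the dual $\Rightarrow$ directed face'' from the right-to-left convention for dual arcs, while the paper only asserts that the two correspond exactly; this inclusion is all the bound needs.
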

\begin{proof}
Directed faces of the original graph correspond exactly to
sources and sinks of the dual graph. By \eqref{eq:faces},
there are $F=O(k^3)$ such vertices in total.
By \Cref{lem:group-properties}(i), the total number of arcs
in special groups is $O(k^3)$.
Each arc contributes at most two endpoints, giving
$|W|=O(k^3)$.

Since $H$ is acyclic, every vertex is reachable from
a source and reaches a sink. These paths use only base arcs,
and all sources and sinks belong to $W$.
\end{proof}
The next section uses distances from $W$ to merge some vertices
outside $W$. The main difficulty is to prove that this merging
preserves equivalence.

\section{Distance-Based Compression}
\label{sec:compression}
Our goal is now to compress the dual instance.
Although we have found a small retained set $W$, the graph
may still contain arbitrarily many vertices outside $W$.
We reduce their number by merging vertices whose distances
from $W$ agree up to a suitable threshold.
The main challenge is to prove that this operation is safe.

The proof is guided by directed-cut lower bounds on the cost
of achieving strong connectivity. A suitable family of directed
cuts gives separate requirements for adding reverse arcs:
every cut needs an entering arc, and each
individual reverse arc can satisfy at most one
requirement. We will represent these
cuts by the edges of a tree and then change which vertices
lie on each side. The distances from $W$ will ensure that
vertices to be merged lie on the same side of every cut,
while the separate entering-arc requirements are preserved.

We develop this argument in four steps.
Section~\ref{sec:distance-rule} defines the distances used for
merging, states the reduction rule, and bounds the number of
vertices that remain. Section~\ref{sec:cuttree-proof} proves
the general cut-tree representation for reverse-arc augmentation
in an arbitrary weakly connected digraph.
Section~\ref{sec:move-proof} uses the distances to move vertices
to compatible tree positions.
Finally, Section~\ref{sec:safety-overview} applies these two
structural results to prove the safety of the reduction.

\subsection{Distance Records and Their Number}
\label{sec:distance-rule}
\label{sec:balls}
\label{sec:profile-proof}
We first specify which vertices may be merged and how the
arcs and groups change under this operation. We then prove
that only polynomially many vertices remain. The definition
uses distances from the retained set $W$; planarity supplies
the bound on the number of distinct distance records.

We work separately in each weakly connected component. Write $H$
for its base digraph and $W\subseteq V(H)$ for its retained set.
The arcs of $H$ are the base arcs; selectable reverse arcs are
specified by the groups and are added only when a group is chosen.
The graph $H$ is acyclic. By \Cref{lem:retained}, $W$ contains
all sources and sinks and all endpoints of special-group arcs,
and every vertex is reachable from, and can reach, a vertex of
$W$ using base arcs alone.

An \emph{available reverse arc} is an arc occurring in at least
one group, whether or not that group has been selected.
Let $\mathcal R$ be the set of all such arcs in this component.
By \Cref{rule:boundarypaths}, every available reverse arc
$v\to u$ has a corresponding base path $u\reach v$.
It may therefore reverse an entire path, rather than a single
base arc. We will distinguish the number of individual arcs
added from the number of groups selected: selecting one group
adds all of its arcs at unit cost.

Construct an auxiliary graph $L$ on $V(H)$ by taking all base arcs
and all reverse arcs in ordinary groups. Give base arcs length
zero and ordinary reverse arcs length one. No further reverse arcs
are added from special groups. Thus $L$ is fixed independently of any
chosen solution, and the shortest-path distance $d_L(a,v)$ counts
the minimum number of ordinary reverse arcs needed to reach $v$
from $a$. We use $d_L(a,v)=\infty$ when no such path exists.
The graph $L$ is planar: each ordinary reverse arc can be drawn
inside the dual face corresponding to its group.

Order the retained vertices as $w_1,\ldots,w_{|W|}$. For each
$v\notin W$, define its \emph{distance record} by
\[
\theta(v)=\bigl(\theta_1(v),\ldots,\theta_{|W|}(v)\bigr),
\qquad
\theta_i(v)=\min\{d_L(w_i,v),k+1\},
\]
where $\min\{\infty,k+1\}=k+1$. The record distinguishes every
distance at most $k$ and uses one value for all larger distances
and for nonreachability. Only vertices outside $W$ are identified.

These records define the \emph{identification classes}: vertices
outside $W$ belong to the same class exactly when their records
are equal, and each vertex of $W$ forms a singleton class.
Replace every class by one vertex, and replace the endpoints
of each arc by the vertices representing their classes.
We call the resulting graph the \emph{quotient graph}, denoted
by $\overline H$, and write
$\pi:V(H)\to V(\overline H)$ for the map sending a vertex to
its class. For $X\subseteq V(H)$, write
$\pi(X)=\{\pi(v):v\in X\}$; the image of an arc $u\to v$
is $\pi(u)\to\pi(v)$. Each group initially consists of the
images of its reverse arcs. The rule below removes redundant
arcs and groups; write $\overline{\mathcal G}$ for the resulting
indexed group family.

\begin{ruleenv}\label{rule:merge}
Form the quotient by the identification classes defined above,
preserving the grouping of reverse arcs, and normalize it as follows.
\begin{enumerate}[label=(\roman*)]
\item Delete base loops and keep one copy of each remaining base arc.
\item From every group, delete loops, arcs already present as
base arcs, and repeated copies of an arc. Discard empty groups.
\item Keep one ordinary group for each distinct remaining reverse
arc. Preserve the identities of the remaining special groups.
\end{enumerate}
Leave $k$ unchanged.
\end{ruleenv}

The normalization changes no reachability relation: loops and
repeated arcs add no reachability, arcs already present as base
arcs need not be selected, and empty groups are never needed.
Duplicate ordinary groups can be represented by a single group
at the same cost. Thus all vertices with a common record outside
$W$ are merged, while the retained vertices remain separate. The quotient need
not be planar. Although a record has $|W|$ entries with $k+2$
possible values each, planarity of $L$ gives a polynomial bound
on the number of records that actually occur.

For a digraph with nonnegative arc lengths, let $d(u,v)$
be the directed shortest-path distance, with value $\infty$ if
$v$ is unreachable from $u$. The incoming ball of radius $r$
centered at $v$ is $B^-(v,r)=\{u:d(u,v)\leq r\}$.
We use the following consequence of the directed-ball theorem
and the Sauer bound~\cite{LeWulffNilsen2024,Sauer1972}.

\begin{theorem}\label{thm:balls}
In a planar digraph with nonnegative arc lengths, fix a set $S$
of marked vertices. The number of distinct sets $B^-(v,r)\cap S$,
over all centers $v$ and radii $r\geq0$, is $O((|S|+1)^4)$.
\end{theorem}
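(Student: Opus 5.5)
The proof plan is to derive \Cref{thm:balls} from the directed-ball theorem of Le and Wulff-Nilsen combined with the Sauer--Shelah lemma. Consider the set system on ground set $S$ given by
\[
\mathcal B=\{B^-(v,r)\cap S : v\in V,\ r\geq 0\}.
\]
The result of~\cite{LeWulffNilsen2024} says that the family of (incoming) directed balls in a planar digraph with nonnegative arc lengths has VC-dimension at most $4$. The first step is to confirm that their theorem covers arbitrary nonnegative lengths and incoming balls. Incoming balls of $D$ are outgoing balls of the reversed digraph, which is still planar. Zero-length arcs are handled either by the theorem as stated or by a perturbation: replace each zero length by a tiny $\varepsilon$ that does not change which vertices of $S$ lie in any ball for the finitely many relevant radii.

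Next, restricting a set system to a subset $S$ cannot increase its VC-dimension, since any set shattered by $\mathcal B$ is a subset of $S$ shattered by the original balls. The Sauer--Shelah bound then gives
\[
|\mathcal B|\leq \sum_{i=0}^{4}\binom{|S|}{i}=O((|S|+1)^4).
\]
The $+1$ only covers the degenerate case $S=\emptyset$, where the bound is $1$.

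One technical point needs checking. Radii range over all reals $r\geq0$, but for a fixed center $v$ the set $B^-(v,r)\cap S$ changes only at the finitely many values $d(s,v)$ with $s\in S$. So the family is finite and consists precisely of traces of genuine balls, to which the VC bound applies unchanged. The case $d=\infty$ is excluded from every ball automatically.

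The main obstacle is matching hypotheses, namely the exact VC-dimension constant (it must be $4$ to obtain exponent $4$) and the treatment of zero lengths and directions. I would state these as a direct citation, with the perturbation argument as a fallback if the cited version requires positive lengths.
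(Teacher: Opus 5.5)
Your proposal is correct and follows the paper's proof essentially step for step. The paper also cites Le and Wulff-Nilsen for the fact that no five vertices are shattered by outgoing balls, with zero arc lengths allowed. It reverses all arcs to pass to incoming balls and applies the Sauer bound $\sum_{i\le 4}\binom{|S|}{i}=O((|S|+1)^4)$. Your extra remarks (restricting to $S$, the finitely many relevant radii, and the perturbation fallback) are sound additional detail. The paper does not need them, because it takes the cited theorem to cover nonnegative lengths directly.
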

\begin{proof}
Let $G$ denote the given digraph. An outgoing ball consists of
the vertices reachable from its
center within a given distance. Applied to planar digraphs,
the directed-ball theorem of Le and Wulff-Nilsen~\cite{LeWulffNilsen2024}
states that no set of five vertices has all of its subsets
realized as intersections with outgoing balls. The theorem
allows nonnegative arc lengths, including zero.
Reversing every arc preserves planarity and turns each incoming
ball into an outgoing ball, so the same statement holds for
incoming balls. The Sauer bound~\cite{Sauer1972} then gives
\[
\bigl|\{B^-(v,r)\cap S:v\in V(G),\ r\geq0\}\bigr|
\leq \sum_{i=0}^{\min\{4,|S|\}}\binom{|S|}{i}
=O((|S|+1)^4).
\]
\end{proof}

The next construction encodes an entire distance record as one
such intersection. For each retained vertex, marked leaves test
the distance thresholds $0,\ldots,k$ simultaneously.

\begin{lemma}\label{lem:profiles}
In a planar digraph with arc lengths zero or one, fix a vertex
set $W$. Form each vertex's distance record using distances
from the vertices of $W$, truncated at $k+1$. The number of distinct
records is
\[
O\bigl((|W|(k+1)+1)^4\bigr).
\]
\end{lemma}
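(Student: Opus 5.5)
We reduce \Cref{lem:profiles} to \Cref{thm:balls} by adding gadgets that turn each threshold test into membership in an incoming ball. The distances run from a retained vertex $w_i$ to a vertex $v$. So the test ``$d(w_i,v)\leq j$'' asks whether $w_i$ lies in the incoming ball $B^-(v,j)$. A single ball of radius $r$ around $v$, however, tests every $w_i$ at the same threshold $r$. To read all thresholds $0,\ldots,k$ from one ball, we shift the distances with pendant paths.

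\textbf{Construction.} Let $G$ be the planar digraph, and fix the radius $r=k$. For each $w_i\in W$ and each $j\in\{0,\ldots,k\}$, add a new marked vertex $x_{i,j}$ and an arc $x_{i,j}\to w_i$ of length $k-j$. Pendant leaves attached by a single arc preserve planarity, and the new lengths are still nonnegative integers, which \Cref{thm:balls} allows. No shortest path between original vertices passes through a leaf $x_{i,j}$, since a leaf has no incoming arc. Hence, for every original vertex $v$,
\[
d(x_{i,j},v)=(k-j)+d_G(w_i,v).
\]
It follows that $x_{i,j}\in B^-(v,k)$ if and only if $d_G(w_i,v)\leq j$. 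Let $S=\{x_{i,j}\}$, so $|S|=|W|(k+1)$.

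\textbf{Records are determined by one intersection.} The set $B^-(v,k)\cap S$ determines the record $\theta(v)$. For each $i$, the smallest $j$ with $x_{i,j}$ in the set equals $d_G(w_i,v)$ whenever this distance is at most $k$. If no such $j$ exists, the distance exceeds $k$ or is infinite, and $\theta_i(v)=k+1$. Conversely, the record also determines the intersection. So the number of distinct records is at most the number of distinct sets $B^-(v,r)\cap S$. By \Cref{thm:balls}, this number is $O((|S|+1)^4)=O((|W|(k+1)+1)^4)$.

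\textbf{Main point to check.} The lengths of $G$ are only zero or one, but the gadget arcs need length $k-j$. I will rely on \Cref{thm:balls} allowing arbitrary nonnegative lengths. If only $0/1$ lengths were allowed, I would instead subdivide each gadget arc into $k-j$ unit arcs, using fresh unmarked vertices. This keeps the graph planar and does not affect distances between original vertices.
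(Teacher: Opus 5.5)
Your proposal is correct and follows the paper's proof essentially step for step. The paper also attaches to each $a\in W$ marked leaves $a_0,\ldots,a_k$ with arcs $a_j\to a$ of length $k-j$, observes that $a_j\in B^-(v,k)$ if and only if $d_L(a,v)\leq j$, reads the record off the intersection, and applies \Cref{thm:balls} to the $|W|(k+1)$ marked vertices; your subdivision fallback is unnecessary, since \Cref{thm:balls} already allows arbitrary nonnegative lengths.
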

\begin{proof}
Write $L$ for the given graph. As shown in Figure~\ref{fig:leaves},
for each $a\in W$, add new vertices $a_0,\ldots,a_k$,
each $a_j$ incident only with an arc directed to $a$ of length
$k-j$. These additions preserve planarity and do not change
distances between original vertices. Mark all $|W|(k+1)$ new
vertices.

\begin{figure}[htbp]
\centering
\begin{tikzpicture}
\node[v] (a) at (2.1,0) {$a$};\node[v] (v) at (5.2,0) {$v$};
\node[keep] (a0) at (0,1.15) {$a_0$};\node[keep] (aj) at (0,0) {$a_j$};\node[keep] (ak) at (0,-1.15) {$a_k$};
\draw[arr] (a0)--node[above] {$k$}(a);
\draw[arr] (aj)--node[above] {$k-j$}(a);
\draw[arr] (ak)--node[below] {$0$}(a);
\draw[arr] (a)--node[above] {$d_L(a,v)$}(v);
\node[align=center] at (3,-2.05) {A leaf $a_j$ belongs to the radius-$k$ incoming ball\\exactly when $d_L(a,v)\leq j$.};
\end{tikzpicture}
\caption{The distance from $a_j$ to $a$ is $k-j$. Thus $a_j$ can reach $v$ within length $k$ exactly when $a$ can reach $v$ within length $j$.}
\label{fig:leaves}
\end{figure}

For any original vertex $v$, consider the incoming ball
$B^-(v,k)$, the set of vertices that can reach $v$ within length
$k$. Starting from $a_j$, reaching $a$ uses length $k-j$,
leaving length $j$ for the rest of the path. Hence
\begin{equation}\label{eq:leaves}
a_j\in B^-(v,k)\quad\Longleftrightarrow\quad
(k-j)+d_L(a,v)\leq k\quad\Longleftrightarrow\quad d_L(a,v)\leq j.
\end{equation}

If the distance from $a$ to $v$ is $r\leq k$, then among the
marked vertices attached to $a$, exactly $a_r,a_{r+1},\ldots,a_k$
belong to the ball. The smallest index therefore gives the
distance $r$. If none belongs to the ball, the corresponding
vector entry is $k+1$.

The corresponding vector entry can be read this way for every
$a\in W$, so the marked vertices in the ball determine the
entire distance vector. Different distance vectors correspond
to different intersections.
There are $|W|(k+1)$ marked vertices, and \Cref{thm:balls} bounds
the number of intersections by $O((|W|(k+1)+1)^4)$, proving
the claim.
\end{proof}

By \Cref{lem:profiles,lem:retained}, the quotient satisfies
\begin{equation}\label{eq:quotientorder}
|V(\overline H)|
\leq |W|+O\bigl((|W|(k+1)+1)^4\bigr)=O(k^{16}).
\end{equation}
This establishes the size bound for the proposed identification.
To prove safety, we must show that the quotient cannot admit
a solution within the budget when the original instance has none.
We begin by describing the cut obstruction that such a solution
would have to overcome.

\subsection{Representing an Augmentation Obstruction by a Tree}
\label{sec:cuttree-proof}
We first establish a structural fact about arbitrary weakly
connected digraphs. The aim is to represent a lower bound on
the number of added reverse arcs by a tree: each tree edge
will represent one directed cut, and each reverse arc will
cross at most one such edge backwards. This representation
will allow us to adjust the cuts in the next subsection.
No group structure is needed for the result proved here.

Let $J$ be a weakly connected digraph. In this subsection,
we may add the reverse $v\to u$ of any arc $u\to v$ of $J$,
while retaining the original arc. Let $\rho(J)$ be the minimum
number of such additions needed to make $J$ strongly connected.
This minimum is finite, since adding the reverse of every arc
makes a weakly connected digraph strongly connected.
Recall that a \emph{source shore} is a nonempty proper vertex
set with no entering arc. Strong connectivity requires an
entering arc for every source shore of $J$.

\begin{lemma}\label{lem:cuttree}
Let $J$ be a weakly connected digraph, and let $r$ be an integer
with $1\leq r\leq\rho(J)$. There exist a directed tree $T$
with $r$ edges and a surjective map $f:V(J)\to V(T)$ such that:
\begin{enumerate}[label=(\roman*)]
\item For every arc $u\to v$ of $J$, either $f(u)=f(v)$ or
      $f(u)\to f(v)$ is an edge of $T$.
\item For every edge $e=x\to y$ of $T$, let $C_e$ be the
      component of the underlying undirected tree $T-e$
      containing $x$. Then $f^{-1}(C_e)$ is a source shore
      of $J$.
\item All vertices of any strongly connected component of $J$
      have the same image under $f$.
\end{enumerate}
\end{lemma}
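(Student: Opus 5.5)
The plan is to turn $\rho(J)$ into a packing of directed cuts via \Cref{thm:ly}, uncross the packing into a cross-free family, and read off the tree from that family.

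\textbf{Step 1: $\rho(J)$ equals the maximum number of pairwise arc-disjoint directed cuts.} Adding the reverse of an arc $a$ destroys exactly those directed cuts that contain $a$. Conversely, a set $M$ of arcs meeting every directed cut yields, after reversal, a weakly connected digraph with no directed cut, which is strongly connected by the facts in Section~\ref{sec:cuts}. Hence $\rho(J)$ is the minimum size of an arc set meeting all directed cuts. By \Cref{thm:ly}, this equals the maximum number of pairwise arc-disjoint directed cuts. Fix such a family of $\rho(J)\geq r$ cuts and record each cut by its source shore $U_1,U_2,\ldots$. Since $J$ is weakly connected, every source shore gives a nonempty cut. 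Two arc-disjoint nonempty cuts are therefore distinct, and so are their shores.

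\textbf{Step 2: uncross to a cross-free family.} Call $U,V$ \emph{crossing} if $U\cap V$, $U\setminus V$, $V\setminus U$ and $V(J)\setminus(U\cup V)$ are all nonempty. In that case $U\cap V$ and $U\cup V$ are nonempty proper sets with no entering arc, so both are source shores. Counting multiplicities, every arc lies in at most as many of $\delta(U\cap V),\delta(U\cup V)$ as of $\delta(U),\delta(V)$. So replacing $U,V$ by $U\cap V,U\cup V$ keeps the family pairwise arc-disjoint and of the same size. The quantity $\sum|U_i|^2$ strictly increases under each replacement, so the process terminates in a cross-free family of $\rho(J)$ source shores. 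Discard shores until exactly $r$ remain; the family stays cross-free.

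\textbf{Step 3: build the tree and verify (i)--(iii).} Apply the standard tree representation of a cross-free family. This gives a tree $T$ with one edge $e$ per shore $U_e$, and a map $f$ with $U_e=f^{-1}(C_e)$ for the appropriate component $C_e$ of $T-e$. Orient $e$ away from $C_e$; this gives (ii).

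For (i), take an arc $u\to v$ with $f(u)\neq f(v)$. Every edge on the tree path between $f(u)$ and $f(v)$ has a shore separating $u$ from $v$. Because no arc enters a source shore, $u$ lies in that shore and $u\to v$ belongs to that cut. Arc-disjointness then forces the path to have length one, and the orientation is $f(u)\to f(v)$.

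For (iii), no source shore can split a strongly connected component: a component vertex outside the shore could not reach one inside it. So each component lies inside or outside every $U_e$, and all its vertices have the same image.

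\textbf{Main obstacle: surjectivity of $f$.} The textbook tree representation may contain nodes with empty preimage. Leaves are fine, because their preimages are the shores or complements of the corresponding edges, which are nonempty. The problem is internal nodes. I would first try to show that if a node $z$ had empty preimage, then the shores of the edges at $z$ would be arc-disjoint sets whose union is the whole vertex set in a way that contradicts weak connectivity or the source-shore property. If this fails, I would instead modify the choice of the $r$ discarded shores, or recombine shores around $z$ by intersection and union as in Step 2, until every node is nonempty while keeping exactly $r$ edges.
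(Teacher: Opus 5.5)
Your Steps 1--3 follow the paper's route: a Lucchesi--Younger packing, uncrossing via $A\cap B$ and $A\cup B$ with the potential $\sum|U_i|^2$, and a tree built from the resulting cross-free family. Your verifications of (i)--(iii) are correct.

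\textbf{The gap.} It is the one you flagged: surjectivity of $f$ is not proved, and you give only a list of things you would try. An argument is genuinely needed. The general tree representation of a cross-free family can have internal nodes with empty preimage; for instance, the star representing $\{\{1\},\{2\},\{3\}\}$ on $\{1,2,3\}$ has an empty centre. So surjectivity must come from the source-shore and arc-disjointness properties.

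\textbf{The fix.} Your first idea works, and it follows directly from your own proof of (i).
\begin{enumerate}
\item Suppose $z$ has empty preimage. You already observed that $z$ is not a leaf, so $T-z$ has components $B_1,\ldots,B_d$ with $d\geq2$.
\item For the edge $e_i$ joining $z$ to $B_i$, the set $f^{-1}(B_i)$ is $U_{e_i}$ or its complement. Hence each $f^{-1}(B_i)$ is nonempty, and these sets partition $V(J)$.
\item Weak connectivity gives an arc $u\to v$ with $f(u)\in B_i$ and $f(v)\in B_j$ for some $i\neq j$. The tree path from $f(u)$ to $f(v)$ passes through $z$ and contains both $e_i$ and $e_j$.
\item Your argument for (i) shows that an arc leaves the shore of every edge on the path between its endpoint images, and leaves at most one shore. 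So that path has at most one edge, a contradiction.
\end{enumerate}
This gives surjectivity; no re-selection or recombination of shores is needed.

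\textbf{Comparison with the paper.} The paper avoids the issue by construction. It takes the nodes of $T$ to be the membership vectors $(\mathbf 1_{S_1}(v),\ldots,\mathbf 1_{S_r}(v))$ actually realized by vertices, and joins $f(u)$ and $f(v)$ for every arc. It then proves directly that this position graph is a tree with exactly one edge per shore: each arc changes exactly one coordinate, and two position edges changing the same coordinate would exhibit a crossing pair. Surjectivity is then immediate, and the general tree-representation theorem is not used. Your route is sound once the argument above is added, at the cost of relying on that external theorem.
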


Thus an arc of $J$ stays at one tree position or moves one
edge forwards, and adding its reverse moves at most one edge
backwards. It can therefore enter at most one of the $r$
source shores in (ii). The map assigns sets of vertices to
tree positions; these sets need not be strongly connected.
The tree need not be directed away from a single root.
Two shores $A,B$ \emph{cross} if all four sets
$A\cap B$, $A\setminus B$, $B\setminus A$, and
$V(J)\setminus(A\cup B)$ are nonempty. They are \emph{noncrossing} otherwise.
Figure~\ref{fig:shoretree} illustrates how tree edges represent
source shores that may overlap without crossing.

\begin{figure}[htbp]
\centering
\begin{tikzpicture}
\node[v] (p0) at (0,0) {$p_0$};
\node[v] (p1) at (-1.8,1.2) {$p_1$};
\node[v] (p2) at (1.8,1.2) {$p_2$};
\node[v] (p3) at (0,-1.6) {$p_3$};
\draw[arr] (p0)--node[midway,above right=3pt] {$A$}(p1);
\draw[arr] (p0)--node[midway,above left=3pt] {$B$}(p2);
\draw[arr] (p0)--node[right] {$C$}(p3);
\node[align=left,anchor=west] at (3,0)
{$A=f^{-1}(\{p_0,p_2,p_3\})$\\[5pt]
 $B=f^{-1}(\{p_0,p_1,p_3\})$\\[5pt]
 $C=f^{-1}(\{p_0,p_1,p_2\})$};
\end{tikzpicture}
\caption{Three overlapping, noncrossing source shores represented
by a tree. Deleting the edge labeled $A$ separates $p_1$ from
the positions whose inverse image is $A$; the other labels
have the same interpretation. A reverse step across one edge
enters only the shore represented by that edge.}
\label{fig:shoretree}
\end{figure}

\begin{proof}
\proofstep{Finding a family of cuts.}
A set of added reverse arcs makes $J$ strongly connected
if and only if the corresponding original arcs meet every
directed cut of $J$. Indeed, the reverse of an arc leaving
a source shore enters that shore. Every source shore of the
augmented graph would also be a source shore of $J$, so
supplying an entering arc to every original source shore
eliminates all directed cuts.

Apply \Cref{thm:ly} to $J$.
It follows that $\rho(J)$ is the maximum number of pairwise
arc-disjoint directed cuts. Choose $r$ such cuts, and let
$S_1,\ldots,S_r$ be their source shores. No arc of $J$ enters
any $S_i$, and each arc leaves at most one of them.

\proofstep{Making the shores noncrossing.}
Whenever two source shores $A,B$ cross, replace them by
$A\cap B$ and $A\cup B$. These are again nonempty
proper source shores.

Write $\mathbf1_S(x)=1$ if $x\in S$ and $0$ otherwise.
For any source shore $S$ and arc $u\to v$ of $J$, the
difference $\mathbf1_S(u)-\mathbf1_S(v)$ is zero or one,
and is one exactly when the arc leaves $S$. The identity
\[
\mathbf1_A(x)+\mathbf1_B(x)
=\mathbf1_{A\cap B}(x)+\mathbf1_{A\cup B}(x)
\]
at the two endpoints shows that the replacement preserves
the number of shores left by every arc. In particular,
each arc still leaves at most one shore.

Each replacement increases the sum of squared shore sizes
by $2|A\setminus B|\,|B\setminus A|>0$. This integer
potential is bounded by $r|V(J)|^2$, so the process ends
with $r$ pairwise noncrossing shores. They remain distinct:
weak connectivity ensures that every nonempty proper source
shore has a leaving arc, and a repeated shore would make
that arc leave two members of the family. Noncrossing does
not require the shores to be disjoint or nested; two shores
may overlap and have union $V(J)$.

\proofstep{Constructing the tree.}
Assign each vertex $v$ its membership vector
\[
\bigl(\mathbf1_{S_1}(v),\ldots,\mathbf1_{S_r}(v)\bigr).
\]
Each distinct vector is a \emph{position}, and $f(v)$ is
the position of $v$. Form an undirected position graph by
joining $f(u)$ and $f(v)$ for each arc $u\to v$ of $J$,
omitting loops and repeated edges. This graph is connected
because $J$ is weakly connected.

An arc of $J$ enters no shore and leaves at most one.
Consequently, every position edge changes exactly one
membership coordinate. Every shore has a leaving arc,
so every shore coordinate is changed by some position edge.
There is only one position edge for each coordinate.
To see this, suppose two distinct edges both change the
coordinate for a shore $A$. Their endpoint vectors differ
in some other coordinate, corresponding to a shore $B$.
This coordinate is constant on each edge and has different
values on the two edges. The four endpoints therefore
realize all four membership combinations for $A$ and $B$,
contradicting that the shores are noncrossing.

The position graph thus has exactly one edge for each shore.
That edge is the only edge between the positions inside
the shore and those outside, so it is a bridge. A connected
graph in which every edge is a bridge is a tree. Denote this
tree by $T$; it has $r$ edges. Orient the edge corresponding
to $S_i$ from positions inside $S_i$ to positions outside it.
Every arc of $J$ joining distinct positions follows this
orientation, proving (i). The map $f$ is surjective because
every position is realized by a vertex.

\proofstep{Recovering the shores and strongly connected components.}
Deleting the edge corresponding to $S_i$ separates precisely
the positions inside $S_i$ from those outside it. The tail
component therefore has inverse image $S_i$, proving (ii).
Finally, a directed path of $J$ maps to a walk that follows
the directions of $T$, allowing steps that stay at a position.
Vertices in the same strongly connected component have
mutually reachable images. A directed tree has no directed
cycle, so these images must coincide. This proves (iii).
\end{proof}

The lemma gives a tree whose edges encode separate entering-arc
requirements. When we apply this representation to our component,
vertices with the same distance record may still occupy different
positions. The next subsection constructs a new map that respects
these records while retaining the properties needed to preserve
the cut lower bound.

\subsection{Making the Cuts Compatible with Identification}
\label{sec:move-proof}
We now return to the component $H$, the retained set $W$, and
the distances defined in Section~\ref{sec:distance-rule}.
A tree representation may place vertices with the same record
at different nodes. Our task is to move them to a common node,
while fixing the retained vertices and keeping the restriction
that each reverse arc moves at most one tree edge backwards.
This will allow the cuts to pass through the merging operation.

The following lemma states exactly what we need from the
initial tree representation. Section~\ref{sec:safety-overview}
will obtain such a representation from \Cref{lem:cuttree}.
Here we prove how to modify it using only the distance records.

\begin{lemma}\label{lem:move}
Let $H,W,\mathcal R,L$, and $\theta$ be as in
Section~\ref{sec:distance-rule}, with the structural properties
established in \Cref{lem:group-properties,lem:retained}.
In particular, every available reverse arc closes a base path,
and every base arc lies on a base path whose reverse belongs
to $\mathcal R$.
Let $T$ be a directed tree with at most $k+1$ edges, and let
$f:V(H)\to V(T)$ be a map with the following properties:
for every base arc $u\to v$ of $H$, either $f(u)=f(v)$ or
$f(u)\to f(v)$ is an edge of $T$; and for every reverse arc
$v\to u\in\mathcal R$, either $f(u)=f(v)$ or
$f(u)\to f(v)$ is an edge of $T$.
There is a map $v\mapsto m_v\in V(T)$ such that:
\begin{enumerate}[label=(\roman*)]
\item $m_w=f(w)$ for every $w\in W$.
\item If $u,v\notin W$ and $\theta(u)=\theta(v)$, then
      $m_u=m_v$.
\item For every base arc $u\to v$, either $m_u=m_v$ or
      $m_u\to m_v$ is an edge of $T$. For every reverse arc
      $v\to u\in\mathcal R$, either $m_v=m_u$ or
      $m_u\to m_v$ is an edge of $T$.
\end{enumerate}
\end{lemma}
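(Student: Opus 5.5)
The plan is to define the new position of every vertex edge by edge. For each edge of $T$ we decide, using only truncated distances from $W$, whether $v$ lies on its tail side or its head side. We then check that these choices select a single node and satisfy (i)--(iii).

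\textbf{Setup.} Lengths in $L$ interact with $f$ as follows. A base arc never moves backwards along a tree edge. An ordinary reverse arc $v\to u$ has length one and moves backwards along at most one edge, namely the edge $f(u)\to f(v)$. A special reverse arc has both endpoints in $W$ and is not in $L$. For a tree edge $e=x\to y$ and a node $p$ on the head side of $e$, let $\beta(p,e)$ be the number of edges traversed against their orientation on the tree path from $p$ to $x$. This count includes $e$ itself, so $1\le\beta(p,e)\le k+1$. Any walk in $T$ from $p$ to the tail side of $e$ must pass through $x$. It therefore traverses each of these backward edges against its orientation at least once.

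\textbf{Definition of $m$.} Put $v$ on the head side of $e$ if and only if some $w\in W$ with $f(w)$ on the head side of $e$ satisfies $d_L(w,v)<\beta(f(w),e)$. Since $\beta\le k+1$, the comparison $d_L(w,v)<\beta$ is equivalent to $\theta$-entry${}<\beta$. Hence the definition depends only on $\theta(v)$, which gives (ii) once we know the sides determine a node $m_v$.

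For (i), take $w\in W$. If $f(w)$ is on the head side, then $w$ witnesses itself with distance $0<\beta$. If $f(w)$ is on the tail side, we need $d_L(w',w)\ge\beta(f(w'),e)$ for every head-side $w'$. To see this, take a shortest path in $L$ from $w'$ to $w$ and replace each special reverse arc on it by its base path. Its image under $f$ is then a tree walk in which only ordinary reverse arcs step backwards, and each such step costs one. By the setup, this walk needs at least $\beta(f(w'),e)$ backward steps.

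\textbf{Arcs, property (iii).} For a base arc $u\to v$ we have $d_L(w,v)\le d_L(w,u)$ for all $w$. So if $u$ is on the head side of $e$, so is $v$; consequently $m_u\to m_v$ can only move forwards. For an ordinary reverse arc $v\to u$ we have $d_L(w,u)\le d_L(w,v)+1$. The plan is to show that the head-side witnesses of $v$ survive at $u$ for every edge except possibly the single edge just behind $m_v$. For edges $e'$ further back, $\beta(f(w),e')\ge\beta(f(w),e)+1$, which gives the needed slack. For reverse arcs of special groups, both endpoints lie in $W$, so (iii) follows from (i) and the hypothesis on $f$. Here I would additionally use that every base arc lies on a base path whose reverse is in $\mathcal R$, so that reachability in $L$ mirrors the tree structure.

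\textbf{Main obstacle: consistency.} The chosen sides must pick out a single node of $T$. This holds if, for any two edges $e,e'$, the chosen sides of $v$ intersect. The bad case is that $v$ is placed on two disjoint head sides, or on two disjoint tail sides. To rule this out, I would take witnesses $w,w'$ for $e,e'$. I would then compare the costs $\beta$ along the tree paths through the node where the two paths meet, showing that one witness would force $f(w')$ to violate (i) for the other edge. If this direct argument fails, the fallback is to define $m_v$ as the node reached from a canonical start $f(w^\ast)$, with $w^\ast\in W$ minimising $d_L(w^\ast,v)$, by walking backwards as long as the budget $d_L(w^\ast,v)$ allows, and then to verify (iii) along the same lines.
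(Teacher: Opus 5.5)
Your edge-by-edge rule is a sound way to specify the map. In fact it selects exactly the paper's node. Write $\delta(p,q)$ for the number of backward edges on the tree path from $p$ to $q$. The paper sets $A_v=\{x:\delta(f(a),x)\le d_L(a,v)\text{ for all }a\in W\}$ and takes the earliest node of $A_v$. By the Helly property for subtrees, that node lies on the head side of $e=x\to y$ precisely when a single head-side retained $w$ has $d_L(w,v)<\delta(f(w),x)=\beta(f(w),e)$, which is your criterion. Your arguments for (i), (ii) and for ordinary reverse arcs are fine once $m_v$ exists; the witness for the last edge plays the role of the paper's tight constraint.

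\textbf{The gap: existence of $m_v$.} You show no node is selected, and your sketch cannot close this. Your list of bad configurations misses the mixed one. If $e,e'$ point the same way along the path joining them, the disjoint pair is the head side of one and the tail side of the other. More seriously, in the all-tail configuration $v$ has no witness at all, so an argument that starts from ``witnesses $w,w'$'' has nothing to work with.

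The missing ingredient is a retained $c$ with $d_L(c,v)=0$, which exists by \Cref{lem:retained}. Since $\beta\ge1$, $c$ witnesses every edge whose head side contains $f(c)$. Hence every tail side assigned to $v$ contains $f(c)$, and two of them cannot be disjoint. The remaining cases close with facts you essentially have:
\begin{itemize}
\item Your walk argument from (i), applied to $v$ itself, gives $\delta(f(w),f(v))\le d_L(w,v)$. So $f(v)$ lies on every head side assigned to $v$, and two such sides cannot be disjoint.
\item In the mixed case, the tree path from a head-side $f(w)$ of $e$ to $x'$ passes through $x$. So $f(w)$ is also on the head side of $e'$, with $\beta(f(w),e')\ge\beta(f(w),e)$, and the witness transfers.
\end{itemize}
The paper avoids pairwise consistency altogether. $A_v$ is a nonempty ($f(v)\in A_v$) intersection of subtrees, closed under forward reachability and contained in the forward cone of $f(c)$, so it has a unique earliest node. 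Your fallback does not help. A minimiser of $d_L(w^\ast,v)$ always has distance $0$, so it leaves no budget, and the resulting node depends on which retained ancestor is picked.

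\textbf{A smaller gap: base arcs.} For a base arc $u\to v$, your monotonicity argument only yields $m_u\reach m_v$, not that the two positions are equal or adjacent. The one-edge bound needs a sandwich argument:
\begin{itemize}
\item \Cref{lem:group-properties}(ii) gives a base path $s\reach u\to v\reach z$ with $z\to s\in\mathcal R$.
\item Monotonicity gives $m_s\reach m_u\reach m_v\reach m_z$.
\item The bound already proved for reverse arcs, ordinary and special, makes $m_s,m_z$ equal or adjacent, hence also $m_u,m_v$.
\end{itemize}
Your closing remark names the right hypothesis, but this argument has to be carried out, and it depends on the reverse-arc case being settled first.
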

\begin{proof}
\proofstep{Defining the allowed positions.}
For $x,y\in V(T)$, let $\delta(x,y)$ be the number of edges
traversed against their directions on the unique path from $x$
to $y$. In particular, $\delta(x,y)=0$ if and only if $y$ is
reachable from $x$ in the directed tree. Define the set of
\emph{allowed positions} for each $v\in V(H)$ by
\begin{equation}\label{eq:allowed}
A_v=\{x\in V(T):\delta(f(a),x)\leq d_L(a,v)
\text{ for every }a\in W\}.
\end{equation}
Thus a distance of $q$ in $L$ permits at most $q$ backward
tree edges from the corresponding retained position.

The original position $f(v)$ belongs to $A_v$. Indeed, an
$L$-path maps under $f$ to a tree walk in which base arcs move
forwards and each ordinary reverse arc contributes at most one
backward step. Removing detours cannot increase this count.
Applying this observation to a shortest $a$--$v$ path gives
$\delta(f(a),f(v))\leq d_L(a,v)$; if no such path exists,
the inequality is automatic. Moreover, $T$ has at most $k+1$
edges, so every value of $\delta$ is at most $k+1$.
Consequently, a distance $d_L(a,v)\geq k+1$, including an
infinite distance, imposes no restriction in \eqref{eq:allowed}.
The truncated record $\theta(v)$ therefore determines $A_v$
completely for $v\notin W$.

\proofstep{Choosing a common position from the distance record.}
For a fixed $a\in W$, the positions satisfying the corresponding
restriction in \eqref{eq:allowed} form a connected subtree.
Along the path from $f(a)$ to any allowed position, the number
of backward edges never exceeds its value at the endpoint.
The set is also closed under forward reachability: a forward
step cannot increase $\delta(f(a),\cdot)$. Since a nonempty
intersection of connected subtrees of a tree is connected,
$A_v$ is connected and closed under forward reachability.

There is a vertex $c\in W$ that reaches $v$ using only base
arcs: following base predecessors backwards from $v$ reaches
a source, and all sources belong to $W$. Hence $d_L(c,v)=0$,
and \eqref{eq:allowed} places the entire set $A_v$ in the part
of $T$ reachable forwards from $f(c)$. Rooted at $f(c)$, this
part is an outward-directed tree. Its nonempty connected
subtree $A_v$ has a unique node nearest the root. Otherwise,
the path between two distinct nearest nodes would contain a
node of $A_v$ closer to the root. Call the nearest node $m_v$.
Since $A_v$ is connected, the node
$m_v\in A_v$ nearest the root is an ancestor of every node of
$A_v$ and hence reaches all of them along the tree's directions.
This characterization is unique, since two distinct nodes of
a directed tree cannot be mutually reachable,
and is independent of the choice of $c$.

We call $m_v$ the \emph{earliest allowed position}: it is the
unique node in $A_v$ from which every other node in $A_v$ is
reachable. Figure~\ref{fig:positions} illustrates this choice.

\begin{figure}[htbp]
\centering
\begin{tikzpicture}
\node[keep] (s) at (0,0) {$s$};
\node[v,fill=green!18] (x) at (2,0) {$x$};
\node[keep] (t1) at (4,1) {$t_1$};
\node[keep] (t2) at (4,-1) {$t_2$};
\draw[arr] (s)--(x);
\draw[arr] (x)--(t1);
\draw[arr] (x)--(t2);
\node[align=left] at (7,0)
{$W=\{s,t_1,t_2\}$\\$\theta(v)=(0,1,1)$\\
$A_v=\{x,t_1,t_2\}$\\earliest position $m_v=x$};
\end{tikzpicture}
\caption{Allowed positions in a directed tree, with each retained vertex fixed at its original
position. Reaching $s$ from either
sink requires two backward edges, so $s$ is excluded. The other
three positions are allowed, and $x$ reaches both allowed sinks.}
\label{fig:positions}
\end{figure}

\proofstep{Retained vertices and equal records.}
If $v\in W$, its own distance constraint has right side zero.
Every position of $A_v$ is therefore reachable from $f(v)$,
and $f(v)$ itself is allowed. It follows that $m_v=f(v)$.
For vertices outside $W$, equal records give equal allowed
sets and hence equal earliest positions. Thus the assignment
$v\mapsto m_v$ is constant on every identification class and
fixes the positions of all retained vertices.

\proofstep{Base paths and special-group arcs.}
If $u\reach v$ using base arcs, then
$d_L(a,v)\leq d_L(a,u)$ for every $a\in W$.
The restrictions defining $A_v$ are therefore at least as
strong as those defining $A_u$, so $A_v\subseteq A_u$.
In particular, $m_v\in A_u$, and the choice of $m_u$ gives
$m_u\reach m_v$ in $T$. Thus base reachability is preserved
as forward reachability. Every reverse arc in a special group
has both endpoints in $W$, so their positions do not change.
The hypothesis on $f$ therefore implies that such an arc still
has equal endpoint positions or moves one step backwards.

\proofstep{The one-edge bound for ordinary reverse arcs.}
Let $v\to u$ be a reverse arc in an ordinary group. A base path
from $u$ to $v$ gives a directed tree path from $m_u$ to $m_v$.
The reverse arc has length one in $L$, and hence
\begin{equation}\label{eq:onearc}
d_L(a,u)\leq d_L(a,v)+1\qquad(a\in W).
\end{equation}
We show that the directed path from $m_u$ to $m_v$ has at
most one edge. This is immediate if $m_u=m_v$, so assume
otherwise. Write $\ell$ for its number of edges and
$z\to m_v$ for its last edge.

The predecessor $z$ does not belong to $A_v$. If it did,
the defining property of $m_v$ would require a directed
path from $m_v$ back to $z$, which is impossible across
the edge $z\to m_v$ in a tree. Thus there is some $a\in W$
whose constraint fails at $z$ and holds at $m_v$.
Putting $q=d_L(a,v)$, we have
\[
\delta(f(a),z)>q,
\qquad \delta(f(a),m_v)\leq q.
\]
The failed constraint implies that $q$ is finite.
Backward distances to adjacent tree nodes differ by at
most one, and they are integers. Therefore
\[
\delta(f(a),z)=q+1,
\qquad \delta(f(a),m_v)=q.
\]
In particular, this constraint is tight at $m_v$.

Delete the edge $z\to m_v$. The node $f(a)$ must lie in
the component containing $m_v$. Otherwise the path from
$f(a)$ to $m_v$ would first reach $z$ and then use the
forward edge $z\to m_v$, giving equal backward distances
to $z$ and $m_v$, a contradiction. Since $m_u$ lies in
the other component, the unique path from $f(a)$ to
$m_u$ first reaches $m_v$ and then traverses all $\ell$
edges of the directed $m_u$--$m_v$ path backwards.
Consequently,
$\delta(f(a),m_u)=q+\ell$.
Figure~\ref{fig:tightpath} displays this decomposition.
\begin{figure}[htbp]
\centering
\begin{tikzpicture}
\node[v] (mu) at (0,0) {$m_u$};
\node[v,draw=red!70!black] (z) at (3,0) {$z$};
\node[v,fill=green!18] (mv) at (4.6,0) {$m_v$};
\node[keep] (a) at (8.8,0) {$f(a)$};
\draw[arr] (mu)--node[above] {$\cdots$}(z);
\draw[arr] (z)--(mv);
\draw[thick,densely dotted] (mv)--node[above]{tree path}(a);
\draw[dashed,gray] (3.8,-.45)--(3.8,.85);
\draw[|-|] (0,-.65)--node[below] {$\ell$ forward edges}(4.6,-.65);
\draw[|-|] (4.6,-.65)--node[below] {$q$ backward edges}(8.8,-.65);
\node[red!70!black] at (3,1.05) {$z\notin A_v$};
\end{tikzpicture}
\caption{The excluded predecessor $z$ yields a retained
vertex $a$ for which the constraint at $m_v$ is tight.
The path from $f(a)$ to $m_v$ contains exactly
$q=d_L(a,v)$ backward edges; directions within this dotted
path may vary. Continuing to $m_u$ adds $\ell$ backward
edges. The depicted long directed path is the configuration
whose length the argument bounds.}
\label{fig:tightpath}
\end{figure}

Since $m_u\in A_u$, its constraint for this same vertex
$a$, together with \eqref{eq:onearc}, gives
\begin{equation}\label{eq:tight}
d_L(a,v)+\ell=\delta(f(a),m_u)
\leq d_L(a,u)\leq d_L(a,v)+1.
\end{equation}
It follows that $\ell\leq1$. Thus every reverse arc in an
ordinary group has equal endpoint positions or crosses one
tree edge backwards.
The earliest-position choice is essential here: exclusion
of its predecessor supplies the tight constraint needed
in \eqref{eq:tight}.

\proofstep{The one-edge bound for base arcs.}
For a base arc $u\to v$, \Cref{lem:group-properties}(ii) gives
an available reverse arc $z\to s$ and a base path
$s\reach u\to v\reach z$. Preservation of base reachability
yields $m_s\reach m_u\reach m_v\reach m_z$ in $T$.
Every available reverse arc belongs to an ordinary or a special
group, and we have shown in both cases that its endpoint
positions are equal or one backward step apart. In particular,
the directed path from $m_s$ to $m_z$ has at most one edge.
Its intermediate positions $m_u,m_v$ are therefore also equal
or forward-adjacent. This proves all the assertions of
\Cref{lem:move}.
\end{proof}

The new map respects all identification classes and preserves
the required restrictions on arc endpoints. To use it in the
safety proof, we must also ensure that moving vertices does
not empty either side of a cut. We establish this by finding
retained vertices on both sides before applying the lemma.

\subsection{Safety of the Reduction}
\label{sec:safety-overview}
We now prove that a solution on the quotient can be lifted
to the original component without increasing its cost.
There is one point to address before applying the structural
lemmas: \Cref{lem:cuttree} allows reverses of existing arcs,
whereas an available reverse arc of our instance may reverse
an entire base path. We first connect these two augmentation
models. We then construct the tree, find retained vertices
on both sides of every cut, and apply \Cref{lem:move}.
The resulting cuts give the contradiction needed for safety.

\begin{lemma}\label{lem:safety}
\Cref{rule:merge} preserves whether at most $k$ groups in total
can make every weakly connected component strongly connected.
\end{lemma}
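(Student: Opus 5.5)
The forward direction is immediate, so the plan concentrates on the converse, which I will prove by contradiction using a cut tree. Components are independent: groups have all endpoints in one component, and the budget is additive. So it suffices to work in one component $H$ and compare, for a fixed set of selected special groups, the number of extra ordinary groups needed before and after merging.

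\textbf{Forward direction.} If a set of groups makes $H$ strongly connected, the images of these groups make $\overline H$ strongly connected. Indeed, $\pi$ maps directed paths to directed walks, and it is surjective on vertices. The normalization in \Cref{rule:merge} does not change reachability, and a deleted group is never needed: its arcs became loops or base arcs. Duplicate ordinary groups are replaced by one representative of equal cost.

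\textbf{Converse: setup.} Fix a component and a feasible quotient selection. Let $\mathcal S$ be its special groups, and let $t$ be the number of its ordinary groups; each ordinary group contributes a single arc $\pi(v)\to\pi(u)$. In $H$, select the same special groups, giving $H_0=H+\bigcup\mathcal S$. Suppose, for contradiction, that more than $t$ available ordinary reverse arcs are needed to make $H_0$ strongly connected. If this fails, choosing one containing group per needed arc lifts the solution at no extra cost, which is the desired conclusion.

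To reach the model of \Cref{lem:cuttree}, build $J$ from $H_0$ by adding, for every ordinary reverse arc $v\to u$, an auxiliary forward arc $u\to v$. This preserves reachability, since a base path $u\reach v$ exists. Then $J$ is weakly connected, and adding reverses of its arcs restricted to ordinary reverse arcs is our augmentation. Reverses of base arcs are dominated: each lies on a path closed by an ordinary or special reverse arc, and the special ones are already present. So I must argue $\rho(J)\ge t+1$ with respect to ordinary reverse arcs. I will handle this by noting that the Lucchesi--Younger cut family can be taken over cuts that every ordinary reverse arc meets in at most one place. If needed, the reverse of a base arc can be replaced by the reverse arc of its containing path, which enters a superset of cuts.

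\textbf{Converse: the cut tree.} Apply \Cref{lem:cuttree} with $r=t+1\le k+1$. This gives a tree $T$ and a map $f$ satisfying (i)--(iii). Each special arc and each auxiliary arc lies inside a position or moves one edge forward, so the hypotheses of \Cref{lem:move} hold. That lemma yields positions $m_v$ that are constant on identification classes and fix $W$, and under which base arcs move forward at most one edge and reverse arcs move backward at most one edge. The sets $U_e=\{v: m_v\in C_e\}$ are then unions of classes, have no entering base or special arc, and are entered by each ordinary reverse arc for at most one $e$.

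\textbf{Nonemptiness of both sides (main obstacle).} Each $U_e$ must be a nonempty proper subset. The shore $f^{-1}(C_e)$ has a leaving arc of $J$, which lies on a base path $s\reach z$ through the cut. Following base predecessors from $s$ reaches a source $w\in W$, and following successors from $z$ reaches a sink $w'\in W$. Both are fixed by $m$. However, $f(w)$ need not lie in $C_e$. So instead I will argue directly via the sources: some source of $H$ must map into $C_e$, because $f^{-1}(C_e)$ has no entering arc and the backward closure under base arcs stays inside it. Symmetrically, some sink must map outside, since its complement has no leaving arc. Both retained vertices then certify that $U_e$ is a nonempty proper subset.

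\textbf{Conclusion.} The sets $\pi(U_e)$ give $t+1$ source shores of $\overline H+\pi(\bigcup\mathcal S)$, each needing an entering arc. Each of the $t$ single-arc ordinary groups enters at most one of them, which contradicts feasibility in the quotient.
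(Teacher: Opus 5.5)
\textbf{There is a genuine gap in deriving $\rho(J)\ge t+1$.} It comes from restricting the target to ordinary reverse arcs. Your $J$ contains forward copies only of ordinary reverse arcs. You justify that reversing a base arc is ``dominated'' by saying each base arc lies on a path closed by an ordinary or special reverse arc, ``and the special ones are already present.'' But $H_0$ contains only the arcs of the \emph{selected} special groups $\mathcal S$.

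Consider a base arc whose closing arcs from \Cref{lem:group-properties}(ii) all lie in unselected special groups. In the model defining $\rho(J)$ it can be reversed at unit cost, but no ordinary arc simulates that reversal. For example, take base arcs $s\to z$, $s\to x$, $x\to z$, ordinary arcs $x\to s$ and $z\to x$, and an unselected special group $\{z\to s\}$. Two ordinary arcs are needed, yet reversing $s\to z$ alone makes $J$ strongly connected, so $\rho(J)=1$. Hence ``more than $t$ ordinary arcs are needed'' does not give $r=t+1\le\rho(J)$, and \Cref{lem:cuttree} cannot be invoked.

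Your suggested patch, taking a cut packing that is disjoint only on ordinary arcs, would need more than you state. It requires a weighted form of Lucchesi--Younger. Base arcs may then leave several shores, which breaks the one-edge-per-shore tree construction of \Cref{lem:cuttree} and the one-step hypothesis of \Cref{lem:move}. Relatedly, your claim that ``each special arc'' stays in place or moves one edge is justified only for selected special arcs. \Cref{lem:move} requires this for every arc of $\mathcal R$. Nothing in your $J$ prevents the base path of an unselected special arc from crossing several tree edges.

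\textbf{The fix is not to aim for the stronger ordinary-only statement.} Add a forward copy $u\to v$ for \emph{every} $v\to u\in\mathcal R$, special ones included. Then show that $H+Y$ can be made strongly connected by at most $t$ arcs of $\mathcal R$, counted individually. Now every reversal of an arc of $J$ is simulated by at most one arc of $\mathcal R$:
\begin{itemize}
\item base arcs via \Cref{lem:group-properties}(ii);
\item auxiliary arcs directly;
\item arcs added by $Y$ through their base paths.
\end{itemize}
So $\rho(J)>t$ follows, and the hypotheses of \Cref{lem:move} hold for all of $\mathcal R$.

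The rest of your argument then goes through as you wrote it: anchoring both sides of each tree edge by a source and a sink, transferring the sets $U_e$, and counting the $t$ single-arc ordinary groups against $t+1$ shores. Lifting each needed arc by one containing group, possibly special, still costs at most $t$ extra groups.

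One small slip: your aside that $f(w)$ ``need not lie in $C_e$'' is wrong. The backward-closure argument you give immediately afterwards places it in $C_e$.
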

\begin{proof}
We may ignore the normalization in the rule, since the deleted
arcs do not affect reachability and every retained group has an
original representative of the same cost.
The forward implication follows from preservation of paths under
identification; duplicate ordinary groups can be replaced by
their single retained copy. For the converse, fix a solution
of the quotient using at most $k$ groups and consider one
component. Let $Y$ be its selected special groups, and let
$\mathcal O$ be its selected ordinary groups, with
$t=|\mathcal O|$. We use the original groups represented by
these quotient groups. Write $H+Y$ for the graph obtained
from $H$ by adding all reverse arcs belonging to groups in $Y$.

It suffices to show that $H+Y$ can be made strongly connected
by at most $t$ arcs from $\mathcal R$, counted individually,
since each such arc can be supplied by one
original group containing it.
Suppose, for a contradiction, that more than $t$ arcs from
$\mathcal R$ are necessary.

\proofstep{Relating the two augmentation models.}
For each reverse arc $v\to u\in\mathcal R$, add a distinguished
forward arc $u\to v$ to a set $P$, and define
\[
J=H+Y+P.
\]
Every arc of $P$ has a corresponding base path with the same
endpoints, so adding $P$ does not change reachability or the
source shores of $H+Y$. Parallel copies may be kept to
distinguish the origins of arcs. We claim that $\rho(J)>t$,
where $\rho$ is the reverse-arc augmentation cost defined in
Section~\ref{sec:cuttree-proof}.

To prove the claim, consider the reverse of any existing arc
of $J$. It can be simulated in $H+Y$ using at most one arc
from $\mathcal R$, as follows.
\begin{itemize}
\item For a base arc $u\to v$, \Cref{lem:group-properties}(ii)
      gives an arc $z\to s\in\mathcal R$ and a base path
      $s\reach u\to v\reach z$. Adding $z\to s$ supplies
      the walk $v\reach z\to s\reach u$, which simulates
      the reverse $v\to u$.
\item For an arc of $P$, its reverse belongs to $\mathcal R$
      by construction.
\item For an arc added by $Y$, its reverse can already be
      simulated by a base path, so no further arc is needed.
\end{itemize}
If at most $t$ reverses of existing arcs made $J$ strongly
connected, choose the at most $t$ arcs of $\mathcal R$ used
in these simulations. Every arc of the augmented $J$ would
then have a corresponding walk in $H+Y$ with those arcs added:
arcs of $P$ use their base paths, and the new reverses use
the walks above. The same vertices would be mutually reachable
there, contradicting our assumption. This proves $\rho(J)>t$.

\proofstep{Constructing the tree and anchoring the cuts.}
The graph $J$ is weakly connected, so \Cref{lem:cuttree}
applies with $r=t+1$. It gives a directed tree $T$ with
$t+1\leq k+1$ edges and a map $f:V(H)\to V(T)$.
Every base arc maps to one position or one forward tree edge.
For every $v\to u\in\mathcal R$, the corresponding arc
$u\to v\in P$ has the same property. Thus $T,f$ satisfy
the hypotheses of \Cref{lem:move}.

Two further properties will be needed after vertices move.
First, each reverse arc added by $Y$ has equal endpoint
images under $f$. Its reverse is a base path, so its endpoints
belong to the same strongly connected component of $J$;
\Cref{lem:cuttree}(iii) applies. Both endpoints are in $W$.

Second, each side of every tree edge contains the image of
a retained vertex. To see this, fix an edge $e=x\to y$ and
let $C_e$ be the component of the underlying tree $T-e$
containing $x$. By \Cref{lem:cuttree}(ii), the set
$S_e=f^{-1}(C_e)$ is a nonempty proper source shore of $J$.
Choose a vertex in $S_e$. By \Cref{lem:retained}, it is
reachable along a base path from some $a\in W$. Since no
base arc enters $S_e$, that path starts in $S_e$, so
$a\in S_e$. Likewise, a vertex outside $S_e$ can reach
some $b\in W$ along a base path. This path cannot enter
$S_e$, and hence $b\notin S_e$. These two retained vertices
anchor the two sides of the edge. We do not need a retained
vertex at every tree node.

\proofstep{Transferring the cuts to the quotient.}
Apply \Cref{lem:move} to obtain $v\mapsto m_v$.
For each tree edge $e=x\to y$, let $C_e$ again denote the
component of the underlying tree $T-e$ containing $x$, and define
\[
U_e=\{v\in V(H):m_v\in C_e\},
\qquad \overline U_e=\pi(U_e).
\]
The set $U_e$ is a union of complete identification classes:
vertices outside $W$ with the same record have the same new
position, and every retained vertex forms a singleton class.
Consequently $\pi^{-1}(\overline U_e)=U_e$.
The retained vertices just found stay on their respective
sides because $m_w=f(w)$ for $w\in W$. Thus both $U_e$
and $\overline U_e$ are nonempty proper vertex sets.

No base arc enters $\overline U_e$: its endpoint positions
are equal or joined by a forward tree edge, whereas entering
$C_e$ requires traversing $e$ backwards. No arc added by $Y$
enters $\overline U_e$ either. Its endpoints had equal images
under $f$ and are retained, so their images remain equal.
Finally, each available reverse arc enters at most one of
the sets $\overline U_e$, since it crosses at most one tree
edge backwards.

The $t+1$ tree edges give $t+1$ entering-arc requirements,
all of which are necessary for strong connectivity. The $t$ groups
in $\mathcal O$ supply only $t$ arcs, because each ordinary
group contains one arc. At least one nonempty proper set
$\overline U_e$ therefore still has no entering arc after
all selected groups are added. This contradicts that
$Y\cup\mathcal O$ is a solution on the quotient.

\proofstep{Recovering a group solution.}
It follows that $H+Y$ can be made strongly connected by at most
$t$ arcs from $\mathcal R$.
For each such arc, select one original group
containing it, ordinary or special. This uses at most $t$
additional groups; repeated choices or groups already in $Y$
only reduce the number, and any other arcs added with a selected
group cannot destroy strong connectivity.
Apply this argument separately to every component and sum the
group counts. Groups are confined to individual
components, so the total is no larger than that of the
quotient solution and is therefore at most $k$.
\end{proof}

The reduction is constructive: compute the $0/1$ distances from
$W$, sort the records, and merge the vertices in each class.
The cut tree and the reassignment are used only in the safety
proof and need not be computed by the algorithm. The quotient
has polynomial size, but is an instance of the augmentation
problem and need not be planar. The next section converts it
into an instance of \PDFVS{}.

\section{Returning to Vertex Deletion}
\label{sec:return}
The preceding section gives a polynomial-size GSCA instance.
To obtain a kernel for PDFVS, we convert this instance back
to vertex deletion in three steps.
First, we encode the choice of groups and strong connectivity
by a $3$-CNF formula. Second, we make its incidence graph
planar, with a quadratic increase in formula size. Third, we
replace the planar formula by a planar digraph whose directed
triangles enforce a consistent satisfying assignment. We prove
each equivalence and then combine the size bounds.

We use $3$-CNF formulas whose clauses contain at most three
literals. A literal is a Boolean variable or its negation,
a clause is a disjunction of literals, and a CNF formula is
a conjunction of clauses. Formula size is measured by the
total number of literal occurrences. Two formulas are
\emph{equisatisfiable} if they are both satisfiable or both
unsatisfiable.

\subsection{Reduction to \texorpdfstring{$3$}{3}-SAT}
\label{sec:formula-tools}
We encode a choice of arc groups and a certificate of strong
connectivity as Boolean constraints. To count the resulting
literal occurrences, we first recall the circuit-to-formula
encoding used in the construction.

A Boolean circuit is an acyclic network of logical gates with
a designated output. It accepts an input assignment when the
output is true. We use two-input AND gates and NOT gates;
OR gates can be expressed by a constant number of these gates.

\begin{lemma}\label{lem:circuit-cnf}
A Boolean circuit with $g$ gates, each a two-input AND gate
or a NOT gate, can be converted in $O(g+1)$ time into a
$3$-CNF formula with $O(g+1)$ literal occurrences.
An assignment to the circuit inputs extends to a satisfying
assignment of the formula if and only if the circuit accepts it.
\end{lemma}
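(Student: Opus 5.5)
We use the standard Tseitin transformation: introduce one fresh variable per gate output, and write a constant number of short clauses per gate saying that this variable equals the value the gate computes from its inputs. Fix an enumeration of the gates. Circuit inputs keep their own variables, and each gate $\gamma$ gets a new variable $z_\gamma$. A NOT gate with input $x$ yields the clauses $(z_\gamma\lor x)$ and $(\lnot z_\gamma\lor\lnot x)$, which together say $z_\gamma\leftrightarrow\lnot x$. A two-input AND gate with inputs $x,y$ yields $(\lnot z_\gamma\lor x)$, $(\lnot z_\gamma\lor y)$ and $(z_\gamma\lor\lnot x\lor\lnot y)$, which together say $z_\gamma\leftrightarrow(x\land y)$. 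Here $x,y$ are either input variables or variables of earlier gates. Finally we add the unit clause $(z_{\mathrm{out}})$ for the designated output. If the output is itself an input variable, the unit clause is on that variable, and this case accounts for the ``$+1$'' in the bound. Every clause has at most three literals, each gate contributes at most seven literal occurrences, and the output clause contributes one. This gives $O(g+1)$ literal occurrences, and the formula is written in one pass over the gates in $O(g+1)$ time.

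For correctness, take an input assignment $\alpha$ and process the gates in a topological order, which exists because the circuit is acyclic. If the circuit accepts $\alpha$, set each $z_\gamma$ to the value gate $\gamma$ computes under $\alpha$. Each gate's clauses then hold by construction, since they express exactly the defining equivalence. The output clause holds because the circuit accepts. Conversely, suppose some extension of $\alpha$ satisfies the formula. By induction along the topological order, the gate clauses force each $z_\gamma$ to equal the value of gate $\gamma$ under $\alpha$. In particular $z_{\mathrm{out}}$ equals the circuit's output, and the unit clause makes it true, so the circuit accepts $\alpha$.

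There is no real obstacle here. The only points needing care are the topological-order induction in the converse direction and the degenerate case where the output is an input; the unit clause handles the latter. The statement only asks for circuits with AND and NOT gates, so no further gate types need encoding.
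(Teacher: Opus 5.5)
Your proposal is correct and is essentially the paper's proof: it uses the same Tseitin encoding with identical clauses for AND and NOT gates, a unit clause forcing the output true, and an induction along the dependency order to show that the clauses force the circuit evaluation. The paper also uses unit clauses such as $(z)$ or $(\neg z)$ to fix constant inputs. You omit this, and the lemma as stated does not require it.
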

\begin{proof}
Introduce a variable for each gate output.
Enforce $z=x\land y$ by the clauses
$(\neg z\lor x)$, $(\neg z\lor y)$, and
$(z\lor\neg x\lor\neg y)$.
Enforce $z=\neg x$ by $(\neg z\lor\neg x)$ and $(z\lor x)$.
A unit clause contains one literal, so $(z)$ fixes $z=1$
and $(\neg z)$ fixes $z=0$.
Use these clauses to fix constants and require the circuit
output to be true. Following the gates in dependency order,
the clauses force exactly the circuit evaluation, so satisfying
assignments correspond to accepting evaluations.
Each gate contributes only constantly many literal occurrences.
\end{proof}

\begin{lemma}\label{lem:cnf}
A GSCA instance with parameter $k\geq1$, at most $N\geq2$
vertices, $O(N^2)$ arc groups, and $O(N^2)$ base arcs and
group-arc entries in total can be converted in deterministic
polynomial time into a $3$-CNF formula with $O(kN^2\log N)$
literal occurrences. The formula is satisfiable if and only
if the instance has a solution.
\end{lemma}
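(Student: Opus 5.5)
We build a Boolean circuit of size $O(kN^2\log N)$ that accepts exactly the group selections of size at most $k$ that make every weakly connected component strongly connected. Then \Cref{lem:circuit-cnf} turns it into a $3$-CNF formula of the same size. Since the formula only needs to be equisatisfiable with the instance, the circuit may also take auxiliary certificate variables as inputs.

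Let $M=O(N^2)$ be the number of groups, with one selection variable $g_B$ per group $B$. For each arc $e$ that appears in some group, form an \emph{arc-presence} bit $p_e=\bigvee_{B\ni e} g_B$. These ORs cost $O(N^2)$ gates in total, because the group-arc entries number $O(N^2)$. Base arcs are present as constants.

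\textbf{Cardinality constraint.} We impose $\sum_B g_B\leq k$ with a sequential counter of width $O(\log k)$ bits, or alternatively the standard Sinz-style unary counter truncated at $k+1$. The binary version is an adder chain over $M$ inputs, each stage an incrementer on $\lceil\log(k+2)\rceil$ bits with saturation. This costs $O(M\log k)=O(N^2\log N)$ gates (after the primal reductions $k\leq N$ may be assumed, else the answer is trivial). A unary counter would cost $O(Mk)$, which also fits within $O(kN^2\log N)$.

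\textbf{Strong connectivity.} Checking strong connectivity by iterated reachability would cost $O(N\cdot N^2)$ per round, which is too much. Instead we guess a certificate. For each weakly connected component $C$, fix a root $r_C$. Strong connectivity of $C$ is equivalent to every vertex reaching $r_C$ and being reached from $r_C$. We certify both directions with integer labels. Each vertex $v$ gets a label $\lambda(v)\in\{0,\ldots,N-1\}$ encoded in $\lceil\log N\rceil$ bits, with $\lambda(r_C)=0$. The constraint is that every $v\neq r_C$ has an in-arc $u\to v$ that is present and satisfies $\lambda(u)<\lambda(v)$; this is the BFS-layer certificate of reachability from the root. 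A symmetric second labelling certifies reachability to the root.

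A per-arc comparator costs $O(\log N)$ gates. For each vertex we take the OR over its incoming arcs of $(\text{present}\wedge \text{comparator})$. Summed over all $O(N^2)$ base arcs and group-arc entries, this gives $O(N^2\log N)$ gates. The soundness direction is that a descending chain of labels terminates at the root. The completeness direction is that BFS distances provide valid labels.

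The total is $O(N^2\log N)$ for connectivity plus the counter, which sits within the stated $O(kN^2\log N)$. The looser factor $k$ also covers a unary counter if that is preferred. Correctness holds in both directions: an accepting assignment yields a selection of at most $k$ groups together with valid certificates, so every component is strongly connected, and conversely. The construction is plainly deterministic polynomial time.

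The main obstacle is the connectivity encoding. The label-comparison certificate is the step to get right. The key point is that each condition quantifies over arcs and not over pairs of vertices or rounds, which is what keeps the size near-linear in the number of arc entries. The obligation is to verify that "every non-root vertex has a present in-arc from a strictly smaller label" implies reachability from the root. It does, by induction on the label.
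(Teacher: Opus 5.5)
Your proof is correct. The connectivity part matches the paper's: a root per weakly connected component, two $\lceil\log N\rceil$-bit labelings, and at each nonroot vertex a present out-arc (respectively in-arc) to (respectively from) a strictly smaller label. With one comparator per distinct arc this costs $O(N^2\log N)$ gates.

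Where you genuinely differ is the budget constraint. The paper first decides directly when the number $q$ of groups (your $M$) satisfies $q\le k$. It then represents a selection by $k$ blocks of $\lceil\log_2(q+1)\rceil$ bits, each holding a group identifier or $0$, and sets $g_i\leftrightarrow\bigvee_j(b_j=i)$. This takes $kq$ equality tests of $O(\log q)$ gates each, and it is the only source of the factor $k$ in the bound $O(kN^2\log N)$.

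You instead take the indicators $g_B$ as free inputs and enforce $\sum_B g_B\le k$ with a saturating binary counter, at a cost of $O(q\log k)$ gates. The paper's encoding makes ``at most $k$ groups'' hold by construction and needs no arithmetic. Yours is cheaper: together with the paper's step of deciding directly when $k\ge q$ (so that $\log k=O(\log N)$), it yields $O(N^2\log N)$ literal occurrences. In the main theorem this would give $m=O(k^{32}\log k)$ and an $O(k^{64}\log^2 k)$ kernel, saving a factor $k^2$.

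One small correction: your remark that $k\le N$ may be assumed ``after the primal reductions'' is not available for an arbitrary GSCA instance as in the statement. You do not need it, though, since $O(q\log k)\subseteq O(kN^2\log N)$ already holds because $\log k\le k$. If you want the sharper bound, justify $\log k=O(\log N)$ by the $k<q$ case split instead.
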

\begin{proof}
Write $R_1,\ldots,R_q$ for the arc groups of the instance.
By the definition of GSCA, every group has its endpoints
inside a single weakly connected component of the base graph.
Adding groups therefore does not join different base components;
we can impose strong connectivity separately in each one.
If $q\leq k$, we can determine the answer by adding all groups
and checking whether every weakly connected component becomes
strongly connected, and output a corresponding constant-size
formula. Assume henceforth that $q>k$.

Introduce $k$ blocks of Boolean variables, each consisting of
$\lceil\log_2(q+1)\rceil$ variables and recording the identifier
of a selected arc group. Write $b_j$ for the integer represented
by the $j$th block of variables, and require its value to lie
between $0$ and $q$. A value $i\geq1$ selects $R_i$, whereas
$0$ selects no arc group at that position. Identifiers may
repeat; the selected arc groups are those corresponding to the
distinct nonzero identifiers that occur. These variables
therefore represent exactly all selections of at most $k$
arc groups.

For each arc group $R_i$, introduce a Boolean variable $g_i$
indicating whether it is selected, and require
\[
g_i\leftrightarrow\bigvee_{j=1}^{k}(b_j=i).
\]
The arcs in the same group can share this selection result.
For each distinct arc $u\to v$ occurring among the base arcs
or in an arc group, introduce a Boolean variable $a_{uv}$
indicating whether the arc is present after adding the selected
groups. Base arcs are always present, so require $a_{uv}=1$
for each of them. A non-base arc is present if and only if at
least one group containing it is selected, so require
\[
a_{uv}\leftrightarrow
\bigvee_{i:\,(u\to v)\in R_i}g_i.
\]

A weakly connected component is strongly connected if and only
if it admits the following two labelings satisfying the outgoing-
and incoming-arc conditions at every nonroot vertex.

A component consisting of one vertex is already strongly
connected and needs no connectivity constraints. In a component
with $n\geq2$ vertices, choose a root $r$. Give each
vertex $v$ two nonnegative integer labels $x(v)$ and $y(v)$, used to ensure
reachability from $v$ to the root and from the root to $v$,
respectively. Intuitively, these labels can be chosen as the
numbers of arcs on shortest paths in the two directions, so
represent each by $\lceil\log_2 n\rceil$ Boolean variables.
Require $x(r)=y(r)=0$, and impose the following conditions
on every nonroot vertex $v$.
\begin{itemize}
\item There is an outgoing arc $v\to u$ with $a_{vu}=1$
and $x(u)<x(v)$.
\item There is an incoming arc $u\to v$ with $a_{uv}=1$
and $y(u)<y(v)$.
\end{itemize}

If the component is strongly connected, choose $x(v)$ and
$y(v)$ as the numbers of arcs on shortest paths to and from
the root, respectively. These numbers are at most $n-1$ and
fit in the allotted bits. Taking one step forward or backward
along the corresponding shortest path decreases the relevant
label by exactly one, so both conditions hold.

Conversely, suppose such labelings exist. Starting from any
nonroot vertex, follow outgoing arcs supplied by the first
condition. The label $x$ strictly decreases. This process
cannot continue indefinitely, and every nonroot vertex has
a next step, so it must reach the root. Similarly, tracing
incoming arcs backward using the second condition strictly
decreases $y$ and eventually reaches the root, giving a
directed path from the root to the starting vertex. Thus every
vertex reaches the root and is reachable from it, and the
component is strongly connected. This argument uses only
nonnegativity and strict decrease: no upper bound of $n-1$
on the encoded labels is needed for soundness. That bound is
used only to ensure that shortest-path labels fit in the
allotted bits.

The formula must therefore express the following four types
of constraints.
\begin{enumerate}
\item Each $b_j$ has value at most $q$.
\item The variable $g_i$ is true if and only if $b_j=i$
for at least one $j$.
\item The variable $a_{uv}$ is true for every base arc.
For a non-base arc, it is true if and only if at least one
group containing the arc is selected.
\item Both labels of every root are zero, and every nonroot
vertex satisfies the outgoing- and incoming-arc conditions.
\end{enumerate}

These constraints involve only binary equality tests,
comparisons, and logical operations. A straightforward Boolean
circuit construction followed by \Cref{lem:circuit-cnf}
converts them into a $3$-CNF formula $\Phi$. The conversion
ensures that an assignment to the group identifiers, selection
variables, arc variables, and labels extends to a satisfying
assignment of $\Phi$ if and only if it satisfies all the
constraints.

If the original instance has a solution, put the identifiers
of the selected arc groups into $b_1,\ldots,b_k$, fill the
remaining positions with zero, and set $g_i$ and $a_{uv}$
accordingly. Use shortest-path arc counts to and from the
root as $x$ and $y$. All constraints hold, so this assignment
extends to a satisfying assignment of $\Phi$.

Conversely, given a satisfying assignment of $\Phi$, read the
distinct nonzero identifiers from $b_1,\ldots,b_k$ to obtain
at most $k$ arc groups. The second and third types of
constraints ensure that $a_{uv}$ describes exactly the arcs
present after these groups are added. The fourth type ensures
that every weakly connected component is strongly connected.
The groups read from the assignment therefore form a solution.

Finally, we count literal occurrences in $\Phi$. By
\Cref{lem:circuit-cnf}, each gate contributes only constantly
many literal occurrences, so it suffices to count the gates
needed to express the four types of constraints.

To determine whether $R_i$ is selected, we check whether any
of the $k$ selection positions contains its identifier.
There are $q$ arc groups, so this requires $kq$ equality tests.
Each test uses $O(\log(q+1))$ gates. Since $q=O(N^2)$,
this gives a total of
\[
O\bigl(kq\log(q+1)\bigr)=O(kN^2\log N)
\]
gates.

Let $A$ be the set of distinct base or available arcs.
There are $O(N^2)$ group-arc entries in total, so the OR
operations defining all arc-presence variables together use
$O(N^2)$ gates, even when groups share arcs. Moreover,
$|A|=O(N^2)$. For each arc $u\to v$, form the two tests
\[
a_{uv}\land[x(v)<x(u)]
\qquad\text{and}\qquad
a_{uv}\land[y(u)<y(v)].
\]
The first contributes to the outgoing condition at $u$, and
the second to the incoming condition at $v$. Comparing two
$O(\log N)$-bit integers uses $O(\log N)$ gates; combining
the tests at their endpoints uses $O(|A|+N)$ gates. An empty
disjunction is false, as required for a nonroot vertex with
no usable arc in one direction. Thus all connectivity
constraints use $O(N^2\log N)$ gates. The identifier range
constraints use a further $O(k\log N)$ gates, and fixing
root labels to zero uses $O(N\log N)$ gates.

The circuit consequently has $O(kN^2\log N)$ gates, and
$\Phi$ has $O(kN^2\log N)$ literal occurrences. Since
$k<q$, the number of comparisons and logical operations is
polynomial in the input size, and the entire construction
takes deterministic polynomial time.
\end{proof}

\subsection{Reduction to Planar \texorpdfstring{$3$}{3}-CNF}
The next step makes the formula suitable for a planar graph
construction. Its \emph{incidence graph} has one vertex for each
variable and each clause, with an edge for every occurrence
of a variable in a clause. A formula is \emph{planar} if this
graph is planar. We need both an embedding and a bound on
the size increase. The following statement follows from
Lichtenstein's crossover construction~\cite{Lichtenstein1982}.

\begin{theorem}\label{thm:planarization}
A $3$-CNF formula with $m$ literal occurrences can be transformed
in polynomial time into an equisatisfiable
$3$-CNF formula with a planar incidence graph and
$O(m^2)$ occurrences, together with a plane embedding.
\end{theorem}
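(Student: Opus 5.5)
We will follow Lichtenstein's reduction from $3$-SAT to planar $3$-SAT and only track the size of the construction. The plan is to draw the incidence graph in a grid layout with few crossings, replace every crossing by a constant-size crossover gadget, and then restore the $3$-CNF form.

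\textbf{Layout.} Given $\Phi$ with $m$ literal occurrences, we may assume $\Phi$ has at most $m$ variables and at most $m$ clauses. We place the variables in a row on a horizontal line. Each clause gets its own horizontal level, either below the line or on a separate row. Each occurrence of a variable $x$ in a clause $C$ becomes a polyline: a vertical segment from the row of $x$ down to the level of $C$, followed by a horizontal segment to the vertex of $C$. Two polylines cross only where a vertical segment meets a horizontal segment, so there are at most $m\cdot m=O(m^2)$ crossings. We fix this drawing explicitly, which gives a plane embedding once the crossings are removed.

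\textbf{Crossover gadgets.} Each crossing is replaced as in Lichtenstein. First, subdivide every edge of the incidence graph by a chain of copy variables. Along an occurrence of $x$, introduce fresh variables $x=x^{(0)},x^{(1)},\ldots$ with equivalence clauses $(\neg x^{(i)}\lor x^{(i+1)})$ and $(x^{(i)}\lor\neg x^{(i+1)})$, one new variable per crossing on that polyline. A crossing of two such chains, carrying values $a$ and $b$, is then replaced by Lichtenstein's planar crossover gadget. This gadget is a constant-size planar formula that forces the output $a'$ to equal $a$ and the output $b'$ to equal $b$. It uses only XOR/equality relations realized by $O(1)$ clauses, and its incidence graph embeds with $a,b,a',b'$ on the outer face in the crossing order. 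Equivalence along chains and the gadget property together give equisatisfiability: any satisfying assignment extends uniquely through chains and gadgets, and restricting to the original variables recovers a satisfying assignment of $\Phi$. Each crossing contributes $O(1)$ occurrences and each original occurrence $O(1)$, for $O(m+m^2)=O(m^2)$ in total.

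\textbf{Remaining issues.} We must ensure the clauses still have at most three literals. The gadget's clauses can be split by the standard $3$-CNF splitting with auxiliary variables, and a chain of auxiliary variables is planar, so planarity and the constant-factor size bound are kept. We must also ensure the high-degree variable vertices remain planar. This holds by construction, since every occurrence leaves its variable vertex on its own vertical segment in left-to-right order, so no crossings arise at the variable itself.

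The main obstacle is verifying the crossover gadget itself: it must be planar with its four terminals in the correct cyclic order, and it must enforce exactly $a'=a$ and $b'=b$ for every choice of $a,b$. We will take this from Lichtenstein's paper, citing his explicit XOR-based gadget, rather than rederive it. Everything else is polynomial-time bookkeeping.
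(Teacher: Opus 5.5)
Your proposal is correct and takes essentially the same approach as the paper: draw the incidence graph with $O(m^2)$ crossings, place one Lichtenstein crossover at each crossing, and carry values along the remaining edge segments with equality clauses. The main difference is the drawing: you use Lichtenstein's orthogonal comb layout, while the paper uses a perturbed straight-line drawing, and both give at most about $m^2$ crossings. Two details need to be stated: each clause vertex must sit between its leftmost and rightmost legs (e.g.\ at the column of its middle occurrence) so that its horizontal segments do not overlap, and the gadget must replace, not supplement, the equivalence clauses between chain variables on opposite sides of a crossing, since keeping them would reintroduce that crossing.
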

\begin{proof}
Remove tautological clauses, repeated literals within clauses,
and unused variables. An empty clause or the absence of clauses
allows us to output a fixed unsatisfiable or satisfiable planar
formula, respectively. Otherwise, $m\geq1$, and the incidence
graph has $O(m)$ vertices and at most $m$ edges. Draw it in general
position with $c=O(m^2)$ crossings, for example by a
straight-line drawing with a perturbation to separate crossings.

The crossover construction of Lichtenstein has a constant-size
$3$-CNF formula with four distinguished variables
$a,b,a',b'$ appearing in this cyclic order on the outer face
of its incidence graph. An assignment to these variables
extends to a satisfying assignment precisely when
$a=a'$ and $b=b'$. Thus it transmits two independent Boolean
values across a crossing, including both choices for each value.

Place one copy of this construction at each crossing. Along
each remaining segment of an incidence edge, transmit the
value between its endpoint variables $u,v$ using the two clauses
$(\neg u\lor v)$ and $(u\lor\neg v)$, which enforce $u=v$
and can be drawn inside a narrow strip around the segment.
At the clause end of each original incidence edge, use a fresh
copy of its variable and retain the sign of the original literal.
At the variable end, use the original variable. The cyclic order
of the crossover terminals allows all strips to be attached
without crossings, giving an explicit plane embedding.

Every satisfying assignment of the original formula extends
by copying variable values along these connections and then
satisfying the crossover formulas. Conversely, the equality
clauses and crossover formulas force each copied literal to
have the value of its original variable. Restricting any
satisfying assignment therefore satisfies the original formula.
There are $c$ crossover formulas and at most $m+2c$ edge segments,
each contributing constantly many occurrences, in addition
to at most $m$ original clause occurrences. The total is
$O(m+c)=O(m^2)$, and the construction is deterministic
and polynomial in the input size.
\end{proof}

Apply \Cref{thm:planarization} to the formula $\Phi$ constructed
in the preceding subsection, and denote the resulting planar
formula by $\Phi_{\mathrm p}$. If $\Phi$ has $m$ literal
occurrences, then $\Phi_{\mathrm p}$ has $O(m^2)$ occurrences.
For a nontrivial input formula, this is $O(m^2)$.

Removing tautological clauses, repeated literals within clauses,
and unused variables preserves satisfiability and planarity
without increasing the size. An empty clause or the absence
of clauses allows the answer to be determined directly.
Otherwise, every clause contains one to three literals on
distinct variables.

\subsection{Reduction Back to PDFVS}
We now turn the planar formula into a planar digraph. One
pair of vertices represents each literal occurrence, and
variable triangles force a consistent choice across all
occurrences of the same variable. Clause triangles require
these choices to satisfy every clause. The budget leaves
room for only one deletion per occurrence; this will force
a satisfying assignment.

\begin{lemma}\label{lem:return}
Let $\Psi$ be a $3$-CNF formula with at least one clause, with
each clause containing one to three literals on distinct
variables. Given a plane embedding of its incidence graph,
$\Psi$ can be converted in polynomial time into an equivalent
PDFVS instance with at most $6M$ vertices, at most $9M$ arcs,
and parameter $k'=M$, where $M$ is the number of literal
occurrences in $\Psi$.
\end{lemma}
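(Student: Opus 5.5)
The plan is a direct gadget reduction: literal occurrences become vertex pairs, consistency between occurrences and clause satisfaction become short directed cycles, and the budget $k'=M$ is tight.

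\textbf{Occurrence pairs.} For each literal occurrence $o$ of a variable $x$ in a clause $C$, create two vertices $p_o$ and $n_o$ joined by opposite arcs $p_o\to n_o$ and $n_o\to p_o$. This $2$-cycle forces every solution to delete a vertex of $\{p_o,n_o\}$. These pairs are pairwise disjoint and there are $M$ of them, so a solution of size at most $k'=M$ deletes \emph{exactly} one vertex from each pair and nothing else. Read deletion of $n_o$ as ``$x$ true at $o$'' and deletion of $p_o$ as ``$x$ false at $o$''. For each occurrence, call the vertex that survives exactly when the literal at $o$ is false the \emph{false-vertex} $\varphi_o$. Thus $\varphi_o=n_o$ for a positive literal and $\varphi_o=p_o$ for a negated one.

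\textbf{Variable and clause gadgets.} Let $o_1,\dots,o_d$ be the occurrences of $x$ in the cyclic order around the variable node in the given embedding. For each consecutive pair $o_i,o_{i+1}$, indices modulo $d$ and nothing to add when $d=1$, add a directed triangle $p_{o_i}\to n_{o_{i+1}}\to h_i\to p_{o_i}$ with a fresh vertex $h_i$. Since only pair vertices may be deleted within the budget, the triangle forbids keeping both $p_{o_i}$ and $n_{o_{i+1}}$. Equivalently, ``true at $o_i$'' implies ``true at $o_{i+1}$''. Going once around the cycle of occurrences forces all occurrences of $x$ to agree.

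For each clause $C$ with occurrences $o,o',o''$, add a directed triangle $\varphi_o\to\varphi_{o'}\to\varphi_{o''}\to\varphi_o$. For a two-literal clause use a $2$-cycle, and for a unit clause use a directed triangle through $\varphi_o$ and two fresh vertices. This cycle survives exactly when every literal of $C$ is false.

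These gadgets are placed along the embedded incidence graph: each variable node is replaced by its ring of occurrence pairs, and each clause node by its small cycle. This keeps the digraph planar, provided each pair $p_o,n_o$ is drawn so that the ring triangles lie on one side and the clause arcs on the other. I would fix this orientation carefully in the drawing.

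\textbf{Counting and soundness.} Each occurrence contributes two pair vertices, at most one ring vertex $h_i$, and at most two fresh unit-clause vertices, so there are at most $5M\le 6M$ vertices. The arcs are at most $2$ per pair, $3$ per ring triangle and $3$ per clause cycle charged to its occurrences, so at most $9M$ in total. If the instance has a solution, the tightness argument gives a consistent assignment, and the surviving clause cycles show that it satisfies $\Psi$.

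\textbf{Completeness and the main obstacle.} Given a satisfying assignment, delete from each pair the vertex indicated above. I must show that the remaining digraph is acyclic, and this is the step I expect to be the main obstacle, since gadgets could combine into unintended cycles. My plan is a potential argument. After the deletions, every surviving pair vertex is either a kept $p_o$ with $x$ true or a kept $n_o$ with $x$ false. Within a variable ring, the triangle arcs are then either broken or lead from a surviving vertex only into a deleted one, so no surviving path stays inside a ring for more than one step. Clause arcs join only false-vertices, and the surviving false-vertices of one clause form a proper subset of its cycle, which is a path. I would then assign levels: surviving true-side vertices and ring helpers low, surviving false-vertices high, ordered within a clause along its path. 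The aim is to check that every surviving arc goes strictly upward or stays within an acyclic clause path. If some arc, such as an $h_i$ arc, breaks this order, I would reorient or subdivide that gadget arc, staying within the $6M$ and $9M$ budgets.
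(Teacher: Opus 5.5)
Your construction is sound, and it is essentially the paper's: occurrence pairs on a variable ring of triangles with private vertices, clause cycles through the occurrence vertices that are deleted when the literal is true, and budget $M$. Using a $2$-cycle on each pair is a genuine simplification. It gives ``exactly one vertex per pair and nothing else'' at once, whereas the paper must first exchange private vertices $p_e$ out of $X$. Your soundness direction and your size counts are fine.

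The gap is in the completeness direction. You leave it as a plan, and the plan fails as stated. Take $x$ false and $o_{i+1}$ a positive occurrence. The surviving arc $n_{o_{i+1}}\to h_i$ then goes from a surviving false-vertex (``high'') to a helper (``low''), so your levels are not a topological order. Your fallback of reorienting or subdividing gadget arcs is the wrong remedy. It would change the construction after you have already argued soundness and planarity for it, and it is not needed.

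The missing observation is local. Each helper $h_i$ has exactly one in-neighbour $n_{o_{i+1}}$ and one out-neighbour $p_{o_i}$. Under a consistent assignment one of them is deleted: all $n_o$ when $x$ is true, all $p_o$ when $x$ is false. So $h_i$ is a source or sink of the remaining digraph and lies on no cycle. Likewise, the arc $p_{o_i}\to n_{o_{i+1}}$ and both arcs of every pair $2$-cycle lose an endpoint. Any surviving cycle therefore uses clause arcs only.

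Each $\varphi_o$ belongs to exactly one clause, and the fresh unit-clause vertices are private, so the clause cycles are pairwise vertex-disjoint. A surviving cycle would thus be an entire clause cycle, which is impossible because the $\varphi_o$ of a true literal is deleted. This is exactly the paper's argument (private vertices lose an endpoint, then vertex-disjoint clause triangles), and it needs no change to your gadget.

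Your planarity argument is also only sketched. It goes through as in the paper, because both $p_o$ and $n_o$ lie on the ring cycle. Draw the helpers and the pair $2$-cycles toward the interior of the variable disk. Then route the two clause arcs at $\varphi_o$ along opposite sides of the corresponding incidence corridor.
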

\begin{proof}
\emph{Construction.}
Let $B$ be the incidence graph of the input formula, and fix
its plane embedding. For each variable $x$, let $d_x$ be its
number of occurrences in the formula. Following the order of
edges around the variable vertex $x$ in $B$, create a pair of
vertices $t_i,f_i$ for each occurrence, where $1\leq i\leq d_x$.
These indices are local to $x$: different variables use disjoint
vertex sets. Variables with no occurrences can be discarded.
Deleting $t_i$ represents assigning true to the variable at
this occurrence, and deleting $f_i$ represents assigning false.

Join these vertices into an undirected cycle in the order
\[
t_1,f_1,t_2,f_2,\ldots,t_{d_x},f_{d_x}.
\]
For $d_x\geq2$, interpret the indices cyclically, so that
$t_{d_x+1}=t_1$ and $f_{d_x+1}=f_1$.
If $d_x=1$, create only the edge $t_1f_1$. These edges will
ensure that all occurrences of the same variable receive
consistent values. Replace each undirected edge $e=uv$ by
the directed triangle
\[
u\to v\to p_e\to u,
\]
where $p_e$ is a new vertex added only for this edge and used
in no other part of the construction.

\begin{figure}[ht]
\centering
\begin{tikzpicture}
\node[keep] (t1) at (0,1) {$t_1$};\node[v] (f1) at (1.5,1) {$f_1$};
\node[keep] (t2) at (2.2,-.1) {$t_2$};\node[v] (f2) at (1.5,-1.2) {$f_2$};
\node[keep] (t3) at (0,-1.2) {$t_3$};\node[v] (f3) at (-.7,-.1) {$f_3$};
\draw[thick] (t1)--(f1)--(t2)--(f2)--(t3)--(f3)--(t1);
\node[align=center] at (.75,-1.9) {one pair per occurrence\\choose all $t_i$ or all $f_i$};
\node[v] (u) at (5,1) {$u$};\node[v] (v) at (7,1) {$v$};\node[v] (p) at (6,-.7) {$p_e$};
\draw[arr] (u)--(v);\draw[arr] (v)--(p);\draw[arr] (p)--(u);
\node[align=center] at (6,-1.9) {replace every cycle edge $uv$\\by $u\to v\to p_e\to u$};
\end{tikzpicture}
\caption{The left cycle specifies the pairs that must be covered. The right triangle implements one such edge by a directed cycle with a private vertex. Literal occurrences connect to their designated $t_i$ or $f_i$.}
\label{fig:variable}
\end{figure}

Next, for each clause, choose one vertex from the pair
corresponding to each of its literals. If the literal is the
$i$th occurrence of $x$, choose $t_i$ for the positive literal
$x$ and $f_i$ for the negative literal $\neg x$. Join the chosen
vertices into a directed triangle, following the cyclic order
of its incidence edges. For a clause with one or
two literals, add two or one new vertices, respectively, used
only for that clause, to complete the triangle. Different
occurrences use different vertices, so the clause triangles
are vertex-disjoint.

Figure~\ref{fig:clause-routing} shows the connection for a clause
$C=(x\lor\neg y\lor z)$. Write $\ell_x=t_i$, $\ell_y=f_j$,
and $\ell_z=t_h$ for its designated occurrence vertices in the
three variable gadgets. The clause contributes the three arcs
$\ell_x\to\ell_y\to\ell_z\to\ell_x$.

\begin{figure}[htbp]
\centering
\begin{tikzpicture}[font=\small]
\node at (4.5,4.05) {$C=(x\lor\neg y\lor z)$};
\begin{scope}[shift={(1.5,0)}]
  \node[v] (cx) at (0,1.65) {$x$};
  \node[v] (cy) at (-1.3,-.85) {$y$};
  \node[v] (cz) at (1.3,-.85) {$z$};
  \node[draw,rectangle,fill=gray!10,inner sep=4pt] (cc) at (0,0) {$C$};
  \draw[thick,gray!75] (cc)--(cx);
  \draw[thick,gray!75] (cc)--(cy);
  \draw[thick,gray!75] (cc)--(cz);
  \node[align=center] at (0,3.50) {(a) Incidence graph};
  \node[align=center] at (0,-2.25) {one incidence edge\\per literal occurrence};
\end{scope}
\begin{scope}[shift={(7.1,0)}]
  % Gray corridors follow the original incidence edges.
  \draw[gray!18,line width=16pt,line cap=round] (0,0)--(0,2.13);
  \draw[gray!18,line width=16pt,line cap=round] (0,0)--(-2,-1.2);
  \draw[gray!18,line width=16pt,line cap=round] (0,0)--(2,-1.2);
  \draw[draw=blue!40,fill=blue!3] (0,2.58) circle (.50);
  \draw[draw=blue!40,fill=blue!3] (-2.4,-1.47) circle (.50);
  \draw[draw=blue!40,fill=blue!3] (2.4,-1.47) circle (.50);
  \node at (0,2.72) {$x$};
  \node at (-2.48,-1.63) {$y$};
  \node at (2.48,-1.63) {$z$};
  \draw[gray!65,dashed,fill=white] (0,0) circle (.62);
  \node[font=\scriptsize,align=center,text=gray!80] at (0,.05)
    {clause\\region};
  \node[keep,minimum size=6mm] (lx) at (0,2.13) {$\ell_x$};
  \node[keep,minimum size=6mm] (ly) at (-2,-1.2) {$\ell_y$};
  \node[keep,minimum size=6mm] (lz) at (2,-1.2) {$\ell_z$};
  \draw[arr,blue!70!black] (lx)
    .. controls (-.17,1.65) and (-.15,.84) .. (-.17,.49)
    .. controls (-.19,.24) and (-.40,-.03) .. (-.65,-.20)
    -- (ly);
  \draw[arr,blue!70!black] (ly)
    -- (-.43,-.46)
    .. controls (-.18,-.61) and (.18,-.61) .. (.43,-.46)
    -- (lz);
  \draw[arr,blue!70!black] (lz)
    -- (.65,-.20)
    .. controls (.40,-.03) and (.19,.24) .. (.17,.49)
    .. controls (.15,.84) and (.17,1.65) .. (lx);
  \node[align=center] at (0,3.50) {(b) Routed clause triangle};
  \node[align=center] at (0,-2.25)
    {$\ell_x=t_i,\quad\ell_y=f_j,\quad\ell_z=t_h$};
\end{scope}
\end{tikzpicture}
\caption{Replacing a three-literal clause by a directed triangle.
The blue arcs use opposite sides of the gray incidence corridors
and join without crossing inside the dashed clause region.
Each blue curve is one arc, with no subdivision vertices; the
clause region contains no graph vertex. The variable gadgets
are omitted except for their designated occurrence vertices.
Deleting $\ell_x$, $\ell_y$, or $\ell_z$ represents satisfying
the corresponding literal and breaks this clause cycle.}
\label{fig:clause-routing}
\end{figure}

The formula has $M=\sum_x d_x$ literal occurrences, so the
construction has $M$ pairs $t_i,f_i$. Set the output parameter
to $k'=M$.

\emph{From a satisfying assignment to a feedback vertex set.}
Delete all $t_i$ for each true variable and all $f_i$ for
each false variable, deleting $M$ vertices in total. Every
edge $e=uv$ of a variable cycle, including the single edge
when $d_x=1$, has a deleted endpoint. Consequently, its direct
arc $u\to v$ disappears. Its private vertex $p_e$ has only
the incoming arc $v\to p_e$ and outgoing arc $p_e\to u$;
at least one of them disappears, so $p_e$ cannot lie on any
remaining directed cycle.

It follows that a remaining directed cycle could use only
clause arcs. Different clause triangles are vertex-disjoint,
so such a cycle would have to lie entirely in one clause
triangle. But every clause has a true literal, and its
designated vertex has been deleted. Thus every clause
triangle is broken, and no directed cycle remains. This also
rules out cycles that might otherwise pass through several
parts of the construction.

\emph{From a feedback vertex set to a satisfying assignment.}
Let $X$ be a feedback vertex set of size at most $M$.
First eliminate private vertices of variable triangles from
$X$. For a triangle $u\to v\to p_e\to u$ with $p_e\in X$,
replace $p_e$ by $u$, obtaining
$(X\setminus\{p_e\})\cup\{u\}$. Every directed cycle
containing $p_e$ also contains $u$, because $u$ is the only
out-neighbor of $p_e$. The replacement therefore preserves
the feedback property and does not increase the size. Repeat
until no private variable vertex belongs to $X$.

For each occurrence, the triangle corresponding to the edge
$t_if_i$ must still be hit. Its private vertex is not in $X$,
so $X$ contains at least one of $t_i,f_i$. There are $M$
pairwise disjoint occurrence pairs and $|X|\leq M$. Hence
$|X|=M$, exactly one vertex of each pair belongs to $X$, and
no private clause vertex belongs to $X$.

We next show consistency for each variable. If $d_x\geq2$
and some $t_i$ belongs to $X$, then $f_i$ does not. The
triangle corresponding to the edge $f_it_{i+1}$ must be hit;
neither $f_i$ nor the private vertex of this triangle belongs to $X$, so
$t_{i+1}\in X$. Iterating cyclically shows that all $t_i$
belong to $X$. If no $t_i$ belongs to $X$, then every $f_i$
does. For $d_x=1$, the same conclusion follows directly from
the one-per-pair condition. Assign $x$ true in the first case
and false in the second. Every clause triangle must be hit
at a designated literal vertex, because none of its private
vertices is in $X$. By the choice of $t_i$ for a positive
literal and $f_i$ for a negative literal, that literal is
true under the assignment. Thus every clause is satisfied.

\emph{Planarity.}
We realize the construction in the plane embedding of $B$.
 Choose disjoint small disks around its variable and
clause vertices, and disjoint narrow corridors along its edges
outside these disks. Inside each variable disk, place the
occurrence pairs in the cyclic order of the incidence edges.
Draw the variable cycle and its private triangles toward the
interior, leaving each designated literal vertex accessible
from its incidence corridor.

For a three-literal clause, route its directed triangle as in
Figure~\ref{fig:clause-routing}: the two arcs incident with
each literal vertex use opposite sides of the corresponding
corridor. Inside the clause disk, connect consecutive corridors
in their cyclic order. These connections do not cross, and
they introduce no additional vertices. For a one- or two-literal
clause, place its private vertices inside the clause disk and
complete the triangle there in the same way. Distinct clauses
use distinct disks and incidence corridors. Thus all triangles
can be drawn simultaneously without crossings, proving planarity.

\emph{Size and running time.}
The variable part starts with $2M$ vertices and at most $2M$
edges to replace. Replacing each edge adds one vertex and
produces three arcs, so this part has at most $4M$ vertices
and $6M$ arcs. There are at most $M$ clauses, each adding at
most two vertices and three arcs. The total is at most $6M$
vertices and $9M$ arcs, and the entire construction takes
polynomial time.
\end{proof}

\subsection{Proof of the Main Theorem}
It remains to combine the three reductions and express their
size bounds in terms of the original parameter. In this
subsection, $k$ denotes the original input budget and $k_1$
the remaining budget after the primal reductions, as in the
overview of the algorithm.

\begin{proof}[Proof of \Cref{thm:main}]
If the primal reductions determine the answer, it returns YES or NO directly. Otherwise, $2\leq k_1\leq k$.
The primal reductions, dual conversion, and distance-record
merging yield an equivalent GSCA instance with budget $k_1$.
The safety of the primal reductions follows from
Section~\ref{sec:primal}, while the remaining equivalences follow
from \Cref{lem:dual,lem:segments,lem:safety}

The instance consists of vertices, base arcs, ordinary arc
groups, and special arc groups. Let $W_i$ be the retained set
in the $i$th weakly connected component, put $s_i=|W_i|$,
and let $N_i$ be its number of vertices after merging.
Before merging, each base component is acyclic and has a source,
so $s_i\geq1$.
By \Cref{lem:retained}, $\sum_i s_i=O(k^3)$. Together with
\eqref{eq:quotientorder}, this gives
\begin{equation}\label{eq:totalsize}
\sum_iN_i
=O\left(k^4\sum_i s_i^4\right)
\leq O\left(k^4\left(\sum_i s_i\right)^4\right)
=O(k^{16}).
\end{equation}
Base arcs and ordinary arc groups are counted by ordered
endpoint pairs, so each has total number
$O(\sum_iN_i^2)=O(k^{32})$. By \Cref{lem:retained}, the
special arc groups have $O(k^3)$ arc entries in total.
Thus the instance has $O(k^{16})$ vertices, $O(k^{32})$ arc
groups, and $O(k^{32})$ base arcs and group-arc entries in total.

An instance with fewer than two vertices is already decided.
Otherwise, choose an integer $N\geq2$ that bounds the number
of vertices and whose square bounds both the number of groups
and the total number of base arcs and group-arc entries. The
preceding estimates allow $N=O(k^{16})$; here $N$ is a common
size bound, rather than necessarily the actual vertex count.
Applying \Cref{lem:cnf} with budget $k_1$ gives a $3$-CNF
formula with
\[
m=O(k_1N^2\log N)=O(k^{33}\log k)
\]
literal occurrences. \Cref{thm:planarization} converts it
into an equisatisfiable planar $3$-CNF formula whose number
of literal occurrences is
\[
M=O(m^2+1)=O(k^{66}\log^2 k).
\]
After the formula preprocessing described above, a directly
decided formula yields a fixed YES- or NO-instance. Otherwise,
\Cref{lem:return} gives an equivalent PDFVS instance
with at most $6M$ vertices, $9M$ arcs, and parameter $k'=M$.

Every step preserves the answer and takes deterministic
polynomial time. The output vertex count, arc count, and
parameter are therefore all $O(k^{66}\log^2 k)$, giving
the claimed polynomial kernel.
\end{proof}

\begin{samepage}
\section{Conclusion}
We obtain a polynomial kernel for planar directed feedback
vertex set by combining structural reductions in the primal
graph with compression of an equivalent dual augmentation
instance. The main technical contribution is the compression
of GSCA: distance records bound the number of vertex
classes, and the cut-tree argument proves that identifying
each class preserves feasibility.

The compressed GSCA instance has $O(k^{16})$ vertices and
$O(k^{32})$ base arcs and group-arc entries in total, giving
a polynomial compression of $O(k^{32}\log k)$ bits under an
explicit binary encoding. The $O(k^{16})$ bound therefore
concerns the vertex count, and the intermediate instance is
not yet a kernel for PDFVS because its target problem is GSCA.
Since GSCA belongs to $\mathrm{NP}$ and PDFVS is NP-complete,
standard reductions already imply the existence of a
polynomial-size return to PDFVS. Our explicit construction
through $3$-CNF, formula planarization, and directed triangles
gives the stated $O(k^{66}\log^2 k)$ bound. We have not
attempted to optimize this final conversion: our focus is on
establishing the existence of a polynomial kernel. A more
size-efficient return reduction could improve the bound
without changing the structural compression.

Another direction is to investigate whether this compression
can help with terminal-separation problems on planar digraphs.
A closely related question was posed by Wlodarczyk at
Dagstuhl Seminar 24411~\cite[Section~4.8]{Dagstuhl24411}.
The \textsc{Skew Multicut} variant considered there asks for
at most $k$ nonterminal vertices whose deletion destroys every
path from an earlier terminal to a later one in a prescribed
order. The question asks for a polynomial kernel on planar
directed acyclic graphs, parameterized by $k$ plus the number
of terminals, and further asks whether the compression can be
performed before the terminal order is supplied. Such a
compression, valid for every subsequent order, was proposed
as a route towards a polynomial kernel for planar DFVS.

Our result settles the planar DFVS question through a different
approach; it does not establish this stronger preservation of
terminal-order constraints or directly yield kernels for
\textsc{Directed Multiway Cut} or \textsc{Directed Multicut}
on planar digraphs. One possible route is to adapt the distance
records and cut-tree argument to preserve terminal-separation
requirements, beyond the augmentation feasibility needed here.
Alternatively, a polynomial-time reduction from a cut problem
to planar DFVS, with deletion budget polynomial in the chosen
cut parameter, would allow our kernel to provide a polynomial
compression. A kernel for the original cut problem would then
require a suitable reduction back. Establishing such reductions
or extending the preservation argument remains a separate task.

\nopagebreak[4]
More broadly, whether DFVS admits a polynomial kernel
parameterized by the deletion budget $k$ on general digraphs
remains a longstanding open
problem~\cite{FominEtAl2019,DirksEtAl2025}.\par
\end{samepage}

\end{document}